\newcommand{\dd}{\mathrm{d}}
\newcommand{\D}{\mathrm{D}}
\newcommand{\e}{\mathrm{e}}
\newcommand{\x}{\mathbf{x}}
\newcommand{\ii}{\mathrm{i}}
\newcommand{\kk}{\mathbf{k}}
\newcommand{\upi}{\uppi}
\newcommand{\sgn}{\mathrm{sgn}}
\newcommand{\oo}{\mathrm{o}}
\newcommand{\OO}{\mathrm{O}}
\newcommand{\CBF}{\mathfrak{Q}}
\newcommand{\LICM}{\mathfrak{L}}
\newcommand{\const}{\mathrm{const}}
\newtheorem{theorem}{Theorem}[section]
\newtheorem{lemma}[theorem]{Lemma}
\newtheorem{corollary}[theorem]{Corollary}
\newtheorem{proposition}[theorem]{Proposition}
\newtheorem{definition}[theorem]{Definition}
\newtheorem{remark}{Remark}
\journal{Wave Motion}
\begin{document}

\begin{frontmatter}

\title{Wave propagation in linear viscoelastic media with completely monotonic relaxation moduli}

\author{Andrzej Hanyga}

\address{ul. Bitwy Warszawskiej 1920 r. 14/52\\
02-366 Warszawa, Poland}

\begin{abstract}
It is shown that viscoelastic wave dispersion and attenuation in a viscoelastic medium with a
completely monotonic relaxation modulus is completely characterized by the phase speed and 
the dispersion-attenuation spectral measure. The dispersion and attenuation functions are 
expressed in terms of a single dispersion-attenuation spectral measure. An alternative expression of 
the mutual dependence of the dispersion and attenuation functions, known as the 
Kramers-Kronig dispersion relation, is also derived  
from the theory. The minimum phase aspect of the filters involved in the Green's function is another 
consequence of the theory. Explicit integral expressions for 
the attenuation and dispersion functions are obtained for a few analytical relaxation models.
\end{abstract}

\begin{keyword}
viscoelasticity \sep wave propagation \sep completely monotonic \sep Bernstein function \sep Cole-Cole model
\sep Havriliak-Negami model \sep Cole-Davidson model
\MSC[2010] 74D05 \sep 74J05 \sep 42A99
\end{keyword}

\end{frontmatter}

\section*{Notation}
{\footnotesize
\begin{tabular}{l l l}
$ \overline{z}$ & complex conjugate of $z$&\\
$ ]a,b]$ &  & $\{ x \mid a < x \leq b \}$\\
$\{a\}$ & one point set & \\
$  t_+^{\;\alpha}$ & & $ t_+^{\;\alpha} = t^\alpha$ for $t>0$ and 0 otherwise\\
$ \chi_{[a,b]}$ & characteristic function of the segment $[a,b]$ & \\
$ f(x) \sim_a g(x)$  & asymptotic equivalence & $\lim_{x\rightarrow a}\,[f(x)/g(x)] = 1$\\
$ f(x) = \OO_a[g(x)]$  &    & $0 < \lim_{x \rightarrow a}\,[f(x)/g(x)] < \infty$\\
$ f(x) = \oo_a[g(x)]$  &   & $\lim_{x \rightarrow a}\,[f(x)/g(x)] = 0$\\
$ \D^n\, f(x)$ & derivative & $\D^n\, f(x) = \dd^n\, f(x)/\dd x^n$\\
$ f^\prime(x)$ &   & $f^\prime(x) = \D f(x)$ \\
$u_{,t} := \partial u/\partial t$ & & \\
$ \mathbb{R}, \mathbb{R}_+, \mathbb{C}$ & real, positive real, complex numbers & $\mathbb{R}_+ := \{ x \in \mathbb{R} \mid x > 0\}$\\
$ \mathbb{C}_+$ & right half complex plane & $\mathbb{C}_+ = \{ z \in \mathbb{C} \mid \vert \arg z\vert < \upi/2\}$\\
$ \mathbb{C}^\pm$ & upper/lower half complex plane & $\mathbb{C}^\pm := \{ z \in \mathbb{C} \mid \pm \Im z > 0\}$\\
$ L f = \tilde{f}$ & Laplace transform of $f$ & $\tilde{f}(y) = \int_0^\infty f(x) \, \exp(- y x)\, \dd x$\\
$L \mu$ & Laplace transform of a Radon measure & $(L \mu)(x) := \int_{[0,\infty[} \e^{- x y} \, \mu(\dd y)$.
\end{tabular}
}
\section{Introduction.}

The assumption that the relaxation modulus of a viscoelastic medium is a completely monotonic function 
has many implications for dispersion, attenuation and the Kramers-Kronig (K-K) dispersion relations. These 
aspects are very important for acoustics, in particular for the ultrasound applications. Ultrasound 
attenuation and mechanical tests are used as complementary methods of investigating molecular 
relaxation in polymers and soft matter \cite{AligAl88}. Interconversion of ultrasonic data and mechanical test data is 
important for improving the reliability of the results. In materials with dielectric properties molecular 
relaxation can be investigated by studying dielectric loss. Dielectric loss in polar dielectrics exhibits 
similar features (alpha and beta peaks) to the mechanical loss modulus while the dielectric permittivity,
representing gradual polarization of the material in an electric field, has all the properties of mechanical creep
\cite{HanSerDielectrics07}.  

In seismological applications 
the minimum-phase aspect of viscoelastic wave propagation is relevant for deconvolution. 

The assumption that the relaxation modulus is a completely monotonic function has its roots in the interpolation of 
experimental data in terms of Prony sums with positive coefficients \cite{Bland:VE,ParkSchapery99}. By Bernstein's theorem 
complete monotonicity of the relaxation modulus is equivalent to the statement that the relaxation spectral measure is 
non-negative. A Prony sum with positive coefficients is a completely monotonic function with a spectral measure supported by a finite set of points. Several attempts have been made to justify complete monotonicity of the relaxation modulus by an 
assumption about dependence of stress on loops in the past history \cite{Day70a}, by an assumed relation between time 
dependence of strain and stress 
\cite{BerisEdwards93} or by other arguments \cite{AnderssenLoy02b}. The theory of viscoelasticity based on the assumption of 
complete monotonicity of the relaxation modulus is very attractive because it leads to a fairly complete characterization
of the creep compliance, complex modulus, attenuation and dispersion as well as the possible anisotropic
properties of the medium \cite{HanDuality}. A weak point of the theory is an instability of the property of complete monotonicity:
in an arbitrary neighborhood of a completely monotonic function in the space of bounded continuous functions
there are functions which are not completely monotonic. The assumption of complete monotonicity has therefore 
to be considered as an a priori restriction on the data and on the interpolating functions. Parameter estimation 
based on experimental data has to be performed in the class of completely monotonic functions. This is however implicit 
in routine modeling of experimental data for stress relaxation in terms of Prony sums for general viscoelastic media \cite{ParkSchapery99} or in terms of Cole-Cole \cite{ColeCole,BagleyTorvik3}, Havriliak-Negami 
\cite{HavriliakHavriliak,Boyd85,AligAl88} and Kohlrausch-Williams-Watts \cite{AlvarezAl93} relaxation moduli for more specific classes of viscoelastic materials. The same relaxation functions are used in modeling dielectric relaxation in the same
materials if they exhibit dielectric properties \cite{HanSerDielectrics07,CapelasAl2011}.

Completely monotonic relaxation moduli and the associated complex moduli 
are determined by a relaxation spectral measure, which represents the weight of participating Debye relaxation mechanisms 
(Sec.~\ref{sec:constitutive}). It is shown below that the complex wave number function, the attenuation function and the 
dispersion function of a medium with a completely monotonic relaxation modulus are parameterized by the dispersion-attenuation 
spectral measure (Sec.~\ref{sec:CWF}). 

The dispersion function and the attenuation function satisfy a dispersion relation in parametric form because both functions are 
expressed in terms of a single spectral measure. This dispersion relation implies the K-K dispersion relations
with two subtractions (Sec.~\ref{sec:KK}). The K-K dispersion relations indicate that the complex wave number 
function is the Laplace transform 
of a causal distribution. This distribution turns out to be a second-order distributional derivative
of a causal function defined in terms of the Laplace transform of the dispersion-attenuation spectral measure. 

The K-K dispersion relations are closely related to the minimum phase property of viscoelastic Green's functions 
(Sec.~\ref{sec:MPH}). This property is relevant for deconvolution of seismic signals.

The dispersion-attenuation spectral measure can often be explicitly calculated for a given analytic complex modulus by analytic continuation
to the entire complex plane cut along the negative real semi-axis and calculating the jump of the analytic continuation 
at the branch cut. This procedure is demonstrated for the Cole-Cole, Havriliak-Negami and Cole-Davidson  relaxation models (Sec.~\ref{sec:examples}). The dispersion-attenuation spectral 
measure provides an efficient tool for numerical determination of the dispersion and attenuation functions.

\section{Mathematical preliminaries.}

\begin{definition}
A function $f$ defined on the open positive real semi-axis $\mathbb{R}_+$ is said to be 
{\em completely monotonic (CM)} if it has derivatives of arbitrary high order and 
$$(-1)^n \D^n\, f(x) \geq 0 \qquad\text{for $x > 0$ and $n = 0,1,2, \ldots $}$$
\end{definition}
A CM function can have a singularity at 0. A bounded CM function $f$ has a limit at 0 
and its domain of definition can therefore be extended by continuity to the closed
positive real semi-axis. A CM function is locally integrable if and only if it is
integrable over $[0,1]$. 

\begin{theorem}\label{thm:Bernstein} (Bernstein's Theorem)
Every CM function $f$ has an integral representation 
\begin{equation}\label{eq:realBernstein}
f(x) = \int_{[0,\infty[} \e^{-r x} \, \mu(\dd r), \qquad t > 0
\end{equation} 
where $\mu$ is a positive Radon measure (that is, a locally finite measure), such that
$$\int_{[0,\infty[} \e^{-r \varepsilon} \mu(\dd r) < \infty $$
for some $\varepsilon > 0$.

The Radon measure $\mu$ is uniquely defined by $f$. 
\end{theorem}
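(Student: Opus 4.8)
The statement splits into existence of the representing measure and its uniqueness. I would establish uniqueness first, because the existence argument will invoke it for a consistency check. For uniqueness, if two positive Radon measures have the same Laplace transform on an interval, then the two transforms agree on all of $\mathbb{R}_+$ by analyticity; multiplying through by a fixed exponential $\e^{-r\varepsilon}$ reduces matters to two \emph{finite} measures with identical Laplace transforms, and after the change of variable $s=\e^{-r}$ these become two finite measures on $[0,1]$ with identical moments. The density of polynomials in the algebra of continuous functions on the compact space $[0,+\infty]$ (Stone--Weierstrass) then forces the two measures to coincide. This is the easy half; the substance is the construction of $\mu$.

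\emph{Existence, bounded case.} Suppose first that $f$ is bounded, say $f\le M$, so that $f(0):=f(0+)$ exists. The plan is to pass from the differential form of complete monotonicity to its discrete counterpart and apply the classical Hausdorff moment theorem. Fixing a step $h>0$ and setting $c_n:=f(nh)$ for $n\ge 0$, the iterated-integral form of the forward difference,
\begin{equation*}
\Delta_h^k f(x) = \int_0^h\!\!\cdots\int_0^h \D^k f\bigl(x + t_1 + \cdots + t_k\bigr)\,\dd t_1\cdots\dd t_k ,
\end{equation*}
combined with $(-1)^k\,\D^k f\ge 0$, shows that $(-1)^k\Delta_h^k c_n\ge 0$ for all $k,n$, while $0\le c_n\le M$. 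Hence $(c_n)$ is a bounded completely monotone sequence, and Hausdorff's theorem provides a nonnegative measure $\beta_h$ on $[0,1]$ of total mass $c_0=f(0+)$ with $c_n=\int_{[0,1]} s^{\,n}\,\beta_h(\dd s)$. The substitution $s=\e^{-rh}$ pushes $\beta_h$ forward to a nonnegative measure $\nu_h$ on the one-point compactification $[0,+\infty]$, of the same mass, and yields $f(nh)=\int_{[0,+\infty]}\e^{-rnh}\,\nu_h(\dd r)$ for every $n\ge 0$. Thus on the grid $\{nh:n\ge 0\}$ the function $f$ already coincides with the Laplace transform of $\nu_h$.

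I would then let $h$ run through $2^{-m}$, so the grids are nested with dense union in $\mathbb{R}_+$. Since every $\nu_h$ has mass at most $M$ on the \emph{compact} space $[0,+\infty]$, weak-$*$ compactness (Helly's selection theorem) extracts a subsequence converging to a nonnegative measure $\mu$ on $[0,+\infty]$. For each fixed $x>0$ the kernel $r\mapsto\e^{-rx}$ is continuous on $[0,+\infty]$ (with value $0$ at $r=+\infty$), so I may pass to the limit along the dense set of grid points to obtain $f(x)=\int_{[0,+\infty]}\e^{-rx}\,\mu(\dd r)$; as the atom of $\mu$ at infinity contributes nothing, this is the representation \eqref{eq:realBernstein} with $\mu$ restricted to $[0,+\infty[$. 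For a general (possibly unbounded) CM $f$ I would apply the bounded case to the shifts $g_\varepsilon(x):=f(x+\varepsilon)$, which are CM and bounded by $f(\varepsilon)<\infty$, obtaining finite measures $\mu_\varepsilon$ with $f(x+\varepsilon)=\int\e^{-rx}\,\mu_\varepsilon(\dd r)$. By the uniqueness already proved, the family $\e^{r\varepsilon}\mu_\varepsilon(\dd r)$ is independent of $\varepsilon$ and defines a single Radon measure $\mu$ with $f=L\mu$; local finiteness follows from $\mu([0,R])\le\e^{R\varepsilon}f(\varepsilon)$, and evaluating at a fixed $\varepsilon$ gives $\int_{[0,\infty[}\e^{-r\varepsilon}\,\mu(\dd r)=f(\varepsilon)<\infty$, the required integrability.

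The step I expect to be the main obstacle is the control of the point at infinity when passing to the limit, since the kernel $\e^{-rx}$ is not compactly supported and vague convergence alone does not license the interchange of limit and integral. Working on the compactification $[0,+\infty]$ is precisely what tames this: it converts $\e^{-rx}$ into a genuinely continuous function on a compact space, so weak-$*$ convergence suffices and no separate tightness estimate at $r=\infty$ is needed. The remaining care is bookkeeping — verifying that the atom at infinity drops out of \eqref{eq:realBernstein} and that the shift construction is consistent across different $\varepsilon$ — both of which rest on the uniqueness statement, which is why I would prove it at the outset and use it throughout.
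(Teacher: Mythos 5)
The paper itself offers no proof of this statement: Bernstein's theorem is quoted in Section~2 as a classical result (the paper proves only the companion Theorem~\ref{thm:locint} in its appendix), so there is no in-paper argument to compare yours against. Your proposal is the standard classical proof via Hausdorff's moment theorem --- essentially the route of Feller (Vol.~II, Ch.~XIII) and Widder --- and it is correct in all essentials: discretize on the grids $h=2^{-m}$, check complete monotonicity of the sequence $c_n=f(nh)$ via the iterated-integral formula for forward differences, invoke Hausdorff's theorem, push the representing measures to the compactified half-line $[0,+\infty]$, extract a weak-$\ast$ limit, and discard the atom at infinity; the unbounded case then follows from the bounded one by shifting and patching with uniqueness, and your uniqueness argument (reduction to a finite moment problem plus Stone--Weierstrass) is sound. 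Two small points deserve explicit treatment in a full write-up. First, the weak-$\ast$ limit only yields $f(x)=\int_{[0,+\infty]}\e^{-rx}\,\mu(\dd r)$ for $x$ in the dense set of dyadic grid points; to pass to all $x>0$ you must note that both sides are continuous on $]0,\infty[$ --- the right-hand side by dominated convergence, since $\mu$ is finite and the kernel is bounded by $1$. Second, for $n=0$ your difference formula evaluates $\D^k f$ arbitrarily close to $0$, where the derivatives of a bounded CM function may blow up (e.g.\ $f(x)=\e^{-\sqrt{x}}$ has $f'(0+)=-\infty$); the inequality $(-1)^k\Delta_h^k c_0\ge 0$ should instead be obtained as the limit, as $x\to 0+$, of $(-1)^k\Delta_h^k f(x)\ge 0$, using continuity of $f$ at $0+$ (or monotone convergence of the integrand). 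Neither point is a genuine gap; both are bookkeeping of the kind you flag yourself.
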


\begin{definition}
$\mathfrak{M}$ is the set of positive Radon measures $\mu$ satisfying the inequality 
\begin{equation} \label{eq:doss}
\int_{]0,\infty[} \frac{\mu(\dd r)}{1 + r} < \infty
\end{equation}
\end{definition}

\begin{theorem} \label{thm:locint}
A real-valued CM function is locally integrable if and only if the Radon 
measure $\mu$ in eq.~\eqref{eq:realBernstein} belongs to $\mathfrak{M}$.
\end{theorem}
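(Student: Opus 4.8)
The plan is to reduce the question to the finiteness of the single integral $\int_0^1 f(x)\,\dd x$, which by the remark preceding the statement is equivalent to local integrability of the CM function $f$: being non-increasing, $f$ is bounded on every interval $[\delta,\infty[$ with $\delta>0$, so the only obstruction to local integrability lives near the origin. First I would insert the Bernstein representation \eqref{eq:realBernstein} and interchange the order of integration. Since the integrand $\e^{-rx}$ is non-negative, Tonelli's theorem applies unconditionally and yields the identity
\begin{equation}
\int_0^1 f(x)\,\dd x = \int_{[0,\infty[} \left( \int_0^1 \e^{-rx}\,\dd x \right) \mu(\dd r)
   = \mu(\{0\}) + \int_{]0,\infty[} \frac{1 - \e^{-r}}{r}\, \mu(\dd r),
\end{equation}
valid as an equality in $[0,\infty]$, where the term $\mu(\{0\})$ collects the contribution of a possible atom of $\mu$ at the origin.

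The atom at the origin is harmless: it contributes a constant to $f$, which is trivially integrable over $[0,1]$, and this is precisely why the defining inequality of $\mathfrak{M}$ integrates over $]0,\infty[$ rather than $[0,\infty[$. It therefore remains to compare the weight $g(r) := (1 - \e^{-r})/r$ appearing above with the weight $h(r) := 1/(1+r)$ occurring in \eqref{eq:doss}. Both are continuous and strictly positive on $]0,\infty[$; their ratio $g(r)/h(r) = (1-\e^{-r})(1+r)/r$ has limit $1$ as $r \to 0^+$ and limit $1$ as $r \to \infty$, hence is bounded above and below by positive constants on the whole semi-axis. Consequently the measures $g(r)\,\mu(\dd r)$ and $h(r)\,\mu(\dd r)$ have finite mass simultaneously, so
$$\int_{]0,\infty[} \frac{1 - \e^{-r}}{r}\, \mu(\dd r) < \infty \quad\Longleftrightarrow\quad \int_{]0,\infty[} \frac{\mu(\dd r)}{1+r} < \infty .$$

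Combining the two displays gives $\int_0^1 f(x)\,\dd x < \infty$ if and only if $\mu \in \mathfrak{M}$, and since both directions of the asserted equivalence are read off from a single chain of equalities and comparisons, this establishes the claim. I do not anticipate a genuine obstacle here: the argument is an application of Tonelli together with an elementary comparison of weights. The only points demanding care are the bookkeeping of the atom at $r=0$, so that the domain of integration matches $\mathfrak{M}$ exactly, and the verification that the ratio $g/h$ stays bounded away from $0$ and $\infty$, for which the two one-sided limits together with continuity on the compact middle range suffice.
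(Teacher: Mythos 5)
Your proof is correct and takes essentially the same route as the paper's: both hinge on the Tonelli/Fubini identity $\int_0^1 f(x)\,\dd x = \int_{[0,\infty[}\bigl[(1-\e^{-r})/r\bigr]\,\mu(\dd r)$ followed by a comparison of the weight $(1-\e^{-r})/r$ with $1/(1+r)$. The only difference is cosmetic: the paper splits $[0,\infty[$ into $[0,1]$ and $[1,\infty[$ and runs separate one-sided estimates for each direction of the equivalence, whereas you bound the ratio of the two weights above and below by positive constants globally, which packages both directions into a single chain of equivalences and also keeps the atom at $r=0$ explicitly separated.
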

The proof is given in Appendix~\ref{app:locint}.
Locally integrable CM functions will be denoted by the abbreviation LICM and the set of LICM functions
will be denoted by $\LICM$. 

\begin{remark} \label{rem:1}
Inequality \eqref{eq:doss} is satisfied if and only if $\int_{]0,1]} \mu(\dd r) < \infty$ and
$\int_{]1,\infty[} \mu(\dd r)/r < \infty$ hold simultaneously. This statement follows immediately from
the inequalities $1/2 \leq 1/(1 + r) \leq 1$ and $1/(2 r) \leq 1/(1 + r) \leq 1/r$ holding on $[0,1]$ and
on $[1,\infty[\,$, respectively. By a similar argument \eqref{eq:doss} is equivalent to the inequality
$$\int_{]0,\infty[} \frac{\mu(\dd r)}{a + r} < \infty$$ for an arbitrary number $a > 0$. 
\end{remark}

\begin{definition}
A function $f$ defined on the closed positive real semi-axis is said to be a {\em Bernstein function}
if $f \geq 0$ and $\D f$ is completely monotonic.
\end{definition}
Since a Bernstein function $f$ is non-decreasing and non-negative, the limit $\lim_{x \rightarrow 0+} f(x)$ always 
exists and is finite. 

A superposition of Bernstein functions with positive coefficients is a Bernstein function.
The set $\mathfrak{B}$ of Bernstein functions is closed under pointwise convergence \cite{BernsteinFunctions}. Consequently an integral of 
a family of Bernstein functions with a non-negative weight is a Bernstein function.

A function $g$ is a Bernstein function if and only if it has the form
\begin{equation} \label{eq:BernsteinFunction}
g(x) = a + \int_0^x h(y) \, \dd y
\end{equation}
where $a \geq 0$ and $h$ is a LICM function. Indeed, $g^\prime = h$ is a CM function and $g \geq 0$, hence $g$ 
is a Bernstein function.
On the other hand, if $g$ is a Bernstein function then its derivative $g^\prime$ is a CM and it is 
locally integrable. Hence eq.~\eqref{eq:BernsteinFunction} holds with $h := g^\prime$, $a = g(0)$. In particular 
the primitive function $1 - \e^{-x}$ of the CM function $\e^{-x}$ is a Bernstein function. 

\begin{definition} \label{def:CBF}
A function $f: \mathbb{R}_+ \rightarrow \mathbb{R}$ is a {\em complete Bernstein function} (CBF)
if there is a Bernstein function $g$ such that $f(x) = x^2\, \tilde{g}(x)$.
\end{definition}
The set of complete Bernstein functions will be denoted by the symbol $\CBF$. 

Let $g$ be the Bernstein function \eqref{eq:BernsteinFunction}.\\
The Laplace transform of $g$ has the form $\tilde{g}(x) = a/x + \tilde{h}(x)/x$, 
where $a \geq 0$ and $h$ is a LICM function. 
By Bernstein's theorem 
$h(y) = \int_{[0,\infty[} \e^{-y r}\, \mu(\dd r)$, where $\mu \in \mathfrak{M}$. Hence
$$\tilde{h}(x) = L^2\, \mu = \int_{[0,\infty[} \frac{\mu(\dd r)}{x + r}$$
If $f$ is a CBF and $f(x) = x^2 \, \tilde{g}(x)$, then 
$$
f(x) = a\, x + x \int_{[0,\infty[} \frac{\mu(\dd r)}{x + r}, \qquad x > 0
$$
or, equivalently, 
\begin{equation} \label{eq:CBFintegral}
f(x) = a\, x + b + x \int_{]0,\infty[} \frac{\mu(\dd r)}{x + r}, \qquad x > 0
\end{equation}
where $\mu \in \mathfrak{M}$, $a \geq 0$ and $b := \mu(\{0\}) \geq 0$. 
Any function with an integral representation of the form \eqref{eq:CBFintegral} 
with $a, b \geq 0$ and $\mu \in \mathfrak{M}$ is obviously a CBF. In particular, the function
$x/(x + a)$ is a CBF if $a \geq 0$. 

We now note that $x/(x + a)$ is a Bernstein function because it is an integral of Bernstein 
functions $x \rightarrow 1 - \e^{-r x}$, $r \geq 0$, with a positive weight $\e^{-a r}$:
$$ \frac{x}{x + a} = a \int_0^\infty \e^{-a r}\, \left(1 - \e^{-r x}\right) \, \dd r$$
Eq.~\eqref{eq:CBFintegral} now implies that every CBF function is an integral of Bernstein 
functions with a positive weight, hence it is a Bernstein function.

The Bernstein function $f(x) := 1 - \e^{-x}$ is however not a CBF. Indeed, suppose the contrary.
Then $f(x)$ has the integral representation \eqref{eq:CBFintegral} with $b = f(0) = 0$
and $a = \lim_{x\rightarrow\infty} f(x)/x = 0$. Hence
$f(x)/x = \int_{]0,\infty[}  \mu(\dd r)/(x + r)$. But $f(x)/x = 
\int_0^\infty \e^{-x y}\,\chi_{[0,1]}(y) \, \dd y$ and therefore the characteristic function
$\chi_{[0,1]}(x) = \int_{]0,\infty[} \e^{-x y} \, \mu(\dd y)$ is a smooth CM function. 
This conclusion is false, hence $1 - \e^{-x}$ is not a CBF. Two examples of complete 
Bernstein functions relevant for attenuation and dispersion are $x^\alpha$ and $(1 + x)^\alpha - 1$,
$0 < \alpha < 1$.

Many other examples of CBFs can be found in \cite{BernsteinFunctions}.

Let $g(z)$ denote the analytic continuation of $g(x)$ to the complex plane cut along the negative real axis. 
Eq.~\eqref{eq:CBFintegral} implies that $\Im g(z) \geq 0$ in $\mathbb{C}^+$. By the
Pick-Nevanlinna theorem (\cite{BernsteinFunctions} Theorem~6.7) every non-negative continuous function $g(x)$ 
on $\overline{\mathbb{R}_+}$ which has an analytic continuation with
the above property is a CBF. This criterion allows identifying some complex analytic functions as 
analytic continuations of CBF, in particular $z^\alpha$ with $0 \leq \alpha \leq 1$ and $\ln(1 + z)$. 

We shall also need the following theorems \cite{BernsteinFunctions}. They follow easily from the 
Pick-Nevanlinna theorem. The first one is an immediate consequence of the Pick-Nevanlinna theorem.
\begin{theorem} \label{thm:CBF1}
If $f$ is a CBF and $0 < \alpha \leq 1$ , then $f(\cdot)^\alpha$ is a CBF. 
\end{theorem}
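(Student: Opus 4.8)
The plan is to verify that $f(\cdot)^\alpha$ satisfies the hypotheses of the Pick--Nevanlinna criterion quoted just above the theorem: it suffices to check that $f^\alpha$ is non-negative and continuous on $\overline{\mathbb{R}_+}$ and that it admits an analytic continuation to the cut plane $\mathbb{C} \setminus\,]{-\infty},0]$ whose imaginary part is non-negative throughout $\mathbb{C}^+$. The case $f \equiv 0$, and more generally $f$ constant, is trivial, since then $f^\alpha$ is a non-negative constant and hence a CBF; so I would assume $f \not\equiv 0$.

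First I would extract the geometric content of $f$ being a CBF. By the Pick--Nevanlinna theorem $f$ extends analytically to the cut plane with $\Im f(z) \geq 0$ on $\mathbb{C}^+$. Since $\Im f$ is harmonic and non-negative on the open half-plane, the minimum principle leaves only two possibilities: either $\Im f \equiv 0$ on $\mathbb{C}^+$, which would make the analytic function $f$ real-valued on an open set and therefore constant, contrary to assumption; or $\Im f(z) > 0$ strictly in $\mathbb{C}^+$. Thus $\arg f(z) \in\,]0,\pi[$ for every $z \in \mathbb{C}^+$, and in particular $f(z)$ never lands on the half-line $]{-\infty},0]$ where the principal power $w \mapsto w^\alpha$ fails to be analytic.

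Next I would compose. Setting $f(z)^\alpha := \exp\!\bigl(\alpha \log f(z)\bigr)$ with the principal branch, I must check that $f(z)$ avoids the branch cut on the whole domain. The previous step handles $z \in \mathbb{C}^+$; for $z \in \mathbb{C}^-$ the Schwarz reflection $f(\bar z) = \overline{f(z)}$ (valid because $f$ is real on $\mathbb{R}_+$) puts $f(z)$ in the open lower half-plane; and for $z = x > 0$ the representation \eqref{eq:CBFintegral} gives $f(x) > 0$, because once $f \not\equiv 0$ at least one of the terms $ax$, $b$, $x\int_{]0,\infty[}\mu(\dd r)/(x+r)$ is strictly positive for $x>0$. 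Hence $f(z)$ stays off the cut everywhere, $f(\cdot)^\alpha$ is a genuine composition of analytic maps and is analytic on $\mathbb{C} \setminus\,]{-\infty},0]$. The phase estimate is the heart of the matter: for $z \in \mathbb{C}^+$ one has $\arg f(z)^\alpha = \alpha\,\arg f(z) \in\,]0,\alpha\pi[ \;\subseteq\;]0,\pi[$ precisely because $\alpha \leq 1$, so $\Im f(z)^\alpha > 0$.

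Finally, $f \geq 0$ is continuous on $\overline{\mathbb{R}_+}$ and $t \mapsto t^\alpha$ is continuous and non-negative on $[0,\infty[$, so $f^\alpha$ is continuous and non-negative on $\overline{\mathbb{R}_+}$. All conditions of the Pick--Nevanlinna criterion hold, and $f^\alpha$ is a CBF. The one genuinely delicate point is the step isolated by the minimum-principle argument, namely guaranteeing that $f(z)$ never reaches the branch cut of the power function; once the strict bound $\arg f(z) < \pi$ is in hand, the restriction $\alpha \leq 1$ supplies the conclusion automatically.
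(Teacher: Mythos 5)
Your proof is correct and takes the same route as the paper, which simply declares Theorem~\ref{thm:CBF1} ``an immediate consequence of the Pick-Nevanlinna theorem'': you verify non-negativity and continuity on $\overline{\mathbb{R}_+}$ and the half-plane-preserving property of the principal power $f(z)^\alpha$, which is exactly what that criterion requires. Your extra care (the minimum-principle argument showing $\arg f(z)\in\,]0,\upi[$ so that $f(z)$ never meets the branch cut of $w\mapsto w^\alpha$) is precisely the detail the paper leaves implicit in the word ``immediate.''
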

\begin{theorem} \label{thm:CBF2}
A function $f \not\equiv 0$ is a CBF if and only if the function $x/f(x)$ is a CBF. 
\end{theorem}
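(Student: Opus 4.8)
The plan is to use the Pick--Nevanlinna characterization quoted above: a non-negative continuous function on $\overline{\mathbb{R}_+}$ that continues analytically to $\mathbb{C}\setminus\,]-\infty,0]$ with non-negative imaginary part in $\mathbb{C}^+$ is a CBF. The statement is symmetric under the involution $f \mapsto x/f(x)$, since $x/(x/f(x)) = f(x)$; hence it suffices to prove one implication, namely that $g(x) := x/f(x)$ is a CBF whenever $f\not\equiv 0$ is a CBF, and then to apply this implication to the nonzero CBF $g$ in order to recover $f = x/g$.

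The engine of the proof is a sector (argument) estimate for CBFs, which I would extract from the integral representation \eqref{eq:CBFintegral}. Writing $z = \rho\,\e^{\ii\phi}$ with $0 < \phi < \pi$, I would show that each term in $f(z) = a\,z + b + z\int_{]0,\infty[}\mu(\dd r)/(z+r)$ has argument in $[0,\phi]$: the terms $a\,z$ and $b$ have arguments $\phi$ and $0$, while $\frac{\dd}{\dd r}\arg(z+r) = \Im[1/(z+r)] = -\Im z/|z+r|^2 < 0$ shows that $\arg(z+r)$ decreases from $\phi$ as $r$ grows, so $\arg\bigl(z/(z+r)\bigr) = \phi - \arg(z+r) \in\,]0,\phi[$. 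Since the closed cone $\{w : 0 \le \arg w \le \phi\}$ is convex for $\phi < \pi$ and closed under addition and integration against a non-negative measure, I conclude $0 \le \arg f(z) \le \arg z$ for $z \in \mathbb{C}^+$. The same computation gives $\Im f(z) > 0$ for non-constant $f$, so a nonzero CBF has no zeros in $\mathbb{C}^+$; by reflection it has none in the lower half-plane, and $f(x) > 0$ on $]0,\infty[$ follows directly from \eqref{eq:CBFintegral}. Thus $f$ is zero-free on the cut plane and $g(z) = z/f(z)$ is holomorphic there.

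With the sector estimate in hand the forward implication is immediate: for $z \in \mathbb{C}^+$ one has $\arg g(z) = \arg z - \arg f(z) \in [0, \arg z] \subset [0,\pi[$, so $\Im g(z) \ge 0$. It remains to verify that $g$ extends to a non-negative continuous function on $\overline{\mathbb{R}_+}$. Non-negativity and continuity on $]0,\infty[$ are clear since $f > 0$ there; at the origin, if $f(0+) = b > 0$ then $g(0+) = 0$, while if $b = 0$ then $f(x)/x \to a + \int_{]0,\infty[}\mu(\dd r)/r =: c \in\,]0,\infty]$ (positive because $f \not\equiv 0$), giving $g(0+) = 1/c \in [0,\infty[$. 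The Pick--Nevanlinna theorem then yields $g \in \CBF$, and the involution argument completes the proof.

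I expect the sector estimate $0 \le \arg f(z) \le \arg z$ to be the main obstacle, since everything else is formal once it is available. The delicate points are the convexity and closure argument that passes the bound from the individual integrand terms to the integral $z\int\mu(\dd r)/(z+r)$, and the verification that a nonzero CBF is genuinely zero-free up to the boundary so that $g$ is well behaved at $x = 0$; the constant case $f \equiv b$ must be checked separately, where the conclusion $g(x) = x/b \in \CBF$ is trivial.
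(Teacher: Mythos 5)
Your proof is correct and follows essentially the same route as the paper: both arguments deduce from the integral representation \eqref{eq:CBFintegral} that $z/f(z)$ maps $\mathbb{C}^+$ into itself (the paper via $\Im\bigl(f(z)/z\bigr) \leq 0$ followed by taking the reciprocal, you via the sector estimate $0 \leq \arg f(z) \leq \arg z$ and subtraction of arguments), then conclude with the Pick--Nevanlinna criterion and dispose of the converse by the involution $f \mapsto x/f(x)$. Your write-up is somewhat more careful than the paper's, since it makes explicit the zero-freeness of a nonzero CBF on the cut plane and the continuity and non-negativity of $x/f(x)$ at the origin, points the paper leaves implicit.
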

\begin{proof}
Let $g(x) := x/f(x)$.

If $f$ is a CBF then it has an analytic continuation $f(z)$ to $\mathbb{C}^+$. The analytic continuation of $f$  has 
an integral representation of the form \eqref{eq:CBFintegral}. It follows that  
$f(z)/z$ swaps $\mathbb{C}^+$ and $\mathbb{C}^-$. Consequently its inverse $z/f(z)$ maps 
$\mathbb{C}^+$ and $\mathbb{C}^-$ 
into itself and is non-negative on $\mathbb{R}_+$. Hence $g(x) \equiv x/f(x)$ is a CBF. 

Conversely, $f(x) = x/g(x)$. Consequently, if $g(x)$ is a CBF, then $f(x)$ is a CBF.
\end{proof}

For simplicity we shall henceforth apply the term CBF to the analytic continuation of a CBF as well. 

\section{Viscoelastic media with non-negative relaxation spectrum.}
\label{sec:constitutive}

We shall henceforth assume that the relaxation modulus $G$ is LICM \cite{SerHan2010}. Since the function $G$ is non-increasing
and non-negative, it has a limit $G_\infty := \lim_{t\rightarrow\infty} G(t) \geq 0$. The 
limit at 0 can be infinite even for physically realistic models such as the Rouse theory of dilute polymer solutions
\cite{ReHrNo:VE}.

According to eq.~\eqref{eq:realBernstein} 
$$G(t) = \int_{[0,\infty[} \e^{-r t} \, \mu(\dd r), \qquad t > 0$$
where $\mu \in \mathfrak{M}$. The relaxation modulus is thus a linear superposition of Debye relaxation
functions with non-negative weights. The {\em relaxation spectral measure} $\mu$ is positive. 

On account of eq.~\eqref{eq:CBFintegral} the function 
\begin{equation}
Q(p) := p \, \tilde{G}(p) = p \int_{[0,\infty[} \frac{\mu(\dd r)}{p + r}
\end{equation}
is a CBF. It follows from the general theory of the Laplace transform \cite{Doetsch,WidderLT} that
$Q(0) = G_\infty  \geq 0$ and, additionally, $\lim_{p\rightarrow\infty} Q(p) = G_0 := \lim_{t\rightarrow 0+} G(t)$ if
the relaxation modulus is bounded. 

\section{One- and three-dimensional viscoelastic Green's functions.}

Viscoelastic Green's functions are solutions of the problem
\begin{equation}
\rho \, u_{,tt} = G(t)\ast\nabla^2 \, u_{,t}, \qquad u(0,x) = 0, \quad u_{,t}(0,\x) = \delta(\x)
\end{equation}

The viscoelastic Green's function in a three-dimensional space can be expressed in terms of 
the one-dimensional Green's function:
\begin{equation} \label{eq:1Dto3D}
u^{(3)}(t,\x) =  -\left.\frac{1}{2 \upi r} \frac{\partial}{\partial r} u^{(1)}(t,r)\right\vert_{r=\vert \x \vert}
\end{equation}
Indeed, we have
$$
u^{(1)}(t,x) = \frac{1}{2 \upi} \frac{1}{2 \upi \ii} \int_{-\ii \infty + 
\varepsilon}^{\ii \infty + \varepsilon} 
\e^{p t} \, \frac{\rho}{Q(p)}\,\dd p \int_{-\infty}^\infty \e^{\ii k x} \frac{1}{k^2 + \kappa(p)^2} \dd k $$
hence
\begin{equation} \label{eq:Green1D}
u^{(1)}(t,x) = \frac{1}{4 \upi \ii} \int_{-\ii \infty + \varepsilon}^{\ii \infty + \varepsilon}
\e^{p \,t} \, \frac{\rho}{Q(p) \, \kappa(p)}\, \e^{-\kappa(p)\, \vert x \vert} \, \dd p
\end{equation}
where the {\em complex wave number function} $\kappa(p)$ is defined by the equation 
\begin{equation} \label{eq:kp}
\kappa(p) = \rho^{1/2}\,p/Q(p)^{1/2}
\end{equation} 
The square root is defined in such a way that $\Re \kappa(-\ii \omega) \geq 0$ for real $\omega$. 

In the three-dimensional case 
\begin{multline*}
u^{(3)}(t,\x) = \frac{1}{(2 \upi)^3} \frac{1}{2 \upi \ii} 
\int_{-\ii \infty + \varepsilon}^{\ii \infty + \varepsilon} \e^{p t} \, 
\frac{\rho}{Q(p)}\,\dd p 
\int_{\mathbb{R}^3} \dd_3 k \frac{\e^{\ii \kk\cdot\x}}{k^2 + \kappa(p)^2} = \\
\frac{1}{(2 \upi)^2} \frac{1}{2 \upi \ii} 
\int_{-\ii \infty + \varepsilon}^{\ii \infty + \varepsilon} \e^{p t} \, \frac{\rho}{Q(p)}\dd p 
\int_0^\infty k^2 \, \dd k \int_0^\upi \sin(\vartheta)\, \dd \vartheta 
\frac{\e^{\ii k \vert \x \vert \,\cos(\vartheta)}}{k^2 + \kappa(p)^2} 
\end{multline*}
Hence, by closing the contour over $k$ in the upper half of the complex $k$-plane 
\begin{multline*} 
u^{(3)}(t,\x) =
\frac{-1}{(2 \upi)^3\, \vert \x \vert} \int_{-\ii \infty + \varepsilon}^{\ii \infty + \varepsilon} 
\e^{p t} \, \frac{\rho}{Q(p)}\,\dd p \int_{-\infty}^\infty \, 
\frac{\e^{\ii k \vert \x \vert}}{k^2 + \kappa(p)^2}\, k \, \dd k = \\
\frac{1}{8 \upi^2 \,\ii\, \vert \x \vert} \int_{-\ii \infty + \varepsilon}^
{\ii \infty + \varepsilon} \dd p \,\frac{\rho}{Q(p)}
\e^{p\, t - \kappa(p) \, \vert \x \vert} = -\left. \frac{1}{2 \upi r} 
\frac{\partial}{\partial r} u^{(1)}(t,r)\right\vert_{r=\vert \x \vert}
\end{multline*}

\section{The complex wave number function.}
\label{sec:CWF}

Assume that under constant strain the stress does not relax to 0: $G_\infty := \lim_{t\rightarrow \infty} G(t) > 0$.

By Theorems~\ref{thm:CBF1} and \ref{thm:CBF2} the complex wave number function  $\kappa$ is a CBF. 
Since $Q(0) = G_\infty > 0$, $\kappa(0) = 0$. Hence $\kappa$ has an integral representation 
\eqref{eq:CBFintegral} without the constant term:
\begin{equation} \label{eq:inteRepres}
\kappa(p) = B \, p + \beta(p)
\end{equation}
where $B > 0$ and 
\begin{equation} \label{eq:inteRepres1}
\beta(p) := p \int_{]0,\infty[} \frac{\nu(\dd r)}{p + r}
\end{equation} 
and $\nu \in \mathfrak{M}$. The function $\beta$ will be called the {\em dispersion-attenuation function}.
The {\em dispersion-attenuation spectral measure} $\nu$ determines the attenuation, dispersion and wavefront
behavior of viscoelastic wave motion.

We have just shown that the complex wave number function of a viscoelastic medium with a LICM relaxation 
modulus is a CBF. The converse is not true.
Indeed, $Q(p) = \rho \, [p/\kappa(p)]^2$ is a square of the CBF function $\rho^{1/2}\, p/\kappa(p)$.
A square of a CBF need not however be a CBF, for example $p^{2/3}$ is a CBF but its square is not even a 
Bernstein function.

\begin{theorem} \label{thm:rebeta}
If $\Re p \geq 0$ then $\Re \beta(p) \geq 0$. 
\end{theorem}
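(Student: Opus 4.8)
The plan is to reduce the inequality to an elementary sign computation on the integrand of \eqref{eq:inteRepres1} and then integrate against the positive measure $\nu$. First I would dispose of the trivial case $p = 0$, where $\beta(0) = 0$ and hence $\Re\beta(0) = 0$. So assume $p \neq 0$ and write $x := \Re p \geq 0$. The key observation is that for each fixed $r \in\; ]0,\infty[$ the real part of the integrand $p/(p+r)$ is non-negative. Indeed, multiplying numerator and denominator by $\overline{p+r} = \overline{p}+r$ gives
\[
\frac{p}{p+r} = \frac{p\,(\overline{p}+r)}{\vert p+r\vert^2} = \frac{\vert p\vert^2 + r\,p}{\vert p+r\vert^2},
\]
so that
\[
\Re\frac{p}{p+r} = \frac{\vert p\vert^2 + r\,x}{\vert p+r\vert^2}.
\]
Since $r > 0$ and $x \geq 0$, both terms in the numerator are non-negative, while the denominator is positive because $\Re(p+r) = x + r > 0$; hence $\Re[p/(p+r)] \geq 0$ for every $r$ in the range of integration.

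Next I would integrate this pointwise inequality. Because $\nu$ is a positive Radon measure and the real part is a real-linear operation that commutes with integration, I obtain
\[
\Re\beta(p) = \int_{]0,\infty[} \Re\frac{p}{p+r}\,\nu(\dd r) = \int_{]0,\infty[} \frac{\vert p\vert^2 + r\,x}{\vert p+r\vert^2}\,\nu(\dd r) \geq 0,
\]
which is the claim. The only point requiring a little care is that this last integral is finite, so that the interchange of $\Re$ and $\int$ is legitimate: for fixed $p \neq 0$ the integrand is bounded near $r = 0$ (it tends to $1$), while as $r\to\infty$ it behaves like $x/r$; using the splitting recorded in Remark~\ref{rem:1}, namely $\int_{]0,1]}\nu(\dd r) < \infty$ and $\int_{]1,\infty[}\nu(\dd r)/r < \infty$, both tails converge, so $\beta(p)$ is well defined for every $p$ with $\Re p \geq 0$.

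There is no substantial obstacle here: the statement is an immediate consequence of the pointwise sign of $\Re[p/(p+r)]$ together with the positivity of $\nu$. One could alternatively argue that $\beta$, which is itself a CBF by virtue of the representation \eqref{eq:CBFintegral} with vanishing constant and linear terms and hence a Bernstein function, maps the right half-plane $\mathbb{C}_+$ into its closure; but that mapping property is not among the results recorded above, whereas the explicit computation relies only on \eqref{eq:inteRepres1} and the membership $\nu\in\mathfrak{M}$. It has the further advantage of exhibiting the non-negative density $(\vert p\vert^2 + r\,x)/\vert p+r\vert^2$, which I expect to be convenient in the analysis of the attenuation and dispersion functions that follows.
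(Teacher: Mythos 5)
Your proof is correct and is essentially the paper's own argument: the paper's entire proof consists of the identity $\Re \beta(p) = \int_{]0,\infty[} \bigl(\vert p\vert^2 + r\,\Re p\bigr)\,\vert p + r\vert^{-2}\, \nu(\dd r) \geq 0$, which is exactly the pointwise computation of $\Re[p/(p+r)]$ you perform, integrated against the positive measure $\nu$. Your additional verification that the integral converges (via Remark~\ref{rem:1}) is a sound piece of extra care but does not change the route.
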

\begin{proof}
If $\Re p \geq 0$ then 
\begin{equation} \label{eq:1}
\Re \beta(p) = \int_{]0,\infty[} \frac{\vert p \vert^2 + r\, \Re p}{\vert p + r\vert^2} \nu(\dd r) \geq 0
\end{equation}
\end{proof}

\begin{theorem} \label{thm:2}
$\Re \beta(p)$ is a non-decreasing function of $\vert p \vert$ in the closed right half plane $\overline{\mathbb{C}_+}$.
\end{theorem}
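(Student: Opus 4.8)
The plan is to start from the explicit real-part formula obtained in the proof of Theorem~\ref{thm:rebeta}, namely
$$\Re\beta(p) = \int_{]0,\infty[} \frac{\vert p\vert^2 + r\,\Re p}{\vert p + r\vert^2}\,\nu(\dd r),$$
and to read off the $\vert p\vert$-dependence along each ray emanating from the origin. Writing $p = s\,\e^{\ii\theta}$ with $s = \vert p\vert \geq 0$ and a fixed argument $\theta \in [-\upi/2,\upi/2]$ (which is exactly the condition $p \in \overline{\mathbb{C}_+}$), and setting $c := \cos\theta$, one has $\vert p\vert^2 = s^2$, $\Re p = c\,s$ and $\vert p + r\vert^2 = s^2 + 2 c r s + r^2$, so that the integrand becomes the rational function
$$\phi_r(s) := \frac{s^2 + c r s}{s^2 + 2 c r s + r^2}.$$
The assertion then reduces to showing that $s \mapsto \Re\beta(s\,\e^{\ii\theta})$ is non-decreasing for each admissible $\theta$.

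Next I would prove that each $\phi_r$ is itself non-decreasing in $s$ on $[0,\infty[\,$. The denominator is strictly positive, so $\phi_r$ is smooth, and a direct differentiation gives for the numerator of $\phi_r'(s)$ the expression $r\bigl[c\,(s^2 + r^2) + 2 r s\bigr]$, the denominator being the manifestly positive $(s^2 + 2 c r s + r^2)^2$. Since $r > 0$, $s \geq 0$ and, crucially, $c = \cos\theta \geq 0$ on $\overline{\mathbb{C}_+}$, every term in the bracket is non-negative, whence $\phi_r'(s) \geq 0$. Thus $\phi_r(s_1) \leq \phi_r(s_2)$ whenever $0 \leq s_1 \leq s_2$.

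Finally I would integrate this pointwise inequality against the positive measure $\nu$: monotonicity of the integral then yields $\Re\beta(s_1\,\e^{\ii\theta}) \leq \Re\beta(s_2\,\e^{\ii\theta})$, which is the claim. Finiteness of both integrals is already guaranteed by Theorem~\ref{thm:rebeta}, so no separate integrability estimate is needed, and phrasing the argument through the pointwise inequality $\phi_r(s_1)\leq\phi_r(s_2)$ sidesteps any appeal to differentiation under the integral sign.

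The computation itself is routine; the one point that really matters is the sign of $c$. The hypothesis $p \in \overline{\mathbb{C}_+}$ enters precisely through $\cos\theta \geq 0$, which is exactly what forces the bracket $c\,(s^2 + r^2) + 2 r s$, and hence $\phi_r'$, to be non-negative. Were $\Re p$ allowed to be negative, the term $c\,(s^2 + r^2)$ would change sign and monotonicity of the integrand could fail for small $s$, so this is the step to watch.
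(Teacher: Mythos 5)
Your proof is correct and is essentially the paper's own argument: fix a ray $\arg p = \theta$ in $\overline{\mathbb{C}_+}$, view the integrand of the real-part formula as a rational function of $s = \vert p \vert$, verify its derivative is non-negative precisely because $\cos\theta \geq 0$, and integrate the pointwise inequality against the positive measure $\nu$. Your write-up in fact supplies the details the paper leaves implicit --- the explicit numerator $r\bigl[c\,(s^2+r^2)+2rs\bigr]$ of $\phi_r'$ and the passage from pointwise monotonicity to monotonicity of the integral --- and uses the correct denominator $s^2 + 2crs + r^2$, where the paper's $f(x) := (x^2+ax)/(x^2+ax+c)$ misprints the middle term.
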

\begin{proof}
The integrand of eq.~\eqref{eq:1} has the form $f(x) := (x^2 + a x)/(x^2 + ax + c)$, where $x = \vert p \vert$,
$a : = r \cos(\arg(p)) \geq 0$, $c = r^2 \geq 0$. The derivative of $f$ is non-negative. 
\end{proof}

The derivative
$$\beta^\prime(p) = \int_{]0,\infty[} \frac{r\, \nu(\dd r)}{(p + r)^2} > 0$$
exists for real $p$ because the integrand is bounded by $\nu(\dd r)/(p + r)$.
We are however interested in growth of $\Re \beta(p)$ as $p$ tends to infinity in the closed right 
complex half-plane. 

\begin{theorem} \label{thm:2finitewavefrontspeed}
If $\nu([0,\infty[) = \infty$ then $\Re \beta(p) \rightarrow \infty$ as $\vert p \vert \rightarrow \infty$
in the right half of the complex $p$-plane.
\end{theorem}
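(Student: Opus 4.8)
The plan is to extract from the integral representation \eqref{eq:1} a lower bound for $\Re\beta(p)$ that depends on $p$ only through $|p|$, and then let $|p|\to\infty$ by monotone convergence. Starting from
\[
\Re\beta(p) = \int_{]0,\infty[} \frac{|p|^2 + r\,\Re p}{|p+r|^2}\,\nu(\dd r), \qquad \Re p \ge 0 ,
\]
I would discard the non-negative term $r\,\Re p$ in the numerator and, since $\Re p \le |p|$, bound the denominator by $|p+r|^2 = |p|^2 + 2 r\,\Re p + r^2 \le (|p|+r)^2$. This yields
\[
\Re\beta(p) \ge \int_{]0,\infty[} \frac{|p|^2}{(|p|+r)^2}\,\nu(\dd r) =: \Phi(|p|),
\]
whose right-hand side is a function of $x := |p|$ alone.

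Next I would analyse $\Phi$. For each fixed $r > 0$ the integrand $x^2/(x+r)^2 = (1+r/x)^{-2}$ increases monotonically to $1$ as $x\to\infty$, so $\Phi$ is non-decreasing in $x$, and the monotone convergence theorem gives
\[
\lim_{x\to\infty}\Phi(x) = \int_{]0,\infty[} 1\,\nu(\dd r) = \nu([0,\infty[) = \infty,
\]
using the hypothesis that the total mass of $\nu$ is infinite; the atom at $0$ is absent because $\nu$ is carried by $]0,\infty[\,$, so $\nu([0,\infty[) = \nu(]0,\infty[)$. Hence for every $M$ there is an $R$ with $\Phi(x) > M$ whenever $x > R$, and therefore $\Re\beta(p) \ge \Phi(|p|) > M$ for every $p\in\overline{\mathbb{C}_+}$ with $|p| > R$. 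This is exactly the assertion that $\Re\beta(p)\to\infty$ as $|p|\to\infty$ in the right half-plane.

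The one point that needs care, and the reason I keep the estimate uniform in the argument of $p$, is that $|p|\to\infty$ is allowed along any path in $\overline{\mathbb{C}_+}$, so $\arg p$ may vary. On the imaginary axis the favourable term $r\,\Re p$ vanishes, so a bound that retained $\Re p$ would degenerate as $\arg p\to\pm\upi/2$; discarding that term at the outset removes the dependence on the argument and makes $\Phi$ a genuine function of $|p|$, which is the crux of the argument. One could instead invoke Theorem~\ref{thm:2}: along each ray of fixed argument $\Re\beta$ is non-decreasing in $|p|$, and the same monotone-convergence computation applied to the integrand $(x^2 + a x)/(x^2 + 2 a x + r^2)$, with $a = r\cos(\arg p)$, shows its limit along that ray is again $\nu([0,\infty[) = \infty$. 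That route establishes the ray-wise limit but not manifestly the uniformity in the argument, so I would prefer the argument-free lower bound $\Phi$ as the cleanest path to the stated conclusion.
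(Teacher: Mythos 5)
Your proof is correct and follows essentially the same route as the paper: both start from the representation \eqref{eq:1}, observe that its integrand increases to $1$ as $\vert p\vert \rightarrow \infty$, and conclude from $\nu([0,\infty[)=\infty$ via a monotone-convergence/Fatou argument. The only real difference is that the paper keeps the dependence on $\cos(\arg p)$ and invokes the Lebesgue--Fatou lemma, whereas you first pass to the argument-free minorant $\vert p \vert^2/(\vert p\vert +r)^2$ (the very estimate the paper itself uses later in the proof of Lemma~\ref{lem:finitewsp}), which has the small advantage of making the uniformity in $\arg p$ explicit.
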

\begin{proof}
The integrand of \eqref{eq:1} tends to 1 as $\vert p \vert \rightarrow \infty$. Setting $x = 1/\vert p\vert$,
$a := \cos(\arg(p))$ the integrand is transformed into the function
$(1 + a x)/(1 + 2 a x + x^2)$, which has a non-positive derivative. Hence the integrand of 
\eqref{eq:1} is monotonically increasing to 1. If $\nu([0,\infty[) = \infty$, then by the Lebesgue-Fatou lemma
\cite{Yosida} $\Re \beta(p)$ increases to infinity as $\vert p \vert \rightarrow \infty$, $p \in \mathbb{C}_+$.
\end{proof}

\begin{theorem} \label{thm:betaineq}
$\beta(p)/p$ tends to 0 as $\vert p \vert \rightarrow \infty$ in the right half-plane $-\upi/2 \leq \arg p \leq \upi/2$
uniformly with respect to $\arg p$.
\end{theorem}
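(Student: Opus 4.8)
The plan is to work directly from the integral representation \eqref{eq:inteRepres1}, which gives
$$\frac{\beta(p)}{p} = \int_{]0,\infty[} \frac{\nu(\dd r)}{p + r},$$
and to produce a bound on the modulus of this integral that depends on $p$ only through $\vert p \vert$; uniformity with respect to $\arg p$ will then be automatic.

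First I would estimate the denominator. Writing $p = \vert p\vert\,\e^{\ii\theta}$ with $\vert\theta\vert \leq \upi/2$, so that $\cos\theta \geq 0$, one has for every $r > 0$
$$\vert p + r\vert^2 = \vert p\vert^2 + 2 r \vert p\vert \cos\theta + r^2 \geq \vert p\vert^2 + r^2 .$$
Hence
$$\left\vert \frac{\beta(p)}{p} \right\vert \leq \int_{]0,\infty[} \frac{\nu(\dd r)}{\vert p + r\vert} \leq \int_{]0,\infty[} \frac{\nu(\dd r)}{\sqrt{\vert p\vert^2 + r^2}} =: F(\vert p\vert),$$
and the right-hand side depends on $p$ only through $\vert p\vert$.

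Second, I would show that $F(s) \to 0$ as $s \to \infty$ by dominated convergence. For each fixed $r > 0$ the integrand $(s^2 + r^2)^{-1/2}$ tends to $0$ as $s \to \infty$. To dominate it uniformly for $s \geq 1$, note that $(1-r)^2 \geq 0$ yields $\sqrt{1 + r^2} \geq (1 + r)/\sqrt{2}$, so that for $s \geq 1$
$$\frac{1}{\sqrt{s^2 + r^2}} \leq \frac{1}{\sqrt{1 + r^2}} \leq \frac{\sqrt{2}}{1 + r}.$$
The majorant $r \mapsto \sqrt{2}/(1+r)$ is $\nu$-integrable precisely because $\nu \in \mathfrak{M}$, that is, by \eqref{eq:doss}. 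Dominated convergence then gives $F(s) \to 0$, and combined with the $\arg p$-independent bound above this establishes that $\beta(p)/p \to 0$ uniformly in $\arg p$.

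The only delicate point is that $\nu$ need not be a finite measure, so the pointwise decay of the integrand does not by itself close the argument; the crux is the construction of the $p$-independent, $\nu$-integrable majorant $\sqrt{2}/(1+r)$, which is exactly what membership in $\mathfrak{M}$ supplies. One could equally observe that $F$ is non-increasing in $s$ and invoke monotone convergence instead.
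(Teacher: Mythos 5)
Your proposal is correct and follows essentially the same route as the paper: both bound $\vert p + r\vert^2 \geq \vert p\vert^2 + r^2$ using $\cos(\arg p) \geq 0$ to obtain an $\arg p$-independent majorant, and then invoke dominated convergence with a $\nu$-integrable dominating function of the form $\const/(1+r)$ supplied by \eqref{eq:doss}. The only difference is cosmetic: the paper arranges the elementary inequalities so that the majorant is exactly $1/(1+r)$ for $\vert p\vert \geq \sqrt{3}$, whereas you absorb a factor $\sqrt{2}$ and work with $\vert p\vert \geq 1$, which is if anything slightly cleaner.
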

\begin{proof}
For $-\upi/2 \leq \varphi \leq \upi/2$
$$\inf_{\varphi \in [-\upi/2,\upi/2]} \vert R\, \e^{\ii \varphi} + r\vert^2 = \inf_{\varphi \in [-\upi/2,\upi/2]} 
\left[ R^2 + r^2 + 2 R r \cos(\varphi)\right] = R^2 + r^2$$
hence
$$\sup_{\varphi \in [-\upi/2,\upi/2]} \left\vert \frac{\beta\left(R \e^{\ii \varphi}\right)}{R \e^{\ii \varphi}} \right\vert
\leq \int_{[0,\infty[} \frac{\nu(\dd r)}{\sqrt{R^2 + r^2}}$$
But for $R \geq 1/\sqrt{3}$ and $r \geq 1$ the inequality 
$1/\sqrt{R^2 + r^2} \leq \sqrt{3}/\sqrt{1 + 3 r^2}
\leq 1/(1 + r)$ holds 
in view of the inequality $(1 + r)^2 \leq 1 + 3 r^2$ for $r \geq 1$, while for $R \geq \sqrt{3}$ and $r < 1$, 
$1/\sqrt{R^2 + r^2} \leq 1/\sqrt{3 + r^2}
\leq 1/(1 + r)$. 
Thus
$1/\sqrt{R^2 + r^2} \leq 1/(1 + r)$ for $R \geq \sqrt{3}$ and $r \geq 0$. In view of eq.~\eqref{eq:doss}
$$\lim_{R \rightarrow \infty} \int_{[0,\infty[} \frac{\nu(\dd r)}{\sqrt{R^2 + r^2}} = 0$$
by the Lebesgue Dominated Convergence Theorem. Hence $\vert \beta(p)/p \vert$ tends to 0 in 
$\overline{\mathbb{C}_+}$ uniformly with respect to $\arg p \in [-\upi/2,\upi/2]$.
\end{proof}

\begin{corollary} \label{cor:B}
If $G_0 < \infty$ then $B = (\rho/G_0)^{1/2}$ in eq.~\eqref{eq:inteRepres}, otherwise $B = 0$.
\end{corollary}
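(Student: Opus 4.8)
The plan is to identify the constant $B$ in \eqref{eq:inteRepres} as the limit of $\kappa(p)/p$ along the positive real axis, and then to evaluate that limit through the definition \eqref{eq:kp} of $\kappa$ together with the behaviour of $Q(p) = p\,\tilde{G}(p)$ at infinity. First I would divide \eqref{eq:inteRepres} by $p$ to obtain $\kappa(p)/p = B + \beta(p)/p$. By Theorem~\ref{thm:betaineq} the term $\beta(p)/p$ tends to $0$ as $\vert p\vert \rightarrow \infty$ in the closed right half-plane, in particular for real $p \rightarrow \infty$. Hence $B = \lim_{p \rightarrow \infty}\kappa(p)/p$, the limit being taken over real $p$. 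From \eqref{eq:kp} we have $\kappa(p)/p = \rho^{1/2}/Q(p)^{1/2}$, so everything reduces to computing $\lim_{p\rightarrow\infty} Q(p)$.

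In the bounded case $G_0 < \infty$ this limit is supplied by the Abelian result for the Laplace transform recalled in Sec.~\ref{sec:constitutive}, namely $\lim_{p\rightarrow\infty} Q(p) = G_0$. Substituting gives $B = \rho^{1/2}/G_0^{1/2} = (\rho/G_0)^{1/2}$, as claimed. For the remaining case $G_0 = \infty$ I would argue directly that $Q(p) \rightarrow \infty$: writing $Q(p) = p\int_0^\infty G(t)\,\e^{-p t}\,\dd t = \int_0^\infty G(s/p)\,\e^{-s}\,\dd s$ after the substitution $s = p t$, and noting that $G(s/p) \rightarrow G_0 = \infty$ for each fixed $s > 0$ as $p \rightarrow \infty$, the Lebesgue--Fatou lemma yields $\liminf_{p\rightarrow\infty} Q(p) \geq \int_0^\infty G_0\,\e^{-s}\,\dd s = \infty$. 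Consequently $\kappa(p)/p = \rho^{1/2}/Q(p)^{1/2} \rightarrow 0$, so $B = 0$.

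The main obstacle is the bookkeeping in the unbounded case rather than any deep difficulty: one must ensure that the non-increasing CM function $G$ genuinely satisfies $G(s/p) \uparrow G_0$ in the extended sense and that Fatou applies to the rescaled integrand. Since $\e^{-s}$ is bounded and the convergence $G(s/p) \uparrow G_0$ is monotone in $p$, this is routine, and it is the only point at which the two cases genuinely diverge.

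\begin{proof}
Dividing \eqref{eq:inteRepres} by $p$ gives $\kappa(p)/p = B + \beta(p)/p$. By Theorem~\ref{thm:betaineq}, $\beta(p)/p \rightarrow 0$ as $p \rightarrow \infty$ along $\mathbb{R}_+$, so $B = \lim_{p\rightarrow\infty}\kappa(p)/p$. By \eqref{eq:kp}, $\kappa(p)/p = \rho^{1/2}/Q(p)^{1/2}$. If $G_0 < \infty$, then $\lim_{p\rightarrow\infty} Q(p) = G_0$ by the Abelian theorem recalled in Sec.~\ref{sec:constitutive}, whence $B = (\rho/G_0)^{1/2}$. If $G_0 = \infty$, then the substitution $s = p t$ gives $Q(p) = \int_0^\infty G(s/p)\,\e^{-s}\,\dd s$, and since $G(s/p) \rightarrow G_0 = \infty$ for each $s > 0$, the Lebesgue--Fatou lemma yields $Q(p) \rightarrow \infty$; therefore $B = 0$.
\end{proof}
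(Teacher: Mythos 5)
Your proof is correct and follows essentially the same route as the paper's: identify $B = \lim_{p\rightarrow\infty}\kappa(p)/p$ via Theorem~\ref{thm:betaineq}, then evaluate the limit through $\kappa(p)/p = \rho^{1/2}/Q(p)^{1/2}$ and $\lim_{p\rightarrow\infty}Q(p) = G_0$. The only difference is that where the paper simply asserts $\lim_{p\rightarrow\infty} Q(p) = G_0 \leq \infty$ (the Abelian statement in Sec.~\ref{sec:constitutive} is given only for bounded $G$), you supply an explicit substitution-plus-Fatou argument for the case $G_0 = \infty$, which makes rigorous a step the paper leaves implicit.
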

\begin{proof}\mbox{ } \\
Recall that $\lim_{p\rightarrow\infty} Q(p) = G_0 \leq \infty$.

Theorem~\ref{thm:betaineq} implies that 
$$B = \lim_{ p\rightarrow \infty \atop \Re p \geq 0 } \frac{\kappa(p)}{p} =
 \lim_{p\rightarrow \infty \atop \Re p \geq 0} \frac{\rho^{1/2}}{Q(p)^{1/2}} =
\left(\frac{\rho}{G_0}\right)^{1/2}$$
\end{proof}

The phase function in the Green's function \eqref{eq:Green1D} has the form 
$-\ii \omega\, t - \kappa(-\ii \omega) r = -\ii \omega (t - B r) - \beta(-\ii \omega)\, r$ 
with $\beta(-\ii \omega) = \oo_\infty[\omega]$ and $\vert \exp(-\beta(p) \, r) \vert \leq 1$. 
It will be shown that Green's function vanishes for $t < B \, r$ if $B > 0$ and
thus $c_\infty := 1/B$ can be identified as the {\em wavefront speed}. The parameter $B$, $B \geq 0$, 
will henceforth be replaced by $c_\infty$ varying in the range $0 < c_\infty \leq \infty$.

Eq.~\eqref{eq:inteRepres} implies that the function $\kappa$ can be analytically continued to the complex plane cut along
the negative real axis, i.e. to the principal Riemann sheet $p \in \mathbb{C}$, $-\upi < \arg p < \upi$. The jump 
of the complex analytic function on the branch cut $]-\infty,0]$ can be easily calculated. Note that the function 
$\lambda(p) := \kappa(p)/p$ assumes the following boundary values on both sides of the branch cut:
$$\lambda\left(R \, \e^{\pm \ii \upi}\right) = B + \int_0^\infty \frac{\nu(\dd r)}{r - R \pm  \ii 0+}, \qquad R > 0$$
and 
$$(z \pm \ii 0+)^{-1} = \mathrm{vp} z^{-1} \mp \ii \upi \delta(z)$$
where $\mathrm{vp} \,z^{-1}$ denotes the principal value of $z^{-1}$ and $\delta$ denotes the Dirac delta
\cite{Gelfand}. Assume for a while that $\nu(\dd r) = h(r) \, \dd r$, where the density $h$ is a smooth
function. From the above identities follows the equation
\begin{equation}
f(R) := \left[\lambda\left(R \, \e^{\ii \upi}\right) - \lambda\left(R \, \e^{- \ii \upi}\right)\right]/(2 \ii \upi) = h(R)
\end{equation}
 
The boundary values of analytic functions are in general distributions \cite{BeltramiWohlers}. Hence the jump function 
$f$ is a distribution. Since
$f$ is a non-negative distribution, it is a measure. More generally the measure $\nu$ of a segment $]u,w]$ is given
by the formula 
\begin{equation}  \label{eq:jumpmeasure}
\nu(]u,w]) = \frac{1}{\upi} \int_u^w \Im \lambda\left(R \, \e^{\ii \upi}\right) \, \dd R
\end{equation}
(Theorem~\ref{thm:anCM}).

\section{A necessary and sufficient condition for complete monotonicity of the relaxation modulus.}

\begin{theorem}
The function $\kappa(p)$ of a viscoelastic material 
satisfies the condition $\kappa(p)^2/p \in \CBF$ if the relaxation modulus 
is completely monotonic.
\end{theorem}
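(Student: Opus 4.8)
The plan is to reduce the claim to the reciprocal property of complete Bernstein functions, Theorem~\ref{thm:CBF2}, by a direct algebraic manipulation of the defining relation \eqref{eq:kp}. First I would rewrite $\kappa(p)^2/p$ in terms of $Q$: squaring $\kappa(p) = \rho^{1/2}\, p/Q(p)^{1/2}$ gives $\kappa(p)^2 = \rho\, p^2/Q(p)$, so that
\begin{equation*}
\frac{\kappa(p)^2}{p} = \rho\,\frac{p}{Q(p)}.
\end{equation*}
Thus the whole content of the theorem is that $p/Q(p)$ is a CBF, up to the harmless positive factor $\rho$.

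Next I would invoke the hypothesis. Because $G$ is LICM, eq.~\eqref{eq:CBFintegral} applied to $Q(p) = p\,\tilde{G}(p)$ shows that $Q$ is a CBF and $Q \not\equiv 0$; Theorem~\ref{thm:CBF2} then applies verbatim, giving that $p/Q(p)$ is itself a CBF. To finish I would observe that $\CBF$ is a cone: multiplying the integral representation \eqref{eq:CBFintegral} by $\rho > 0$ sends the triple $(a,b,\mu)$ to $(\rho a,\rho b,\rho\mu)$, still with non-negative $\rho a,\rho b$ and $\rho\mu \in \mathfrak{M}$. Hence $\rho\, p/Q(p) = \kappa(p)^2/p$ is a CBF.

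The argument has essentially no hard step; the only point worth flagging is that, unlike the earlier result that $\kappa$ itself is a CBF (which used $G_\infty > 0$ to secure $\kappa(0) = 0$), the present statement needs no positivity assumption on $G_\infty$: it rests solely on $Q$ being a CBF and never on the vanishing of $\kappa$ at the origin. I would also double-check that the branch of the square root implicit in \eqref{eq:kp} is consistent with the analytic continuation underlying Theorem~\ref{thm:CBF2}, but since $\kappa(p)^2/p$ is expressed rationally in $Q$ the square root drops out entirely and no sign ambiguity survives.
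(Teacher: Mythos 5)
Your proposal is correct and is essentially the paper's own proof: both square eq.~\eqref{eq:kp} to get $\kappa(p)^2/p = \rho\, p/Q(p)$, note that $Q$ is a CBF because the relaxation modulus is completely monotonic, and conclude via Theorem~\ref{thm:CBF2}. Your extra remarks (the harmless factor $\rho$, the irrelevance of $G_\infty > 0$, the disappearance of the square-root branch) are fine but add nothing beyond the paper's argument.
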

\begin{proof}
Eq.~\eqref{eq:kp} implies that $Q(p) = \rho\, p^2/\kappa(p)^2$. In a viscoelastic medium with a completely 
monotonic relaxation modulus the function $Q$ is a CBF, hence by Theorem~\ref{thm:CBF2}  
$\kappa(p)^2/p = \rho\, p/Q(p)$ is a CBF.
\end{proof}

In the case of power law attenuation $\beta(p) = C\, p^\alpha$ the function   
$\kappa(p)^2/p = B^2\, p + 2 B\,C\, p^\alpha + C^2 \, p^{2 \alpha - 1}$
is a CBF if and only if $1/2 \leq \alpha \leq 1$. If $\alpha < 1/2$ then 
the exponent $2 \alpha - 1$ is negative and the function $\beta(p)^2/p$
is not monotone, hence it is not a BF; consequently the relaxation modulus
is not completely monotonic. The function $f(p) := p/(1 + p)$ is not a possible candidate for 
the wave number function because $p^2/f(p)^2$ is not a Bernstein function. Other 
counterexamples of this kind can be found in \cite{SerHan2010}.

We now examine the class of functions $\kappa(p)$ such that $\kappa(p)^2/p \in \CBF$.
Eq.~\eqref{eq:inteRepres} implies that $\kappa(p)^2/p \in \CBF$ if and only if 
$\beta(p) \in \CBF$ and $\beta(p)^2/p \in \CBF$.
We shall use the following notation:
$$\CBF^\alpha := \{ f^\alpha \mid f \in \CBF\}, \qquad \alpha \in \mathbb{R} $$
For any real function $f$ defined on $\overline{\mathbb{R}+}$ 
$$f\, \CBF := \{ f g \mid g \in \CBF\}$$
where $f g$ denotes the pointwise product of functions.
\begin{theorem}
$Q(p) \in \CBF$ if and only if $\kappa \in \CBF \cap p^{1/2}\, \CBF$.
\end{theorem}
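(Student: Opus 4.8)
The plan is to prove the two implications separately, using the algebra of Theorems~\ref{thm:CBF1} and~\ref{thm:CBF2} for the forward direction and the Pick--Nevanlinna criterion quoted after eq.~\eqref{eq:CBFintegral} for the converse. Throughout I use $Q(p)=\rho\,p^2/\kappa(p)^2$, which is eq.~\eqref{eq:kp} solved for $Q$. The forward direction (assume $Q\in\CBF$) is purely algebraic. Theorem~\ref{thm:CBF1} with $\alpha=1/2$ gives $Q^{1/2}\in\CBF$, and then Theorem~\ref{thm:CBF2} gives $p/Q^{1/2}\in\CBF$; since $\kappa=\rho^{1/2}\,p/Q^{1/2}$ and $\CBF$ is stable under multiplication by a positive constant (clear from eq.~\eqref{eq:CBFintegral}), $\kappa\in\CBF$. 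For the second membership I write $\kappa/p^{1/2}=\rho^{1/2}(p/Q)^{1/2}$; here Theorem~\ref{thm:CBF2} gives $p/Q\in\CBF$ and Theorem~\ref{thm:CBF1} then gives $(p/Q)^{1/2}\in\CBF$, so $\kappa\in p^{1/2}\,\CBF$.

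The converse is the substantive half, and here the naive algebra breaks down: writing $\kappa=p^{1/2}g$ with $g\in\CBF$ turns $Q$ into $\rho\,p/g^2$, whose CBF character would require $g^2\in\CBF$, and squaring does not preserve $\CBF$. The remedy is to argue with arguments (sectors). First I would record the elementary sector bound
\begin{equation}\label{eq:sectorbound}
f\in\CBF,\ p\in\mathbb{C}^+ \ \Longrightarrow\ 0\le\arg f(p)\le\arg p ,
\end{equation}
which follows directly from eq.~\eqref{eq:CBFintegral}: the constituents $a\,p$, $b$ and $p/(p+r)$ each have argument in $[0,\arg p]$ for $p\in\mathbb{C}^+$, and since $\arg p<\upi$ this set is a convex cone, so the superposition stays inside it. Applying \eqref{eq:sectorbound} to $\kappa\in\CBF$ gives $\arg\kappa(p)\le\arg p$, while applying it to $\kappa/p^{1/2}\in\CBF$ together with $\arg\bigl(\kappa/p^{1/2}\bigr)=\arg\kappa-\tfrac12\arg p\ge 0$ supplies the lower bound $\arg\kappa(p)\ge\tfrac12\arg p$.

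With these two bounds the proof closes quickly. Since $\kappa\not\equiv 0$ is a CBF, Theorem~\ref{thm:CBF2} makes $p/\kappa$ a CBF, hence holomorphic on the cut plane, so $\kappa$ vanishes nowhere on the open cut plane; consequently $Q=\rho\,p^2/\kappa^2$ is analytic on $\mathbb{C}\setminus\,]-\infty,0]$ and positive on $\mathbb{R}_+$. For $p\in\mathbb{C}^+$,
$$\arg Q(p)=2\bigl(\arg p-\arg\kappa(p)\bigr)\in\bigl[0,\ \arg p\bigr]\subset[0,\upi),$$
the lower end coming from $\arg\kappa\le\arg p$ and the upper end from $\arg\kappa\ge\tfrac12\arg p$. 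Thus $\Im Q(p)\ge 0$ throughout $\mathbb{C}^+$, and the Pick--Nevanlinna criterion yields $Q\in\CBF$. The one delicate point is exactly this converse: one must recognise that the extra hypothesis $\kappa\in p^{1/2}\,\CBF$ contributes precisely the lower bound $\arg\kappa\ge\tfrac12\arg p$ that keeps $\arg Q=2(\arg p-\arg\kappa)$ strictly below $\upi$, which mere membership $\kappa\in\CBF$ cannot guarantee.
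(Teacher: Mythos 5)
Your proof is correct. The ``only if'' half is exactly the paper's argument: the factorizations $\kappa=\rho^{1/2}\,p/Q^{1/2}$ and $\kappa=\rho^{1/2}\,p^{1/2}\,(p/Q)^{1/2}$ processed through Theorems~\ref{thm:CBF1} and~\ref{thm:CBF2}. Your converse, however, takes a genuinely different route. The paper settles it in two lines by invoking the cited identity \eqref{eq:CBFalpha}, $\CBF^\alpha=\CBF\cap p^{\alpha-1}\,\CBF$ (Theorem~7.11 of Schilling--Song--Vondra\v{c}ek), with $\alpha=1/2$: from $\kappa\in p^{1/2}\,\CBF^{1/2}$ it gets $\kappa(p)^2=p\,f(p)$ with $f\in\CBF$, so $Q=\rho\,p/f\in\CBF$ by Theorem~\ref{thm:CBF2}. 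You instead prove, inline, precisely the special case of that identity that is needed: the sector bound $0\le\arg f(p)\le\arg p$ for CBFs (correctly read off from \eqref{eq:CBFintegral}, since the sector of opening $\arg p<\upi$ is a closed convex cone), applied to $\kappa$ and to $\kappa/p^{1/2}$, traps $\arg\kappa$ between $\tfrac12\arg p$ and $\arg p$, whence $\arg Q=2(\arg p-\arg\kappa)\in[0,\arg p]$ and the Pick--Nevanlinna criterion already quoted in the paper yields $Q\in\CBF$. What your route buys is self-containedness and insight: no external identity beyond what the paper quotes, and a transparent explanation of what the hypothesis $\kappa\in p^{1/2}\,\CBF$ contributes (the lower argument bound that keeps $\arg Q$ below $\upi$). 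What the paper's route buys is brevity and a more general tool, since the identity holds for all $\alpha\in[-1,1]$. Two small points you should tighten: (a) the Pick--Nevanlinna criterion as stated requires $Q$ non-negative and continuous on $\overline{\mathbb{R}_+}$, so note that $Q(0+)$ exists; this is automatic because $\kappa(p)/p$ is non-increasing on $\mathbb{R}_+$ ($\kappa$ being a Bernstein function), so $Q(p)=\rho\,[p/\kappa(p)]^2$ has a finite non-negative limit at $0+$; (b) the non-vanishing of $\kappa$ off the cut is available directly as Corollary~\ref{cor:zerosCBF}, rather than re-deriving it from Theorem~\ref{thm:CBF2}.
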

\begin{proof}
If the function $Q(p) = p^2/\kappa(p)^2$ is a CBF then the function $Q(p)^{1/2}$ is also a CBF and 
the complex wave number function $\kappa(p) = \rho^{1/2}\, p/[Q(p)]^{1/2}$ 
is therefore also a CBF. On the other hand $\kappa(p) = \rho^{1/2}\,p^{1/2}\,[p/Q(p)]^{1/2}$. The third 
factor is the square root of a CBF and hence a CBF itself. Hence $\kappa \in p^{1/2}\, \CBF$
are also CBFs. We have thus proved the "only if" part.

For the converse we shall use the following identity (Theorem~7.11 in 
\cite{BernsteinFunctions}) for $\alpha \in [-1,1]$:
\begin{equation}\label{eq:CBFalpha}
\CBF^\alpha = \CBF \cap p^{\alpha-1}\,  \CBF 
\end{equation}
i.e.
$$\CBF^\alpha = \{ f \in \CBF \mid p^{1-\alpha}\, f(p) \in \CBF\}$$
Hence $p^{1/2}\, \CBF^{1/2} = p^{1/2}\,\CBF \cap \CBF$. 
By our hypothesis $\kappa \in p^{1/2}\, \CBF^{1/2}$, therefore $\kappa(p)^2 = p \,f(p)$,
where $f \in \CBF$. Thus
$Q(p) = p^2/\kappa(p)^2 = p/f(p) \in \CBF$. 
\end{proof}

We have thus proved that the mapping $Q \rightarrow \kappa$ is a bijective mapping 
of $\CBF$ onto the space $\mathcal{K} := \CBF \cap p^{1/2}\, \CBF$. 
This fact implies that the wave number function has another integral representation, viz.  
$\kappa(p) = p^{3/2} \, \int_{]0,\infty[} (p + r)^{-1} \, \lambda(\dd r)$ with $\lambda \in \mathfrak{M}$.

\section{Finite propagation speed.}

Since
$$\Re Q(p) = \int_{[0,\infty[} \frac{\vert p \vert^2 + r \,\Re p}{\vert p + r \vert^2} \, \mu(\dd r),$$
the function $\vert Q(p) \vert$ is a non-decreasing function of $\vert p \vert$ in the right half complex $p$-plane. 
It increases to infinity for $\vert p \vert \rightarrow \infty$ in the right half complex $p$-plane. If the spectral measure 
$\nu$ has infinite mass then 
the real part of the function $\kappa(p) = p/c_\infty + \beta(p)$ increases to infinity in the right half complex $p$-plane
(Theorem~\ref{thm:2finitewavefrontspeed}).

\begin{lemma} \label{lem:finitewsp}
Let $c_\infty < \infty$.

The function  $f(p) := [Q(p) \, \kappa(p)]^{-1}\, \e^{-\beta(p)\, r}$ is analytic in the right half plane and
tends to 0 for $\vert p \vert \rightarrow \infty$ uniformly with respect to $\arg p \in [-\upi/2,\upi/2]$.
\end{lemma}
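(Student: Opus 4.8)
The plan is to use the defining relation \eqref{eq:kp} to remove $Q$ from the expression altogether, after which the whole statement reduces to estimates already supplied by Theorems~\ref{thm:rebeta} and~\ref{thm:betaineq}. Squaring \eqref{eq:kp} gives $Q(p) = \rho\, p^2/\kappa(p)^2$, so that the product in the denominator collapses to a single factor,
$$Q(p)\,\kappa(p) = \frac{\rho\, p^2}{\kappa(p)}, \qquad\text{whence}\qquad f(p) = \frac{\kappa(p)}{\rho\, p^2}\,\e^{-\beta(p)\,r}.$$
This identity, valid wherever $p \neq 0$ and $\kappa(p) \neq 0$, is the one nontrivial step; everything else follows from it.

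For analyticity I would argue directly from the right-hand form. In the open right half-plane $p \neq 0$, the wave number function $\kappa$ is analytic (it is a CBF), and $\beta$ is analytic because its integral representation \eqref{eq:inteRepres1} defines an analytic function off the negative real axis; hence $f$ is analytic in $\mathbb{C}_+$. To license the passage between the two forms of $f$ I would note in passing that neither $Q$ nor $\kappa$ vanishes there: for $\Re p > 0$ the formula for $\Re Q(p)$ recorded at the start of this section is strictly positive, while $\Re \kappa(p) = \Re p/c_\infty + \Re \beta(p) > 0$ by Theorem~\ref{thm:rebeta}.

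The decay then rests on two bounds. First, since $\Re \beta(p) \geq 0$ on $\overline{\mathbb{C}_+}$ (Theorem~\ref{thm:rebeta}) and $r \geq 0$, the exponential factor obeys $\vert \e^{-\beta(p)\,r}\vert = \e^{-r\,\Re\beta(p)} \leq 1$ uniformly in $\arg p$, so that
$$\vert f(p)\vert \leq \frac{1}{\rho\,\vert p\vert}\,\left\vert\frac{\kappa(p)}{p}\right\vert.$$
Second, Theorem~\ref{thm:betaineq} gives $\beta(p)/p \rightarrow 0$ as $\vert p\vert \rightarrow \infty$ uniformly in $\arg p \in [-\upi/2,\upi/2]$, whence $\kappa(p)/p = 1/c_\infty + \beta(p)/p$ is bounded, say by $1/c_\infty + 1$ for all large $\vert p\vert$, uniformly in $\arg p$. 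Multiplying by the vanishing factor $1/(\rho\,\vert p\vert)$ then yields $f(p) \rightarrow 0$ uniformly, as required.

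The only real obstacle is the first step: recognizing that the product $Q\,\kappa$ of two functions with delicate individual behaviour — one tending to the finite limit $G_0$ (by Corollary~\ref{cor:B}, $c_\infty < \infty$ forces $G_0 < \infty$), the other growing like $p/c_\infty$ — simplifies to the transparent quotient $\kappa/(\rho\,p^2)$. Once that is seen, the cancellation of one power of $p$ against $\kappa(p) \sim p/c_\infty$ leaves a net $1/\vert p\vert$ decay, and the hypothesis $c_\infty < \infty$ enters only through the boundedness of $\kappa(p)/p$ at infinity, which Theorem~\ref{thm:betaineq} supplies.
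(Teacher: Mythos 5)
Your proof is correct, and it takes a genuinely different route from the paper's. The paper keeps the product $Q(p)\,\kappa(p)$ intact and bounds each factor from below separately: from the spectral representation of $\Re Q$ it extracts $\vert Q(p)\vert \geq \mu([0,a]) - \varepsilon_1$ for large $\vert p\vert$ (choosing $a$ so that $\mu([0,a])>0$), and from $\kappa(p)/p = B + \beta(p)/p$ together with Theorem~\ref{thm:betaineq} it gets $1/\vert \kappa(p)\vert = \OO\left[1/\vert p\vert\right]$ --- a step that genuinely uses $B = 1/c_\infty > 0$, since it needs $\vert B + \beta(p)/p\vert$ bounded \emph{below}. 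You instead eliminate $Q$ algebraically via $Q(p)\kappa(p) = \rho\, p^2/\kappa(p)$, so the only estimate required is an \emph{upper} bound on $\vert \kappa(p)/p\vert$, which Theorem~\ref{thm:betaineq} supplies at once; the lower bound on $\vert Q\vert$, and with it the measure-theoretic argument, disappears entirely. Both routes yield the same $\OO\left[1/\vert p\vert\right]$ decay, and both handle the exponential factor identically through Theorem~\ref{thm:rebeta}. What your route buys: it is shorter, and it in fact never uses $c_\infty < \infty$ --- when $B = 0$ one still has $\kappa(p)/p = \beta(p)/p \rightarrow 0$, so your bound $\vert f(p)\vert \leq \vert \kappa(p)/p\vert/(\rho \vert p\vert)$ tends to $0$ uniformly regardless; the hypothesis matters only for the finite-propagation-speed theorem that the lemma feeds into. (This also means your closing sentence, which locates the role of $c_\infty < \infty$ in the boundedness of $\kappa(p)/p$, is slightly off: that boundedness holds unconditionally, whereas it is the paper's lower-bound argument that truly needs $B>0$.) What the paper's route buys is an explicit lower bound on $\vert Q(p)\vert$ at infinity, a fact of some independent interest but not needed for the lemma itself.
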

\begin{proof}
By Theorem~\ref{thm:rebeta} 
$\vert f(p)\vert \leq \vert 1/[Q(p) \, \kappa(p)] \vert$ in $\mathbb{C}^+$.
If $\Re p \geq 0$ then
$$\Re Q(p) = \int_{[0,\infty[} \frac{\vert p \vert^2 + r \overline{p}}{\vert p + r\vert^2} \mu(\dd r) 
\geq \Re \vert p \vert^2 \int_{[0,\infty[} \frac{\mu(\dd r)}{\vert p + r\vert^2} \geq \vert p \vert^2 
\int_{[0,\infty[} \frac{\mu(\dd r)}{(\vert p\vert + r)^2}$$
in view of the inequality $\vert p + r\,\vert \leq r + \vert p \vert$.
Let $a > 0$ be sufficiently large so that $\mu([0,a]) > 0$. The integrand of the last integral is non-negative, hence 
$$\Re Q(p) \geq \vert p \vert^2 \int_{[0,a]} \frac{\mu(\dd r)}{(\vert p\vert + a)^2} = 
\frac{\vert p \vert^2}{(\vert p\vert + a)^2} \mu([0,a])$$
For an arbitrary positive  $\varepsilon_1 < \mu([0,a])$ a sufficiently large $R_1$ can be found so that  
$\Re Q(p) > \mu([0,a]) - \varepsilon_1$ 
and $\vert Q(p) \vert > \mu([0,a]) - \varepsilon_1$ for $\vert p \vert > R_1$.

In view of the inequality $1/\vert \kappa(p)\vert = 1/(\vert p\vert \, \vert B + \beta(p)/p\vert)$ Theorem~\ref{thm:betaineq} 
implies that for every positive $\varepsilon < 1$ there is a positive $R$ such that 
$1/\vert \kappa(p) \vert < [1/(1 - \varepsilon)]/ \vert p \vert$ for all $p$ with $\vert p \vert > R$ 
and $-\upi/2 \leq \arg p \leq \upi/2$.
\end{proof}

\begin{theorem}
If $c_\infty < \infty$, then the Green's function \eqref{eq:Green1D} vanishes for $t < x/c_\infty$.
\end{theorem}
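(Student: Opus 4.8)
The plan is to read eq.~\eqref{eq:Green1D} as an inverse Laplace transform and to establish vanishing for $t < x/c_\infty$ by shifting the Bromwich contour to the right, into the region where the integrand is both analytic and decaying. Writing $r := \vert x \vert > 0$ and inserting $\kappa(p) = p/c_\infty + \beta(p)$ from eq.~\eqref{eq:inteRepres}, I would first factor the phase as
\begin{equation*}
p\,t - \kappa(p)\, r = p\left(t - \frac{r}{c_\infty}\right) - \beta(p)\, r .
\end{equation*}
Setting $\tau := t - r/c_\infty$, the Green's function becomes
\begin{equation*}
u^{(1)}(t,x) = \frac{\rho}{4 \upi \ii} \int_{\varepsilon - \ii \infty}^{\varepsilon + \ii \infty} \e^{p \tau}\, f(p)\, \dd p, \qquad f(p) := \frac{\e^{-\beta(p)\, r}}{Q(p)\, \kappa(p)},
\end{equation*}
where $f$ is exactly the function treated in Lemma~\ref{lem:finitewsp}. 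The hypothesis $t < x/c_\infty$ is the statement $\tau < 0$, which makes $\e^{p\tau}$ decay as $\Re p \to +\infty$; this is what will let us close to the right.

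The core step is to close the vertical contour $\Re p = \varepsilon$ with a large semicircular arc
$$C_R := \{\, \varepsilon + R\,\e^{\ii \varphi} : -\upi/2 \leq \varphi \leq \upi/2 \,\}$$
lying in $\Re p > \varepsilon$, and to let $R \to \infty$. By Lemma~\ref{lem:finitewsp} the function $f$ is analytic throughout the right half-plane, so it has no poles inside the closed contour, and Cauchy's theorem forces the total contour integral to vanish. It therefore suffices to show that the arc contribution tends to $0$. Writing $M(R) := \sup_{C_R} \vert f(p)\vert$, Lemma~\ref{lem:finitewsp} gives $M(R) \to 0$, and on $C_R$ one has $\vert \e^{p\tau}\vert = \e^{\tau(\varepsilon + R\cos\varphi)} \leq \e^{\tau R \cos\varphi}$ because $\tau < 0$ and $\varepsilon > 0$. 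Using the Jordan inequality $\sin\theta \geq 2\theta/\upi$ on $[0,\upi/2]$ gives
\begin{equation*}
\left\vert \int_{C_R} \e^{p\tau} f(p)\, \dd p \right\vert \leq M(R)\, R \int_{-\upi/2}^{\upi/2} \e^{\tau R \cos\varphi}\, \dd\varphi \leq M(R)\,\frac{\upi}{\vert \tau\vert},
\end{equation*}
which converges to $0$ as $R \to \infty$ since $M(R) \to 0$. Hence the Bromwich integral, and with it $u^{(1)}(t,x)$, equals $0$ whenever $t < x/c_\infty$.

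The main obstacle is precisely this arc estimate, and it splits into two pieces that are each already available. The uniform decay $M(R) \to 0$ is indispensable and is exactly the content of Lemma~\ref{lem:finitewsp}; the bounded factor $\e^{-\beta(p) r}$, which by Theorem~\ref{thm:rebeta} satisfies $\vert \e^{-\beta(p) r}\vert \leq 1$ on $\overline{\mathbb{C}_+}$, only helps and never causes growth. The delicate directions are the near-imaginary ones $\varphi \to \pm\upi/2$, where $\cos\varphi \to 0$ and $\e^{p\tau}$ no longer decays pointwise; these are controlled solely by the Jordan inequality, which bounds $R\int_{-\upi/2}^{\upi/2} \e^{\tau R\cos\varphi}\,\dd\varphi$ uniformly in $R$. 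I would also record two technical points before calling the argument complete: the finite horizontal offset $\varepsilon$ of the arc's centre is immaterial in the limit $R \to \infty$ (it contributes only the bounded factor $\e^{\tau\varepsilon} \leq 1$ absorbed above), and the improper Bromwich integral should be interpreted as the limit of integrals over truncated segments $[\varepsilon - \ii R, \varepsilon + \ii R]$ so that the contour-closing is rigorous. Finally, the strict inequality $\tau < 0$ is essential; the wavefront $t = x/c_\infty$ itself is not covered by this argument and would require a separate analysis.
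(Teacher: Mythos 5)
Your proof is correct and takes essentially the same route as the paper: both factor the phase as $p\,(t - x/c_\infty) - \beta(p)\,x$, invoke Lemma~\ref{lem:finitewsp} for the analyticity and uniform decay of $f(p) = [Q(p)\,\kappa(p)]^{-1}\,\e^{-\beta(p)\,x}$ in the right half-plane, and close the Bromwich contour to the right, killing the arc contribution by Jordan's lemma and the closed contour by Cauchy's theorem. Your write-up is in fact slightly more careful than the paper's (your arc centred at $\varepsilon$ genuinely joins the truncated Bromwich segment, and the Jordan estimate is spelled out rather than merely cited), but the underlying argument is identical.
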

\begin{proof}
Let $t < x/c_\infty$.

The Green's function is given by an expression of the form 
$u^{(1)}(t,x) = A \int_\mathcal{B} f(p) \e^{p\, (t - x/c_\infty)} \dd p$, where 
$\mathcal{B}$ is the Bromwich contour running from $\eta - \ii \infty$ to $\eta + \ii \infty$, $\eta > 0$, parallel to the imaginary axis, $A$ is a constant and the
function $f$ is defined in Lemma~\ref{lem:finitewsp}. Consider the complex contour $\mathcal{C}_r: \; p = r\, \e^{\ii \varphi}$,
with $\varphi$ running from $\upi/2$ to $-\upi/2$ and a fixed $r > 0$. Lemma~\ref{lem:finitewsp} and Jordan's lemma imply 
that the integral $A \int_{\mathcal{C}_r} f(p) \exp(p\, (t - x/c_\infty))\, \dd p$ tends to 0 as $r \rightarrow 0$. 
Let $\mathcal{B}_r$ be the straight line contour running from $\eta - \ii r$ to $\eta + \ii r$.
By the Cauchy 
theorem the integral of $f(p) \, \e^{p (t - x/c_\infty)}$ over the contour $\mathcal{B}_r + \mathcal{C}_r$ vanishes.
Hence, taking the limit $r \rightarrow \infty$, the integral over $\mathcal{B}$ vanishes.

Consequently $u^{(1)}(t,x) = 0$ for $t < x/c_\infty$.
\end{proof}

\begin{corollary}
If $c_\infty < \infty$ then $u^{(3)}(t,\x)$ vanishes for $t < \vert \x \vert/c_\infty$.
\end{corollary}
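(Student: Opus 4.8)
The plan is to deduce the three-dimensional result directly from the one-dimensional theorem just proved, by means of the reduction formula \eqref{eq:1Dto3D}. That formula expresses $u^{(3)}(t,\x)$ as a constant multiple of the radial derivative $\partial u^{(1)}/\partial r$ evaluated at $r = \vert \x \vert$, so it suffices to control $u^{(1)}(t,r)$ and its $r$-derivative near $r = \vert \x \vert$.

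First I would restate the preceding theorem in the form convenient here: for each fixed $t$, the one-dimensional Green's function $r \mapsto u^{(1)}(t,r)$ vanishes identically on the open half-line $]c_\infty t, \infty[\,$, since the condition $u^{(1)}(t,r) = 0$ guaranteed by that theorem, namely $t < r/c_\infty$, is exactly $r > c_\infty t$. Next I would observe that a function that is identically zero on an open interval has vanishing derivative throughout that interval, so that $(\partial u^{(1)}/\partial r)(t,r) = 0$ for every $r > c_\infty t$. Evaluating \eqref{eq:1Dto3D} at $r = \vert \x \vert$ with $\vert \x \vert > c_\infty t$ then yields $u^{(3)}(t,\x) = 0$, and the condition $\vert \x \vert > c_\infty t$ is precisely $t < \vert \x \vert/c_\infty$.

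The only point requiring any care, and the closest thing to an obstacle, is the passage from the vanishing of $u^{(1)}$ to the vanishing of its radial derivative. This is immediate because the set $\{ r > c_\infty t \}$ is open: at every interior point the derivative, whether read classically at points where $u^{(1)}$ is differentiable or in the distributional sense, is forced to vanish. Since the corollary asks only for the strict inequality $t < \vert \x \vert/c_\infty$, the evaluation radius $r = \vert \x \vert$ lies strictly inside this open set and never on the wavefront $r = c_\infty t$ itself, so no analysis of the delicate behaviour of the derivative at the wavefront is needed.
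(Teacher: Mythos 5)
Your proposal is correct and follows exactly the paper's own route: the paper proves this corollary in one line by invoking eq.~\eqref{eq:1Dto3D}, and your argument simply makes explicit the (correct) intermediate step that $u^{(1)}(t,\cdot)$ vanishing on the open set $\{r > c_\infty t\}$ forces its radial derivative to vanish there as well. No gap; your write-up is just a fleshed-out version of the paper's proof.
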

\begin{proof}
The thesis follows from eq.~\eqref{eq:1Dto3D}.
\end{proof}

\section{Dependence of wavefront smoothing on the spectral density.}

A frequent feature of wave propagation in real viscoelastic media is wavefront smoothing 
\cite{ReHrNo:VE,DeschGrimmer89b,HanQAM,HanMorro,SerHanCortona}. Using a terminology often adopted in mechanics, 
many linear viscoelastic media do not allow non-trivial discontinuity waves. It will now be shown
that absence or presence of non-trivial discontinuity waves depends on a property of the
dispersion-attenuation spectral measure. Wavefront smoothing is ultimately due to the singularity of
the derivative of the relaxation modulus at 0. We shall now relate it to the asymptotic properties of 
the attenuation-dispersion spectral measure.

If all the moments of the spectral density function $h$  
$$a_n := \int_0^\infty r^n \, h(r) \, \dd r$$
are finite, then 
\begin{equation} \label{eq:betaasymp1}
\beta(p) = p \int_0^\infty \frac{h(r)\, \dd r}{p + r} \sim_\infty \sum_{n=0}^\infty (-1)^n \,a_n \, p^{-n} 
\end{equation}
\cite{Wong,McClureWong78}. The dominating term is a positive constant $a_0$.
A special case is a finite bandwidth spectral density such as $h(r) = K \,\chi_{[a,b]}(r)$, $K > 0$, $0 \leq a < b < \infty$.
In this case 
$\beta(p) = K \, \ln[(p + b)/(p + a)] \sim_\infty K\, (b - a) + \OO\left[p^{-1}\right]$.

If the function $h$ decays at an algebraic rate then some higher order moments are infinite and \eqref{eq:betaasymp1} 
does not hold. Assuming the asymptotic expansion of the spectral density 
\begin{equation}
h(r) \sim_\infty \sum_{n=0}^\infty b_n \, r^{- n - \alpha}, \qquad 0 < \alpha < 1
\end{equation} 
the dissipation-attenuation function has the following asymptotic expansion at infinity \cite{McClureWong78}
\begin{equation}
\beta(p) \sim_\infty \frac{\upi}{\sin(\upi \alpha)}  \sum_{n=0}^{N-1} (-1)^n\, \left[b_n \,p^{1-n-\alpha} 
- n \,c_n \, p^{-n} \right] + R_N
\end{equation}
where
$$c_n := \int_0^\infty f_n(r)\, r^{n - 1} \, \dd t, \quad f_n(r) := h(r) - \sum_{k=0}^n b_n \, r^{-k-\alpha}$$
$$R_N := \frac{(-1)^N}{p^{N-1}} \int_0^\infty \frac{r^n \, f_n(r)}{p + r} \dd r$$
The asymptotic expansion of $\beta$ is valid in the entire cut complex plane, $\vert \arg p \vert < \upi$
\cite{McClureWong78}.  
The first term of the expansion of the dissipation-attenuation function $\beta$ is now $b_0 \, p^{1 - \alpha}$
while the attenuation function 
$\mathcal{A}(\omega) := \Re \beta(-\ii \omega) \sim_\infty b_0 \, \sin(\upi \alpha) \,\omega^{1-\alpha}$.  

Consequently, if the spectral density $h$ decays algebraically at infinity, then 
$\vert \exp( -\ii \omega (t - x/c_\infty) - \beta(-\ii \omega)\, \vert x \vert)\vert  \leq \vert  \exp(-\mathcal{A}(\omega) \,\vert x \vert) \vert$
vanishes asymptotically like $\exp\left(- b_0 \, \sin(\upi \alpha) \,\omega^{1-\alpha} \, r\right)$. Hence the integral 
\eqref{eq:Green1D} is absolutely convergent and therefore the function $u^{(1)}$ is continuous. Furthermore, 
the derivatives of $\D^n_t\, 
\D^m_x\, u^{(1)}(t,x)$ are also given by absolutely convergent integrals. Consequently $u^{(1)}(t,x)$ is a smooth
function of both arguments in $\mathbb{R}_+ \times \mathbb{R}_+$. In particular it has continuous derivatives of
arbitrary order at the wavefront $t = x/c_\infty$. Thus the wavefront does not carry any discontinuity of the Green's function
nor its derivatives. Since the Green's function vanishes for $t < x/c_\infty$, it gradually decays to 0 with 
all its derivatives.

Eq.~\eqref{eq:1Dto3D} implies that $u^{(3)}(t,\x)$ is also a smooth function of $(t,\x) \in \mathbb{R_+} \times 
(\mathbb{R}^3\setminus \{ 0 \})$. 

The case of a strongly singular relaxation modulus is analyzed in Theorem~\ref{thm:strongsing}. 
\begin{theorem} \label{thm:strongsing}
If $G(t) = \int_{[0,\infty[} \e^{- t r} \, \mu(\dd r)$ with $\mu \in \mathfrak{M}$ and 
$\mu([0,r]) \sim_\infty r^\alpha \, l(r)$, where $0 < \alpha < 1$ and $l$ is slowly varying at infinity, 
then $G_0 = \infty$, $c_\infty = \infty$, $Q(p) \sim_\infty c_\alpha\, p^\alpha \, l(p)$,
$\kappa(p) \sim_\infty \rho^{1/2} \, c_\alpha^{\;-1/2} \,p^\gamma \, l(p)^{-1/2}$, $1/2 < \gamma < 1$  and
$\nu([0,r]) \sim_\infty \rho^{1/2} \, c_\alpha^{\;-1/2} \,r^\gamma \, l(p)^{-1/2}$,
where $\gamma := 1 - \alpha/2$, $c_\alpha := \upi\alpha/\sin(\upi\alpha)$. 
\end{theorem}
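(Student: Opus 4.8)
The plan is to separate the two qualitative limits, which are elementary, from the three asymptotic equivalences, which I would derive from the Abelian and Tauberian theory of the Stieltjes transform of a regularly varying measure. The hypothesis says precisely that $M(r):=\mu([0,r])$ is regularly varying of index $\alpha\in(0,1)$. Since $M(r)\to\infty$, the total mass is infinite, so by monotone convergence $G_0=\lim_{t\to0+}G(t)=\mu([0,\infty[)=\infty$; Corollary~\ref{cor:B} then forces $B=0$ and hence $c_\infty=1/B=\infty$.

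For $Q(p)=p\,\tilde G(p)=p\int_{[0,\infty[}(p+r)^{-1}\,\mu(\dd r)$ I would pass from the tail of $M$ to that of the Stieltjes transform along the real axis. Integration by parts gives $\tilde G(p)=\int_0^\infty M(r)\,(p+r)^{-2}\,\dd r$ up to boundary terms of lower order than $p^{\alpha-1}$, and the substitution $r=ps$ together with the uniform convergence theorem for slowly varying functions yields
\[ \tilde G(p)\sim_\infty p^{\alpha-1}\,l(p)\int_0^\infty \frac{s^\alpha}{(1+s)^2}\,\dd s. \]
The Beta integral equals $\Gamma(1+\alpha)\,\Gamma(1-\alpha)=\upi\alpha/\sin(\upi\alpha)=c_\alpha$, so $Q(p)\sim_\infty c_\alpha\,p^\alpha\,l(p)$ (equivalently, one may route this through Karamata's Tauberian theorem, first obtaining $G(t)\sim_0\Gamma(1+\alpha)\,t^{-\alpha}\,l(1/t)$ and then Laplace-transforming). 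Inserting this into eq.~\eqref{eq:kp} is pure algebra: $\kappa(p)=\rho^{1/2}p/Q(p)^{1/2}\sim_\infty\rho^{1/2}\,c_\alpha^{-1/2}\,p^{1-\alpha/2}\,l(p)^{-1/2}$, and $\gamma:=1-\alpha/2\in(1/2,1)$ because $0<\alpha<1$, with $l^{-1/2}$ again slowly varying.

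The tail of $\nu$ is the genuine obstacle, since it runs in the \emph{Tauberian} direction. Because $B=0$, eq.~\eqref{eq:inteRepres} gives $\kappa=\beta$, so by eq.~\eqref{eq:inteRepres1}
\[ \frac{\kappa(p)}{p}=\int_{]0,\infty[}\frac{\nu(\dd r)}{p+r}\sim_\infty\rho^{1/2}\,c_\alpha^{-1/2}\,p^{\gamma-1}\,l(p)^{-1/2}, \]
a regularly varying function of index $\gamma-1\in(-1/2,0)$. I would invert this Stieltjes-transform asymptotic to recover the tail of the nondecreasing function $N(r):=\nu([0,r])$. Monotonicity of $N$, automatic since $\nu\geq0$, is exactly the Tauberian side-condition, so the Tauberian theorem for the Stieltjes transform applies and returns $N(r)\sim_\infty C\,r^\gamma\,l(r)^{-1/2}$. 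The constant is fixed by running the second paragraph in reverse with $\alpha$ replaced by $\gamma$: a measure with $N(r)\sim_\infty C\,r^\gamma\,l(r)^{-1/2}$ satisfies $\int_{]0,\infty[}(p+r)^{-1}\,\nu(\dd r)\sim_\infty C\,(\upi\gamma/\sin(\upi\gamma))\,p^{\gamma-1}\,l(p)^{-1/2}$, so matching leading coefficients determines $C$. The only real work lies here: verifying the Tauberian hypotheses and confirming that the slowly varying factor $l^{-1/2}$ transfers correctly through the inversion; everything else is bookkeeping.
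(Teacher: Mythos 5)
Your overall route is exactly the one the paper intends: the paper prints no separate proof of Theorem~\ref{thm:strongsing}, and the machinery it supplies is precisely what you invoke --- monotone convergence plus Corollary~\ref{cor:B} for $G_0=\infty$ and $c_\infty=\infty$; the Abelian direction of Valiron's Theorem (Theorem~\ref{thm:Valiron}, stated in the appendix as an equivalence) or Karamata's theorem for $Q(p)\sim_\infty c_\alpha\, p^\alpha\, l(p)$; algebra in eq.~\eqref{eq:kp} for $\kappa$; and the Tauberian direction of the same Valiron equivalence for $\nu([0,r])$, with monotonicity of $\nu([0,r])$ as the side condition, exactly as you say. Your by-hand derivation of the Abelian step (integration by parts plus the Beta integral $\Gamma(1+\alpha)\Gamma(1-\alpha)=c_\alpha$) is correct but merely re-proves what Theorem~\ref{thm:Valiron} already gives you.

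The genuine defect is that you stop precisely where the content lies: you never evaluate the constant $C$, yet the theorem asserts an explicit constant. Running your own matching recipe to the end: if $\nu([0,r])\sim_\infty C\, r^\gamma\, l(r)^{-1/2}$, then
\[
\int_{]0,\infty[}\frac{\nu(\dd r)}{p+r}\;\sim_\infty\; C\,\frac{\upi\gamma}{\sin(\upi\gamma)}\,p^{\gamma-1}\,l(p)^{-1/2},
\]
and equating this with $\kappa(p)/p\sim_\infty\rho^{1/2}c_\alpha^{\;-1/2}\,p^{\gamma-1}\,l(p)^{-1/2}$ forces
\[
C=\frac{\sin(\upi\gamma)}{\upi\gamma}\,\rho^{1/2}c_\alpha^{\;-1/2}
=\frac{\sin(\upi\alpha/2)}{\upi(1-\alpha/2)}\,\rho^{1/2}c_\alpha^{\;-1/2},
\]
which is \emph{not} the constant $\rho^{1/2}c_\alpha^{\;-1/2}$ displayed in the theorem. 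So your proof, once completed, does not deliver the statement as printed; it delivers a corrected statement in which the formula for $\nu([0,r])$ carries the extra factor $\sin(\upi\gamma)/(\upi\gamma)$ (and in which $l(p)^{-1/2}$ reads $l(r)^{-1/2}$). That the slip is in the paper's displayed constant rather than in your method is corroborated by the paper's own weakly singular computation immediately after the theorem, where the analogous factor $\sin(\upi\alpha)/[\upi(1-\alpha)]$ does appear in the tail of $\nu$. You should have carried out the evaluation and flagged the discrepancy; leaving ``matching determines $C$'' unevaluated hides the fact that the claimed constant fails.
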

If $\mu([0,r]) \sim_\infty r^\alpha \, l(r)$, $0 < \alpha < 1$, then, by the Karamata Abelian Theorem
(Theorem~\ref{thm:KaramataAbelianInfinity}),
$G(t) \sim_0 t^{-\alpha} \, l(1/t)/\Gamma(1-\alpha)$. In this case the Green's function is a complex analytic 
function of $(t,x)$ in a neighborhood of $\mathbb{R}_+ \times \mathbb{R}$ \cite{ReHrNo:VE}.
(This follows from a more general fact that the Green's function is analytic except at the wavefront; in the strongly 
singular case the Green's function does not have a wavefront). Analyticity implies that the Green's function cannot
vanish on any open subset of $\mathbb{R}_+ \times \mathbb{R}$ unless it is identically zero.  

The case of weakly singular relaxation modulus is somewhat more complicated. The relaxation modulus is weakly singular 
if $G$ is bounded and $G^\prime(t) \sim -b \, t^{-\alpha}\, l(t)$, $b > 0$, $0 < \alpha < 1$. In this case
$l(r) := \mu(]0,r[) \sim_\infty \mu(]0,\infty[) - a \, r^{-\alpha}$, with $a = b\, \Gamma(1-\alpha)$,
 $G_0 = \mu(\{ 0 \}) < \infty$, $G_0 - G_\infty = \mu(]0,\infty[)$. $l$ is a function of slow variation at infinity
vanishing at 0. By Valiron's Theorem 
$$Q(p) = G_\infty + p \int_{]0,\infty[} \frac{\mu(\dd r)}{p +  r}\sim_\infty G_\infty + \mu(]0,\infty[) - a \, p^{-\alpha}
= G_0 - a \, p^{-\alpha}$$
Hence
$$
\kappa(p) \sim_\infty \left(\frac{\rho}{G_0}\right)^{1/2} \frac{p}{\left( 1 - a p^{-\alpha}/G_0\right)^{1/2}} 
$$
This implies that $\beta(p) \sim_\infty [a/(2 G_0 \, c_\infty)]\, p^\gamma$ 
where $\gamma := 1 -\alpha \in\; ]0,1[$, and $$\nu([0,r]) \sim_\infty 
\frac{a}{2 G_0\, c_\infty} \frac{\sin(\upi \alpha)}{\upi (1 - \alpha)} r^{1 - \alpha}$$ The dissipation-attenuation spectral density has an 
infinite bandwidth and an algebraic decay at infinity and the Green's functions are infinitely smooth at 
the wavefront $x = c_\infty\, t$.

Summarizing, 
\begin{itemize}[(i)]
\item if $\nu(\dd r) = h(r) \, \dd r$ and all the moments of the spectral density $h$ are finite, then the 
Green's function can have discontinuities
at the wavefront;
\item if $\nu(\dd r) = h(r) \, \dd r$ and the spectral density function $h$ decays at an algebraic rate then 
the wavefront does not carry any discontinuity of the Green's function nor its derivatives of arbitrary order.
\end{itemize}

\section{Attenuation and dispersion functions.}

Define the attenuation function $\mathcal{A}$ and the dispersion function
$\mathcal{D}$ by the equations
\begin{gather}
\mathcal{A}(\omega) := \Re \beta(-\ii \omega) \equiv \Re \kappa(-\ii \omega) = \omega^2 \int_{]0,\infty[\;} 
\frac{\nu(\dd r)}{\omega^2 + r^2} \label{eq:frAtt}\\
\mathcal{D}(\omega) := -\Im \beta(-\ii \omega) = \omega \int_{]0,\infty[\;} 
\frac{r\, \nu(\dd r)}{\omega^2 + r^2} \label{eq:frDisp}
\end{gather}

By Theorem~\ref{thm:betaineq} 
\begin{equation} \label{eq:attdispassymp}
\mathcal{A}(\omega) = \oo_\infty[\omega], \qquad \mathcal{D}(\omega) = \oo_\infty[\omega] 
\end{equation}

Note that $\mathcal{A}(\omega) \geq 0$ and $\sgn\, \mathcal{D}(\omega) = \sgn\, \omega$. 
Recalling the Green's function \eqref{eq:Green1D} and the definition of
$\kappa(p)$, this implies an outgoing sense of propagation and a non-negative attenuation along each
radial direction. 

Since $x/(1 + x)$ is an increasing function, eq.~\eqref{eq:frAtt} implies that the
attenuation function $\mathcal{A}$ is non-decreasing.

The phase speed 
$c(\omega) := -\omega/\Im \kappa(-\ii \omega) = 1/(1/c_\infty - \Im \beta(-\ii \omega)/\omega)$ 
is related to the dispersion function by the equation 
\begin{equation} \label{eq:c0}
\frac{1}{c(\omega)} = \frac{1}{c_\infty} + \frac{\mathcal{D}(\omega)}{\omega}
\end{equation}
Note that $c(\omega) \leq c_\infty$ because $\mathcal{D}(\omega) \geq 0$. Moreover
$\mathcal{D}(\omega)/\omega = \int_{]0,\infty[} r \, \nu(\dd r)/(\omega^2 + r^2)$ is
a non-increasing function of $\omega$. Therefore the phase speed $c(\omega)$ is a non-decreasing
function of frequency.

Eq.~\eqref{eq:c0} and eq.~\eqref{eq:attdispassymp} imply that 
\begin{equation} \label{eq:phasespeedlimit}
\lim_{\omega \rightarrow \infty} c(\omega) = c_\infty
\end{equation}

If $Q(0) = G_\infty > 0$ then, in view of eq.~\eqref{eq:kp}, 
$\lim_{p\rightarrow 0} \kappa(p)/p = (\rho/G_\infty)^{1/2}$ for all $p \in \mathbb{C}$. 
The inverse of this limit will be denoted by the symbol $c_0$. 
In particular, for $p = -\ii \omega$, $\omega \in \mathbb{R}$,  this proves the following theorem:
\begin{theorem} \label{thm:yyy} If $G_\infty > 0$ then 
$$\lim_{\omega\rightarrow 0} \mathcal{A}(\omega)/\omega = 0; \quad D := \lim_{\omega\rightarrow 0} 
\mathcal{D}(\omega)/\omega < \infty$$
and $1/c_0 = \lim_{\omega\rightarrow 0} [1/c(\omega)] = 1/c_\infty + D$. 
\end{theorem}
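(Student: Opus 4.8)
The plan is to deduce all three assertions from the single complex limit $\lim_{p\to 0}\kappa(p)/p=(\rho/G_\infty)^{1/2}$ recorded just above the statement, which holds along every path in the cut plane precisely because $G_\infty>0$ forces $Q(0)=G_\infty>0$. Writing $\kappa(p)=p/c_\infty+\beta(p)$ and dividing by $p$ gives
$$\lim_{p\to 0}\frac{\beta(p)}{p}=\frac{1}{c_0}-\frac{1}{c_\infty},$$
a finite real number. I would then specialize $p=-\ii\omega$, decompose $\beta(-\ii\omega)=\mathcal{A}(\omega)-\ii\,\mathcal{D}(\omega)$ via the definitions \eqref{eq:frAtt}--\eqref{eq:frDisp}, and compute
$$\frac{\beta(-\ii\omega)}{-\ii\omega}=\frac{\mathcal{D}(\omega)}{\omega}+\ii\,\frac{\mathcal{A}(\omega)}{\omega}.$$
Since the left side tends to the real number $1/c_0-1/c_\infty$, equating imaginary parts yields $\lim_{\omega\to 0}\mathcal{A}(\omega)/\omega=0$ and equating real parts yields $D=\lim_{\omega\to 0}\mathcal{D}(\omega)/\omega=1/c_0-1/c_\infty$, which is exactly the claimed relation $1/c_0=1/c_\infty+D$.

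The one point needing genuine justification is the existence and finiteness of $D$, and this is where the hypothesis $G_\infty>0$ does the real work. I would first run the argument along the positive real axis: for $p=s>0$,
$$\frac{\beta(s)}{s}=\int_{]0,\infty[}\frac{\nu(\dd r)}{s+r},$$
and as $s\downarrow 0$ the integrand increases monotonically to $1/r$, so by monotone convergence $\beta(s)/s$ increases to $\int_{]0,\infty[}\nu(\dd r)/r$. Because the limit on the left is the finite number $1/c_0-1/c_\infty$, this establishes $\int_{]0,\infty[}\nu(\dd r)/r<\infty$. I stress that this finiteness does \emph{not} follow from $\nu\in\mathfrak{M}$ alone: that condition controls only $\int_{]1,\infty[}\nu(\dd r)/r$ together with $\int_{]0,1]}\nu(\dd r)$ (Remark~\ref{rem:1}), and leaves $\int_{]0,1]}\nu(\dd r)/r$ possibly infinite; it is the positivity of $G_\infty$, through the finite limit of $\kappa(p)/p$, that rules this out.

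With $\int_{]0,\infty[}\nu(\dd r)/r<\infty$ secured, the two frequency limits can alternatively be read straight off the spectral integrals. In $\mathcal{D}(\omega)/\omega=\int_{]0,\infty[} r\,\nu(\dd r)/(\omega^2+r^2)$ the integrand increases to $1/r$ as $\omega\downarrow 0$, so monotone convergence gives $D=\int_{]0,\infty[}\nu(\dd r)/r<\infty$; in $\mathcal{A}(\omega)/\omega=\int_{]0,\infty[}\omega\,\nu(\dd r)/(\omega^2+r^2)$ the integrand tends to $0$ pointwise and is dominated by the $\nu$-integrable function $1/(2r)$ (from $\omega^2+r^2\ge 2\omega r$), whence dominated convergence gives limit $0$. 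The main obstacle is thus isolating the integrability $\int_{]0,\infty[}\nu(\dd r)/r<\infty$; once that is in place the remaining passages to the limit are routine, and the identity $1/c_0=1/c_\infty+D$ is immediate either from the complex-limit decomposition above or directly from the phase-speed relation \eqref{eq:c0}.
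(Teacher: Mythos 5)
Your proof is correct and is essentially the paper's own: the paper likewise deduces the theorem by specializing the limit $\lim_{p\rightarrow 0}\kappa(p)/p=(\rho/G_\infty)^{1/2}$ to $p=-\ii\omega$ and separating real and imaginary parts, which immediately forces $\mathcal{A}(\omega)/\omega\rightarrow 0$ and $\mathcal{D}(\omega)/\omega\rightarrow 1/c_0-1/c_\infty=D<\infty$. Your additional paragraphs establishing $\int_{]0,\infty[}\nu(\dd r)/r<\infty$ are sound but not needed for the theorem itself (convergence of the complex quotient already yields a finite $D$); they in effect reprove the corollary that the paper states immediately after the theorem, where the same monotone-convergence/Fatou and dominated-convergence argument is run on $\mathcal{D}(\omega)/\omega$ along the imaginary axis rather than on $\beta(s)/s$ along the real axis.
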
 
Note that $D \geq 0$ and $D > 0$ unless $G(t) = \const$. Under the hypotheses of
Theorem~\ref{thm:yyy} the phase speed satisfies the inequality $c_0 < c(\omega) < c_\infty$, while
$0 < c(\omega) < c_\infty$ in the opposite case.

\begin{corollary}
$G_\infty > 0$ entails that
\begin{equation} \label{eq:xx}
D = \int_{[0,\infty[} \frac{\nu(\dd r)}{r} < \infty
\end{equation}
\end{corollary}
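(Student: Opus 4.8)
The plan is to read the constant $D$ directly off the integral representation \eqref{eq:frDisp} of the dispersion function and then identify the limit $\omega \to 0$ with the integral in \eqref{eq:xx} by a monotone passage to the limit. First I would recall that, since $\kappa(0)=0$, the spectral measure $\nu$ in \eqref{eq:inteRepres1} carries no mass at the origin, so the integral over $[0,\infty[$ in \eqref{eq:xx} coincides with an integral over $]0,\infty[$ and the point $r = 0$ never appears in a denominator. Dividing \eqref{eq:frDisp} by $\omega$ gives the clean expression
$$\frac{\mathcal{D}(\omega)}{\omega} = \int_{]0,\infty[} \frac{r}{\omega^2 + r^2}\, \nu(\dd r),$$
which is exactly the quantity whose $\omega \to 0$ limit defines $D$ in Theorem~\ref{thm:yyy}.

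Next I would establish the pointwise monotone behaviour of the integrand. For each fixed $r > 0$ the map $\omega \mapsto r/(\omega^2 + r^2)$ is increasing as $\omega$ decreases to $0$, since $\omega^2 + r^2$ is then decreasing, and it converges to $1/r$. This is the decisive structural observation: the family of integrands is nonnegative and increases monotonically to the pointwise limit $1/r$ as $\omega \downarrow 0$. Applying the Monotone Convergence Theorem along any sequence $\omega_n \downarrow 0$ (the full limit then existing because the family is monotone in $\omega$) yields
$$\lim_{\omega \to 0}\frac{\mathcal{D}(\omega)}{\omega} = \int_{]0,\infty[} \frac{\nu(\dd r)}{r}.$$

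Finally I would invoke Theorem~\ref{thm:yyy}: under the hypothesis $G_\infty > 0$ the left-hand limit equals $D$ and is finite. Hence the right-hand integral is finite and equals $D$, which is precisely \eqref{eq:xx}. I do not expect a genuine obstacle here; the only point requiring care is the justification of the interchange of limit and integration, and this is supplied cleanly by the monotonicity of the integrand rather than by a dominated-convergence argument (which would in any case require the finiteness one is trying to prove). The essential work has already been done in Theorem~\ref{thm:yyy}, where $G_\infty > 0$ guarantees that $\kappa(p)/p$ has a finite limit at the origin and therefore that $D < \infty$; the present corollary merely makes the value of that limit explicit as a moment of the dispersion-attenuation spectral measure.
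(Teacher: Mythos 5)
Your proof is correct and takes essentially the same route as the paper: both start from $\mathcal{D}(\omega)/\omega$ written out via \eqref{eq:frDisp}, exploit the monotone increase of the integrand to $1/r$ as $\omega \downarrow 0$, and then invoke the finiteness of $D$ guaranteed by Theorem~\ref{thm:yyy} under $G_\infty > 0$. The only difference is in the measure-theoretic tool: you apply the Monotone Convergence Theorem once, covering the finite and infinite cases simultaneously, whereas the paper splits into two cases (Fatou's lemma to rule out an infinite value of the integral in \eqref{eq:xx}, then dominated convergence to identify the limit when it is finite); your single application is a slightly cleaner packaging of the same idea.
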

\begin{proof}
Indeed, 
$$\frac{\mathcal{D}(\omega)}{\omega} = \int_{[0,\infty[}  \frac{\nu(\dd r)}{r} \frac{1}{1 + \omega^2/r^2}$$
For $\omega \rightarrow 0$ the 
function $1/(1 + \omega^2/r^2)$ increases monotonically to 1. Therefore, if it is assumed that the integral 
in \eqref{eq:xx} is infinite then, by the Fatou lemma, $\mathcal{D}(\omega)/\omega$ tends to infinity
for $\omega \rightarrow 0$. On the other hand, if the inequality in \eqref{eq:xx} is satisfied,
then 
\begin{equation} \label{eq:ey}
D = \lim_{\omega\rightarrow 0} \frac{\mathcal{D}(\omega)}{\omega} = \int_{[0,\infty[} \frac{\nu(\dd r)}{r}
\end{equation} by 
the Lebesgue Dominated Convergence Theorem.
\end{proof}

\section{High-frequency behavior and the Kramers-Kronig dispersion relations.}
\label{sec:KK}

The Kramers-Kronig (K-K) dispersion relations are the Sochocki-Plemelj formulae 
 following from the fact the dissipation-attenuation function 
$\beta(p)$ is the Laplace transform of a causal distribution $F(t)$. Indeed,  
\begin{equation} \label{eq:KK1}
\frac{\beta(p)}{p} = \int_{[0,\infty[} \frac{\nu(\dd r)}{p + r} = L^2(\nu)(p)
\end{equation}
where $L$ denotes the Laplace transformation. The integral representation of the function $\beta$ 
shows that all the singularities of the complex analytic function $\beta(p)/p$ lie in the closure of the left
half $p$-plane. By Jordan's lemma and Theorem~\ref{thm:betaineq} the source function 
$g$ of $\beta_1(p) := \beta(p)/p$ vanishes for $t < 0$. Applying the inverse Laplace 
transformation $L^{-1}$ to both sides of eq.~\eqref{eq:KK1} we have
$$g(t) = L^{-1}(\beta_1)(t) = L(\nu) = \int_{[0,\infty[} \e^{-r t}\, \nu(\dd r)$$
for $t > 0$ with $\nu \in \mathfrak{M}$. By Theorem~\ref{thm:locint} the function $g$ is LICM. Hence 
its primitive 
$$f(t) := \int_0^t g(s)\, \dd s \equiv \int_{]0,\infty[} \left[1 - \e^{-t r}\right] \, \nu(\dd r)$$ 
is a Bernstein function continuous over $[0,\infty[$ and $f(0) = 0$. 
It follows that the dissipation-attenuation function $\beta(p)$ is the Laplace transform of the   
causal distribution $D^2 \, f$ of second order. For $\beta(p) = C\, p^\alpha$, with 
$0 < \alpha < 1$, $C > 0$, the function $f$ is $C\,t_+^{\;1-\alpha}/\Gamma(2-\alpha)$.

\begin{theorem} \label{thm:highfreq}
If the $\nu$ measure of the closed segment $[0,r]$ is a function of $r$ regularly varying with index $1 - \alpha$,
$$\nu([0,r]) \sim_\infty r^{1-\alpha}\, l(r)$$
where $l$ is a function of slow variation at infinity, then $\alpha > 0$ and 
\begin{gather}
\mathcal{A}(\omega) \sim_\infty \frac{(1-\alpha)\, \upi}{2 \cos(\alpha \pi/2)} \omega^{1-\alpha}\, l(\omega)\\
\mathcal{D}(\omega) \sim_\infty \frac{(1-\alpha)\, \upi}{2 \sin(\alpha \pi/2)} \omega^{1-\alpha}\, l(\omega)
\end{gather}
\end{theorem}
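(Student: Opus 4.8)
The plan is to reduce both asymptotics to a single Abelian statement about the distribution function $U(r) := \nu([0,r])$, which by hypothesis is nondecreasing and regularly varying at infinity with index $1-\alpha$. First I would settle the claim $\alpha > 0$. Since $\nu \in \mathfrak{M}$, the tail condition in \eqref{eq:doss} gives $\int_{]1,\infty[} \nu(\dd r)/r < \infty$, and Karamata's theorem for integrals of regularly varying functions states that $\int_x^\infty r^{-1}\,\dd U(r)$ converges only when the index satisfies $1-\alpha < 1$, i.e. $\alpha > 0$ (for index $\ge 1$ the integrand $\sim r^{-1-\alpha}l(r)$ fails to be integrable at infinity, contradicting \eqref{eq:doss}). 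This same inequality also guarantees convergence of the defining integrals \eqref{eq:frAtt}--\eqref{eq:frDisp}.

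Next I would recast the two functions in a form suited to regular-variation arguments by integrating by parts against $\dd U$. In \eqref{eq:frAtt} the boundary terms vanish because $U(0+)=0$ and $U(r)=\oo_\infty[r^2]$ (valid since $\alpha>-1$), giving
\[
\mathcal{A}(\omega) = \int_0^\infty U(r)\,\frac{2\,\omega^2 r}{(\omega^2+r^2)^2}\,\dd r = \int_0^\infty U(\omega s)\,\frac{2 s}{(1+s^2)^2}\,\dd s
\]
after the substitution $r=\omega s$. The identical computation applied to \eqref{eq:frDisp}, now using $U(r)=\oo_\infty[r]$ (valid since $\alpha>0$) to kill the boundary term at infinity, yields
\[
\mathcal{D}(\omega) = \omega\int_0^\infty U(r)\,\frac{r^2-\omega^2}{(\omega^2+r^2)^2}\,\dd r = \int_0^\infty U(\omega s)\,\frac{s^2-1}{(1+s^2)^2}\,\dd s .
\]

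The crux of the proof, and the step I expect to be the main obstacle, is the passage to the limit inside these integrals. By the uniform convergence theorem for regularly varying functions one has $U(\omega s)/[\omega^{1-\alpha}\,l(\omega)] \to s^{1-\alpha}$ as $\omega\to\infty$, uniformly for $s$ in compact subsets of $]0,\infty[$; the difficulty is that this convergence is \emph{not} uniform near $s=0$ and $s=\infty$, so the interchange must be justified by dominated convergence. I would dominate the integrand uniformly in $\omega$ via Potter's bounds (for each $\delta>0$, $U(\omega s)/[\omega^{1-\alpha}l(\omega)] \le C\,\max(s^{1-\alpha+\delta},s^{1-\alpha-\delta})$ for $\omega$ large), choosing $\delta$ small enough that the majorant is integrable against the rational kernels; this is exactly where the constraints $1-\alpha>-1$ (tail of $\mathcal{A}$ at $s=\infty$) and $1-\alpha<1$ (tail of $\mathcal{D}$ at $s=\infty$) enter. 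This is the same Abelian mechanism already used in Theorem~\ref{thm:KaramataAbelianInfinity}. The conclusion is
\[
\mathcal{A}(\omega) \sim_\infty \omega^{1-\alpha} l(\omega) \int_0^\infty \frac{2\,s^{2-\alpha}}{(1+s^2)^2}\,\dd s, \qquad \mathcal{D}(\omega) \sim_\infty \omega^{1-\alpha} l(\omega) \int_0^\infty \frac{s^{1-\alpha}(s^2-1)}{(1+s^2)^2}\,\dd s .
\]

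Finally I would evaluate the two constants. The substitution $u=s^2$ turns each integral into a Beta function. For $\mathcal{A}$, $\int_0^\infty 2 s^{2-\alpha}(1+s^2)^{-2}\,\dd s = B\!\left(\tfrac{3-\alpha}{2},\tfrac{1+\alpha}{2}\right) = \tfrac{1-\alpha}{2}\,\Gamma\!\left(\tfrac{1-\alpha}{2}\right)\Gamma\!\left(\tfrac{1+\alpha}{2}\right)$, and Euler's reflection formula $\Gamma\!\left(\tfrac{1-\alpha}{2}\right)\Gamma\!\left(\tfrac{1+\alpha}{2}\right)=\pi/\cos(\alpha\pi/2)$ gives the stated constant $(1-\alpha)\pi/[2\cos(\alpha\pi/2)]$. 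For $\mathcal{D}$ I would split $s^2-1$ into two convergent Beta integrals, obtaining $\tfrac{1-\alpha}{\alpha}\,\Gamma\!\left(1-\tfrac{\alpha}{2}\right)\Gamma\!\left(1+\tfrac{\alpha}{2}\right)$, which the reflection formula $\Gamma\!\left(\tfrac{\alpha}{2}\right)\Gamma\!\left(1-\tfrac{\alpha}{2}\right)=\pi/\sin(\alpha\pi/2)$ reduces to $(1-\alpha)\pi/[2\sin(\alpha\pi/2)]$. These are precisely the asserted asymptotics.
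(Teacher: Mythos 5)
Your proof is correct, but it takes a genuinely different route from the paper's. The paper's proof is much shorter because it outsources the entire Abelian step: after the change of variable $s=r^2$, the attenuation integral becomes a Stieltjes transform $\int_{[0,\infty[}\mu(\dd s)/(\omega^2+s)$ of the measure $\mu([0,s]):=\nu([0,\sqrt{s}\,])$, which is regularly varying at infinity with index $(1-\alpha)/2$, and Valiron's theorem (Theorem~\ref{thm:Valiron}) is then quoted to deliver both the power law and the constant — the factor $\upi\lambda/\sin(\upi\lambda)$ at $\lambda=(1-\alpha)/2$ becomes $(1-\alpha)\upi/[2\cos(\alpha\upi/2)]$ after multiplying by $\omega^2$; the dispersion statement is left implicit (it follows the same pattern as part (ii) of Theorem~\ref{thm:lowfreq}, with an extra factor $\sqrt{s}$ in the integrand). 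You instead prove the Abelian mechanism from scratch: integration by parts against $U(r)=\nu([0,r])$, the rescaling $r=\omega s$, dominated convergence justified by Potter's bounds, and explicit Beta-function/reflection-formula evaluation of the limit constants — all of which you carry out correctly, and which has the merit of being self-contained, of treating $\mathcal{A}$ and $\mathcal{D}$ in a unified way, and of exposing exactly where $\alpha>0$ is used (the boundary term at infinity and the tail of the $\mathcal{D}$-kernel). What the paper's route buys is brevity; what yours buys is transparency and independence from the cited Valiron/Shea result.

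Two small patches you should make. First, Potter's bounds control $U(\omega s)/U(\omega)$ only when $\omega$ and $\omega s$ both exceed some threshold $X_0$, and regular variation says nothing about $U$ near $0$; so the range $s<X_0/\omega$ must be handled separately. Monotonicity gives $U(\omega s)\le U(X_0)$ there, which makes that piece of the integral bounded by a constant times $\omega^{-2}$ for $\mathcal{A}$ and $\omega^{-1}$ for $\mathcal{D}$, hence negligible against $\omega^{1-\alpha}l(\omega)$ because $1/l$ is slowly varying; with this splitting your dominated-convergence argument is airtight. Second, your derivation of $\alpha>0$ (like the paper's own one-line claim) excludes the index value $1-\alpha=1$ only generically: when $1-\alpha=1$ the tail integral $\int_1^\infty\nu(\dd r)/r$ can still converge for a sufficiently decaying slowly varying factor, e.g.\ $l(r)=(\ln r)^{-2}$. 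Since the paper glosses over exactly the same boundary case, this does not count against you, but it is worth noting that the strict inequality really requires excluding such $l$ by hand.
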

\begin{proof}
Since $\nu \in \mathfrak{M}$, eq.~\eqref{eq:doss} implies that $\alpha > 0$. 

Let $F(\omega) := \int_{[0,\infty[} (\omega^2 + r^2)^{-1}\,\nu(\dd r)
\equiv \int_{[0,\infty[} (\omega^2 + r^2)^{-1}\, \dd \nu([0,r])$, where the integral on the extreme right-hand side is a 
Stieltjes integral. 
The function $F$ can be expressed in terms of a Stieltjes transform by changing the integration variable.
In terms of a new measure $\mu([0,s]) := \nu([0,\smash{\sqrt{s}]}) = s^{(1 - \alpha)/2}\, l\left(\sqrt{s}\right)$, 
\begin{equation} \label{eq:attenAsym}
F(\omega) = \int_{[0,\infty[} \frac{\mu(\dd s)}{\omega^2 + s}
\end{equation}
By Valiron's theorem (Theorem~\ref{thm:Valiron}) 
$$F(\omega) = \frac{(1-\alpha)\, \upi}{2 \cos(\upi \alpha/2)} \omega^{1-\alpha} l(\omega)$$
This proves the theorem.
\end{proof} 

The function $F$ in eq.~\eqref{eq:attenAsym} is differentiable for $\omega > 0$, hence the attenuation 
and dispersion 
functions $\mathcal{A}$ and $\mathcal{D}$ are differentiable. The function $\omega^{-1}\,\mathcal{A}(\omega)
\sim_\infty \omega^{-\alpha}$ by Theorem~\ref{thm:highfreq}, hence it belongs to the space 
$\mathcal{D}_L^{\prime(1)}$ (eq.~(1.8.12) in \cite{Nussenzveig} Sec.~1.7--1.8).
Consequently under the above hypotheses 
the functions $\mathcal{A}(\omega) = \Re \beta(-\ii \omega)$ and $\mathcal{D}(\omega) = 
-\Im \beta(-\ii \omega)$ satisfy the K-K dispersion relations with one subtraction:
\begin{equation} \label{eq:KK}
\mathcal{A}(\omega) - \mathcal{A}(\omega_0) = -\frac{(\omega - \omega_0)}{\upi} \mathrm{vp} \int_{-\infty}^\infty 
\frac{\mathcal{D}(\omega^\prime) - \mathcal{D}(\omega_0)}{(\omega^\prime - \omega_0)\, (\omega^\prime-\omega)} 
\dd \omega^\prime
\end{equation}
where "vp" indicates that the integral is to be taken in the sense of principal value. 

\section{Attenuation and dispersion functions. Low-frequency behavior.}

Low-frequency behavior of the attenuation function provides a useful test whether a specimen of the material subjected to 
constant strain for $t > 0$ relaxes to zero stress - that is, whether $G_\infty = 0$. Materials with vanishing 
$G_\infty$ are known as viscoelastic fluids. Scalar models considered in this paper represent either 
longitudinal or shear waves. It turns out that the same material can behave under tension or compression 
like a viscoelastic solid and under shear strain as a viscoelastic fluid.

If $\nu \in \mathfrak{M}$ and the $\nu$ measure $\nu([0,r])$ of the closed segment $[0,r]$ is a function of $r$ 
regularly varying at 0 with index $\gamma$, then $\gamma > 0$.
\begin{theorem} \label{thm:lowfreq}
If the Radon measure $\nu \in \mathfrak{M}$ is regularly varying at 0 with $\nu([0,r]) = r^\gamma\, l(r)$,
where $l(r)$ is slowly varying at 0 and  $0 < \gamma < 1$, then
\begin{enumerate}[(i)]
\item $\mathcal{A}$ is regularly varying at 0 with $$\mathcal{A}(\omega) = \frac{\gamma \upi}{2 \sin(\gamma \upi/2)} 
\omega^\gamma\, l(\omega^2)$$ with $l$ slowly varying at 0;
\item
$\mathcal{D}$ is regularly varying at 0 with
$$\mathcal{D}(\omega) \sim_0 \frac{\gamma \upi}{\cos(\gamma \upi/2)} \omega^\gamma\, l(\omega^2);$$
\item the Q factor is asymptotically constant at 0,
$\mathcal{Q}(\omega) \sim_0 4 \upi \cot(\gamma \upi/2)$.
\end{enumerate}
\end{theorem}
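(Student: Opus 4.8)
The plan is to follow the proof of Theorem~\ref{thm:highfreq} verbatim in structure, the only changes being that the regular variation of $\nu([0,r])$ is now exploited at $r = 0$ rather than at infinity, and that Valiron's theorem (Theorem~\ref{thm:Valiron}) and Karamata's Abelian theorem (Theorem~\ref{thm:KaramataAbelianInfinity}) are invoked in their companion form as the transform variable tends to $0$. I write $\mathcal{A}(\omega) = \omega^2\, F(\omega)$ and $\mathcal{D}(\omega) = \omega\, H(\omega)$, where $F(\omega) := \int_{]0,\infty[} \nu(\dd r)/(\omega^2 + r^2)$ and $H(\omega) := \int_{]0,\infty[} r\,\nu(\dd r)/(\omega^2 + r^2)$, and in each case I turn the integral into a genuine Stieltjes transform by the substitution $s = r^2$.

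For (i) I would push $\nu$ forward under $r \mapsto r^2$ to the measure $\mu$ given by $\mu([0,s]) := \nu([0,\sqrt{s}]) = s^{\gamma/2}\, l(\sqrt{s})$, which is regularly varying at $0$ with index $\gamma/2 \in\; ]0,1[$ because $0 < \gamma < 1$. Then $F(\omega) = \int_{]0,\infty[} \mu(\dd s)/(\omega^2 + s)$ is a Stieltjes transform to which the $0$-version of Valiron's theorem applies as $\omega^2 \rightarrow 0$; multiplying the result by $\omega^2$ returns the asymptotics of $\mathcal{A}$ asserted in (i). The contribution of large $r$ is harmless: on $[1,\infty[$ one has $(\omega^2 + r^2)^{-1} \leq r^{-2}$ and $\int_{[1,\infty[} r^{-2}\, \nu(\dd r) < \infty$ since $\nu \in \mathfrak{M}$, so that part of $\mathcal{A}$ is $\OO_0[\omega^2] = \oo_0[\omega^\gamma]$ and cannot perturb the leading term.

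For (ii) the extra factor $r$ in the integrand of $H$ I would absorb into the measure rather than into the kernel: with $\sigma([0,s]) := \int_{[0,\sqrt{s}]} r\, \nu(\dd r)$ one has $H(\omega) = \int_{]0,\infty[} \sigma(\dd s)/(\omega^2 + s)$, so the task reduces to finding the regular variation of $\sigma$ at $0$. This is exactly where Karamata's theorem is needed: integrating the index-$\gamma/2$ measure $\mu$ against $s^{1/2}$ raises the index by $\tfrac12$, giving $\sigma([0,s]) \sim_0 [\gamma/(\gamma+1)]\, s^{(\gamma+1)/2}\, l(\sqrt{s})$, which is regularly varying at $0$ with index $(\gamma+1)/2 \in\; ]\tfrac12,1[$. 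A second appeal to the $0$-version of Valiron's theorem, multiplication by $\omega$, and the identity $\sin\!\big((\gamma+1)\upi/2\big) = \cos(\gamma \upi/2)$ then deliver the asymptotics of $\mathcal{D}$ in (ii); the same tail estimate as above, now bounded by $\int_{[1,\infty[} r^{-1}\, \nu(\dd r) < \infty$ and weighted by $\omega$, shows the large-$r$ part is $\oo_0[\omega^\gamma]$. Part (iii) is then immediate: in the quotient defining $\mathcal{Q}(\omega)$ the common $\omega$-dependent factor cancels, so $\mathcal{Q}(\omega)$ tends to the constant fixed by the ratio of the two amplitudes in (i) and (ii) together with the normalisation in the definition of $\mathcal{Q}$.

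The step I expect to be the real obstacle is not the exponent bookkeeping but justifying the Abelian theorem in precisely the form used here, namely for the Stieltjes transform of a measure that is regularly varying at $0$, evaluated as the transform variable itself tends to $0$. The theorems quoted earlier are stated at infinity; the $0$-analogue should follow from them by the inversion $s \mapsto 1/s$, $\omega^2 \mapsto 1/\omega^2$, but one has to check that this inversion sends the measures and their regular-variation indices correctly and that the resulting error terms remain $\oo_0[\omega^\gamma]$. Once these $0$-versions are secured, both (i) and (ii) collapse to the single two-step scheme — substitute $s = r^2$, then apply the Stieltjes Abelian theorem — used for the high-frequency estimates in Theorem~\ref{thm:highfreq}.
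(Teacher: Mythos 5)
Your overall route is the same as the paper's: the paper's proof also substitutes $s = r^2$, passes to the pushforward measure $\mu([0,s]) = \nu([0,\sqrt{s}\,]) = s^{\gamma/2}\, l(\sqrt{s})$, and invokes the $0$-version of Valiron's theorem; for (ii) the paper simply writes $\mathcal{D}(\omega) \sim_0 \omega \int_{[0,\infty[} \sqrt{s}\,\mu(\dd s)/(s+\omega^2)$ and declares that "Valiron's Theorem implies the thesis". Your explicit intermediate step --- absorbing $\sqrt{s}$ into a new measure $\sigma$ and showing by Karamata's theorem that $\sigma([0,s]) \sim_0 [\gamma/(\gamma+1)]\, s^{(\gamma+1)/2}\, l(\sqrt{s})$, regularly varying of index $(\gamma+1)/2$ --- is exactly the detail the paper suppresses, and your computation of it is correct. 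Your worry about the $0$-version of Valiron's theorem is settled by fiat: the paper asserts, immediately after Theorem~\ref{thm:Valiron}, that the theorem "also holds when $\infty$ is replaced by $0$", so you may cite it directly (the paper gives no proof either; your inversion sketch is the natural justification if one is demanded).

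There is, however, one genuine problem: you claim your computation "delivers the asymptotics of $\mathcal{D}$ in (ii)", and it does not. Carried to completion, your own numbers give, with $\lambda := (\gamma+1)/2$, $H(\omega) \sim_0 \frac{\upi \lambda}{\sin(\upi\lambda)}\,(\omega^2)^{\lambda - 1}\,\frac{\gamma}{\gamma+1}\, l(\omega) = \frac{\gamma\upi}{2\cos(\gamma\upi/2)}\,\omega^{\gamma-1}\, l(\omega)$, hence $\mathcal{D}(\omega) = \omega\, H(\omega) \sim_0 \frac{\gamma\upi}{2\cos(\gamma\upi/2)}\,\omega^{\gamma}\, l(\omega)$ --- one half of the constant stated in the theorem. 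The test case $\nu([0,r]) = r^\gamma$, $l \equiv 1$, decides the issue in your favour: directly, $\mathcal{D}(\omega) = \gamma\,\omega^\gamma \int_0^\infty y^\gamma\,(1+y^2)^{-1}\,\dd y = \frac{\gamma\upi}{2\cos(\gamma\upi/2)}\,\omega^\gamma$, while the same check confirms the constant in (i) exactly as stated. So the paper's statement (ii) carries a spurious factor of $2$ (and in both (i) and (ii) the slowly varying factor produced by Valiron is $l(\omega)$, i.e.\ the function $s \mapsto l(\sqrt{s})$ evaluated at $s = \omega^2$, not $l(\omega^2)$); your method is sound, but a complete write-up must record this discrepancy rather than assert that the identity $\sin((\gamma+1)\upi/2) = \cos(\gamma\upi/2)$ closes the gap. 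Concerning (iii), what your argument actually yields is $\mathcal{D}(\omega)/\mathcal{A}(\omega) \rightarrow \tan(\gamma\upi/2)$, so any $Q$-factor proportional to $\mathcal{A}/\mathcal{D}$ is asymptotically constant; since the paper never defines $\mathcal{Q}$, the prefactor $4\upi$ cannot be verified, and your cancellation remark is all that can legitimately be claimed.
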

\begin{proof}
\noindent Ad (i) 
$$\mathcal{A}(\omega) = \omega^2 \int_{[0,\infty[} \frac{\nu(\dd r)}{\omega^2 + r^2} =
\omega^2 \int_{[0,\infty[} \frac{\mu(\dd s)}{s + \omega^2}$$
where the integration variable has been changed to $s = r^2$ and 
$\mu([0,s]) \sim_0 s^{\gamma/2}\, l(1/s)$. For example, if $\nu(\dd r) = h(r)\, \dd r$ with
$h(r) \sim_0 \gamma r^{\gamma-1}\, l(r)$, then
$\mu(\dd s) = g(s) \, \dd s$ with 
$g(s) = h\left(\smash{\sqrt{s}}\right)/(2 \sqrt{s}) \sim_0 (\gamma/2) s^{\gamma/2 - 1} l(\sqrt{s})$.
The function $l(\smash{\sqrt{s}})$ is slowly varying at 0, hence 
Lemma~\ref{thm:Valiron} implies the thesis.

\noindent Ad (ii) 
$$\mathcal{D}(\omega) \sim_0 \omega \int_{[0,\infty[} \frac{r \nu(\dd r)}{r^2 + \omega^2} =
\omega \int_{[0,\infty[} \frac{\sqrt{s}\, \mu(\dd s)}{s + \omega^2}$$
Valiron's Theorem  implies the thesis.
\end{proof}

Theorem~\ref{thm:lowfreq} is applicable to viscoelastic fluids. In the case of viscoelastic solids 
eq.~\eqref{eq:xx} implies that $\gamma > 1$ and the hypotheses of 
Theorem~\ref{thm:lowfreq} are not satisfied. In this case Theorem~\ref{thm:yyy} implies 
that $\mathcal{D}(\omega) = D \, \omega + \oo_0[\omega]$. If $\mathcal{A}(\omega)$ 
is regularly varying at 0 then $\mathcal{A}(\omega) \sim_0 \vert \omega\vert ^{\alpha+1}\, l(\omega)$,
where $l$ is slowly varying at 0 and either $\alpha > 0$ or $l(0) = 0$, for example
$\mathcal{A}(\omega) \sim_0 C\, \vert \omega\vert\, \ln^\beta(1 + \omega)$ with $C, \beta > 0$.

Attenuation of longitudinal waves in polymers and bio-tissues exhibits the power law behavior  
 $\mathcal{A}(\omega) \sim_0 C \vert \omega\vert^{1 + \alpha}$. The asymptotic behavior described by (i) 
of Theorem~\ref{thm:lowfreq} is observed for shear wave attenuation in minerals (e.g. \cite{JacksonAl2004}).

\section{A few examples of analytic viscoelastic models.}
\label{sec:examples}

We shall now demonstrate an application of the spectral method to the numerical determination of 
attenuation and
dispersion in viscoelastic media defined by the most popular analytic expressions for the complex modulus.

\subsection{Power law.}

If $\nu([0,r]) = a \,r^\gamma$, $0 < \gamma < 1$, then $\mathcal{A}(\omega) = A \, \omega^\gamma$,
where $$ A = a\, \gamma\, \int_0^\infty \frac{y^{\gamma-1} \dd y}{1 + y^2}.$$

\subsection{Finite bandwidth.}
\label{sec:fb}

If $\nu(\dd r) = C\, \chi_{[a,b]}(r) \, \dd r$, with $0 < a < b < \infty$, $C > 0$, then the 
attenuation function
$$\mathcal{A}(\omega) = C\, \omega \, \left[\tan^{-1}(b/\omega) - \tan^{-1}(a/\omega)  \right]$$
is asymptotically constant at infinity and $\OO\left[\omega^2\right]$ at 0:
\begin{equation}
\mathcal{A}(\omega) \begin{cases} \sim_0 C\, \omega^2\, (1/a - 1/b)\\
\sim_\infty C\, (b - a)
\end{cases}
\end{equation}
On the other hand
$1/c_0 = \lim_{p\rightarrow 0} \kappa(p)/p = 1/c_\infty + C\, \ln(b/a)$,
hence $\mathcal{D}(\omega) \sim_0 C\, \ln(b/a)\, \omega$.

\subsection{The Cole-Cole relaxation model.}
\label{sec:ColeCole}

The Cole-Cole model is defined by the Laplace transform of the relaxation modulus of the form
\begin{equation} \label{eq:CC}
\tilde{G}^{\mathrm{CC}}_\alpha(p) = \frac{G_\infty}{p} \frac{1 + a \,(\tau p)^\alpha}{1 + (\tau p)^\alpha}, 
\qquad  a > 1, \quad 0 < \alpha < 1, \quad G_\infty > 0, \quad \tau > 0
\end{equation}
\cite{ColeCole,BagleyTorvik3}. The Cole-Cole model \cite{HanColeJCA} and its generalizations \cite{RossikhinShitikova1} are 
equivalent to linear viscoelastic models based on fractional derivatives considered in 
\cite{MainardiVE}. 
The function $G^{\mathrm{CC}}_\alpha$ is bounded completely monotonic (Appendix~\ref{app:HNCM}). Using the formula 
\begin{equation}  
\int_0^t \e^{-p t}\,E_{\alpha,\beta}(-\lambda\,t^\alpha) \, \dd t = \frac{p^{\beta-1}}{p^\alpha + \lambda}
\end{equation}
\cite{PodlubnyBook} the relaxation modulus $G^{\mathrm{CC}}_\alpha$ can be expressed in terms of the Mittag-Leffler 
function $E_\alpha = E_{\alpha,1}$ 
\begin{equation}
G^{\mathrm{CC}}_\alpha(t) = G_\infty \, \theta(t) \, \left[ 1 + (a - 1) \, 
E_\alpha\left(-t^\alpha/b\right)\right]
\end{equation}
The relaxation modulus $G^{\mathrm{CC}}_\alpha$ has a finite limit $G_0 = G_\infty\, a$ at $t = 0$. 

The complex wave number function of the Cole-Cole model is given by the formula
\begin{equation}
\kappa^{\mathrm{CC}}_\alpha(p) = \frac{p}{c_0} \, \left[ \frac{1 + (\tau p)^\alpha}{1 + a (\tau p)^\alpha}\right]^{1/2} =
B \, p + p \int_{]0,\infty[} \frac{\nu(\dd r)}{p + r} 
\end{equation}
An explicit integral expression for the attenuation and dispersion functions of the Cole-Cole model 
can be derived from
eqs~(\ref{eq:inteRepres}--\ref{eq:inteRepres1}) and \eqref{eq:jumpmeasure}. In the first place 
Corollary~\ref{cor:B} implies that 
\begin{equation}
B = \lim_{p\rightarrow\infty} \frac{\kappa^{\mathrm{CC}}_\alpha(p)}{p} =  a^{-1/2}/c_0
\end{equation}
The wavefront speed equals $c_\infty = 1/B$.

The spectral density $h^{\mathrm{CC}}_\alpha$ of the Cole-Cole model is given by the formula
$$h^{\mathrm{CC}}_\alpha(r) = \frac{1}{\upi \, c_0} \Im \left[\frac{\kappa^{\mathrm{CC}}_\alpha(r \exp(\ii \upi))}{r \exp(\ii \upi)} - 
\left(\frac{1}{a}\right)^{1/2} \right]  = \frac{1}{\upi\, c_0} \Im \left[ \frac{1 + (\tau r)^\alpha \, 
\exp(\ii \alpha \upi)}{1 + a\, (\tau r)^\alpha \, \exp(\ii \alpha \upi)}  \right]^{1/2}$$
Let
$$\mathcal{J} := \Im \frac{1 + (\tau p)^\alpha}{1 + a\, (\tau p)^\alpha}, \quad  \mathcal{R} := 
\Re \frac{1 + (\tau p)^\alpha}{1 + a\,(\tau p)^\alpha}$$ where $p = r\, \exp(\ii \upi)$. 
The imaginary part $Y$ of $\left[\left( 1 + (\tau p)^\alpha \right)/\left( 1 + a\,(\tau p)^\alpha \right)\right]^{1/2}$ 
is a solution
of the bi-quadratic equation $Y^4 + \mathcal{R}\, Y^2 - \mathcal{J}^2/4 = 0$ and is non-negative, hence
$Y = \sqrt{\sqrt{\mathcal{R}^2 + \mathcal{J}^2} - \mathcal{R}}/\sqrt{2}$. Now
$\mathcal{J} = \mathcal{J}_1/Z^2$, $\mathcal{R} = \mathcal{R}_1/Z^2$, where 
$\mathcal{J}_1 = - (a - 1)\,\sin(\upi \alpha)\, (\tau r)^\alpha$,
$\mathcal{R}_1 = 1 + a\,(\tau r)^{2 \alpha} + (a + 1)\, \cos(\upi \alpha)\, (\tau r)^\alpha$ and
$Z = \sqrt{1 + a^2 \, (\tau r)^{2 \alpha} + 2 a \cos(\upi \alpha) \,(\tau r)^\alpha}$. Hence the dispersion-attenuation  
spectral measure $\nu$ of the Cole-Cole model has a density
\begin{equation}
h^{\mathrm{CC}}_\alpha(r) =  \frac{1}{\upi\,c_0\, \sqrt{2}} \frac{\sqrt{\vert \mathcal{R}_1\vert\,\left[
 \sqrt{1 + (\mathcal{J}_1/\mathcal{R}_1)^2}  - \sgn(\mathcal{R}_1)\right]}}{Z}
\end{equation}
Note that
\begin{equation}
h^{\mathrm{CC}}_\alpha(r) \begin{cases}
\sim_\infty \frac{b^{1/2}}{\upi\,c_\infty} \frac{a - 1}{a} \sin(\alpha \upi) \, (\tau r)^{-\alpha}\\
\sim_0 \frac{1}{\sqrt{2} \upi c_0} (a - 1)\, \sin(\alpha \upi) \, (\tau r)^\alpha
\end{cases}
\end{equation}
and the function $h^{\mathrm{CC}}_\alpha(r)/r$ is integrable. Thus $\mathcal{D}^{\mathrm{CC}}_\alpha(\omega)
\sim_0 D\, \omega$. The asymptotic properties of the attenuation function follow from Theorem~\ref{thm:lowfreq} 
\begin{equation}
\mathcal{A}^{\mathrm{CC}}_\alpha(\omega)  \begin{cases}
\sim_\infty \frac{1}{2} \frac{a - 1}{a\, c_0} \sin(\alpha \upi/2) \, (\tau \omega)^{1 - \alpha}\\
\sim_0 \frac{1}{\sqrt{2}\, c_0} (a - 1) \, \sin(\alpha \upi/2) \, (\tau \omega)^{1 + \alpha}
\end{cases}
\end{equation}

For $\alpha \rightarrow 1$ the dispersion-attenuation spectral density $h^{\mathrm{CC}}_\alpha$ tends to the 
spectral density of the Standard Linear Solid
\begin{equation}
h^{\mathrm{SLS}}(r) = \begin{cases} 0, & \mathcal{R}_1 > 0 \\
\frac{1}{\upi\, c_0} \frac{\sqrt{-\mathcal{R}_1}}{\vert a\, r - 1\vert }, & \mathcal{R}_1 < 0
\end{cases}
\end{equation}
or, equivalently
\begin{equation}
h^{\mathrm{SLS}}(r) = \frac{1}{\upi\, c_0} \sqrt{\frac{1 - \tau\, r}{a\, \tau \,r - 1}} \, \chi_{[1/a,1]}(\tau \,r)
\end{equation}
In contrast to the Cole-Cole model, the spectrum of the SLS has a finite bandwidth and 
$\int_0^\infty h^{\mathrm{SLS}}(r) \, \dd r < \infty$. 

The attenuation functions of the Cole-Cole and the SLS are now given by
the explicit expressions
\begin{equation} \label{eq:attnCC}
\mathcal{A}^{\mathrm{CC}}_\alpha(\omega) = \omega^2 \int_0^\infty \frac{h^{\mathrm{CC}}_\alpha(r)}{r^2 + \omega^2} \, \dd r
\end{equation}
for $0 < \alpha < 1$ and $\alpha = 1$, respectively. 
The dispersion functions are given by
\begin{equation} \label{eq:dispCC}
\mathcal{D}^{\mathrm{CC}}_\alpha(\omega) = \omega \int_0^\infty \frac{r\, h^{\mathrm{CC}}_\alpha(r)}{r^2 + \omega^2} \, \dd r
\end{equation}
$0 < \alpha \leq 1$. 
The attenuation function and the phase speed $c(\omega)$ with
$c_\infty = 5000 \,\mathrm{m}/\mathrm{s}$ are shown in Fig.~\ref{fig:1}.

\begin{figure}
\begin{minipage}[t]{0.49\linewidth} 
\includegraphics[width=\textwidth]{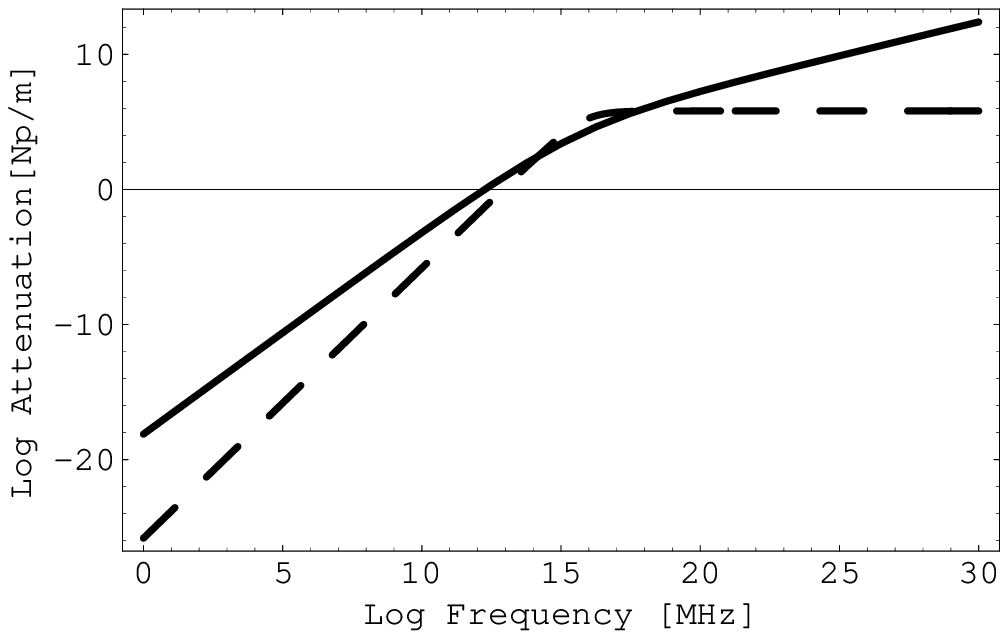}
\begin{center} {\footnotesize (a) Attenuation.} \end{center}
\end{minipage} \hspace{0.5cm}
\begin{minipage}[t]{0.49\linewidth}
\includegraphics[width=\textwidth]{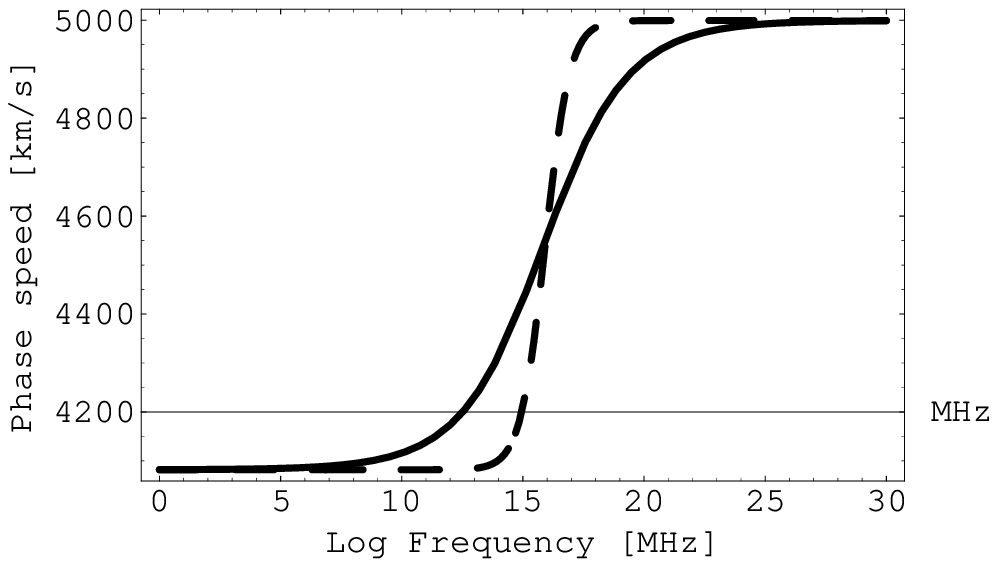}
\begin{center} {\footnotesize (b) Phase speed.} \end{center}
\end{minipage}
\caption{Attenuation function and phase speed plotted vs logarithm of angular frequency in MHz 
for the SLS and Cole-Cole models, $a = 1.5$, $c_\infty = 5 \, \mathrm{km/s}$, $\tau = 10^{-13} \, \mathrm{s}$.} \label{fig:1}
{\footnotesize Solid line: Cole-Cole, $\alpha=1/2$; dashed line: Standard Linear Solid.}
\end{figure}

Equations~\eqref{eq:attnCC} and \eqref{eq:dispCC} are convenient for numerical computation of the
attenuation and dispersion functions. 

\subsection{The Havriliak-Negami relaxation.}

The Havriliak-Negami relaxation model is defined by the Laplace transform of the relaxation modulus of the form
\begin{multline} \label{eq:HN}
\tilde{G}^{\mathrm{HN}}_{\alpha,\gamma}(p) = \frac{G_0}{p} \left[ 1 - \frac{b}{\left(1 + (\tau\, p)^\alpha\right)^\gamma}\right], \\
0 < b \leq 1, \quad 0 < \alpha < 1, \quad 0 < \gamma \leq 1, \quad G_0 > 0 
\end{multline}
\cite{HavriliakHavriliak}. The corresponding relaxation modulus is given by a 4-parameter formula 
involving the Prabhakar Mittag-Leffler function $E^\gamma_{\alpha,\alpha\gamma}$ 
\cite{HanSerDielectrics07}. The Havriliak-Negami relaxation modulus $G^{\mathrm{HN}}_{\alpha,\gamma}$ is a 
CM function (Appendix~\ref{app:HNCM}).

For $\gamma = 1$ the Havriliak-Negami relaxation modulus reduces to the Cole-Cole model with $a = 1/(1 - b)$.

Eq.~\eqref{eq:kp} implies that $c_\infty = (G_0/\rho)^{1/2}$ and
$\beta(p)/p = \left[1 - b/\left(1 + (\tau p)^\alpha \right)^\gamma\right]^{-1/2} - 1/c_\infty$. 
Hence the spectral density is given by the expression 
$$h^{\mathrm{HN}}_{\alpha,\gamma}(r) = \frac{1}{\upi c_\infty} \Im Z^{-1/2}$$
where $Z := 1 - b/Y$ and $Y := \left( 1 + (\tau r)^\alpha \, \exp(\ii \upi \alpha) \right)^\gamma$.

Substitution of the expressions  
$$ \vert Y \vert = g(r) := \left( 1 + 2 (\tau r)^\alpha\, \cos(\upi \alpha) + (\tau r)^{2 \alpha}\right)^{\gamma/2}$$
$$\Re Y = g(r)\, \cos(\gamma f(r)), \quad \Im Y = g(r)\, \sin(\gamma f(r))$$
where
$$f(r) := \tan^{-1}\left((\tau r)^\alpha\, \sin(\upi \alpha)/\left( 1 + (\tau r)^\alpha \, \cos(\upi \alpha)\right)\right)$$
in the formulae
$$\Re Z = 1 - b \frac{\Re Y}{\vert Y\vert^2}, \quad \Im Z = b \frac{\Im Y}{\vert Y\vert^2}$$
yields the formula 
$$\vert Z\vert = k(r) := \sqrt{1 + b^2/g(r)^2 - 2 b\, \cos(\gamma f(r))/g(r)}$$
Hence, using the identity 
$$\Im Z^{-1/2} = - \frac{\Im Z^{1/2}}{\vert Z\vert} = \mp \frac{\sqrt{\vert Z\vert - \Re Z}}{\sqrt{2}\, 
\vert Z\vert}$$ and noting that the lower sign applies in the problem at hand, the final result is 
\begin{equation} \label{eq:hHN}
h^{\mathrm{HN}}_{\alpha,\gamma}(r) = \frac{1}{\upi \sqrt{2} c_\infty} \frac{\sqrt{k(r) - 1 + b \cos(\gamma  f(r))/g(r)}}{k(r)}
\end{equation}

Note that $f(0) = 0$ and therefore $k(0) = 1 - b/g(0)$. This implies that  $h^{\mathrm{HN}}_{\alpha,\gamma}(0) = 0$ 
and $D = \int_0^\infty [h^{\mathrm{HN}}_{\alpha,\gamma}(r)/r] \dd r < \infty$.

In comparison with the Cole-Cole model the parameter $\gamma$ in the Havriliak-Negami model allows 
decoupling the high-frequency asymptotics of the wave 
number function from its low-frequency asymptotics. 
The high-frequency asymptotics  $Q(p) \sim_\infty G_0 \, \left( 1 - b\, (\tau p)^{-\alpha \gamma}\right)$ implies
that $\kappa(p) = p/c_\infty + \beta(p) \sim_\infty p/c_\infty + p\, \left[ 1 + b\, (\tau p)^{-\alpha \gamma}\right]$
and thus $\beta(p) \sim_\infty (b/c_\infty\, \tau) (\tau p)^{1 - \alpha \gamma}$. On the other hand 
$Q(p) \sim_0 G_\infty\,\left[ 1 + b \gamma/(1 - b \, (\tau p)^\alpha\right]$ and therefore
$\kappa(p) \sim_0  
p/c_0\, \times \\ \left\{1  - b \,\gamma\, (\tau p)^{\alpha + 1}/[2 (1 - b)\,\tau] \right\}$. 
The lowest frequencies propagate with speeds close to $c_0$ and exhibit the attenuation $\mathcal{A}(\omega) \sim_0 b \,\gamma\, \sin(\alpha \upi/2) (\tau \omega)^{1 + \alpha}/[2 (1-b) \tau\, c_0]$, in agreement with experiments. 

Due to its flexibility achieved by a 
minimal number of parameters and asymmetric shape of the peak of the loss modulus the Havriliak-Negami 
relaxation model is frequently used in modeling the alpha relaxation in the mechanical and dielectric data 
in polymers and glass-forming liquids \cite{Boyd85,AligAl88,HavriliakHavriliak94,AlvarezAl93}.

\begin{figure}
\begin{minipage}[t]{0.49\linewidth}
\includegraphics[width=\textwidth]{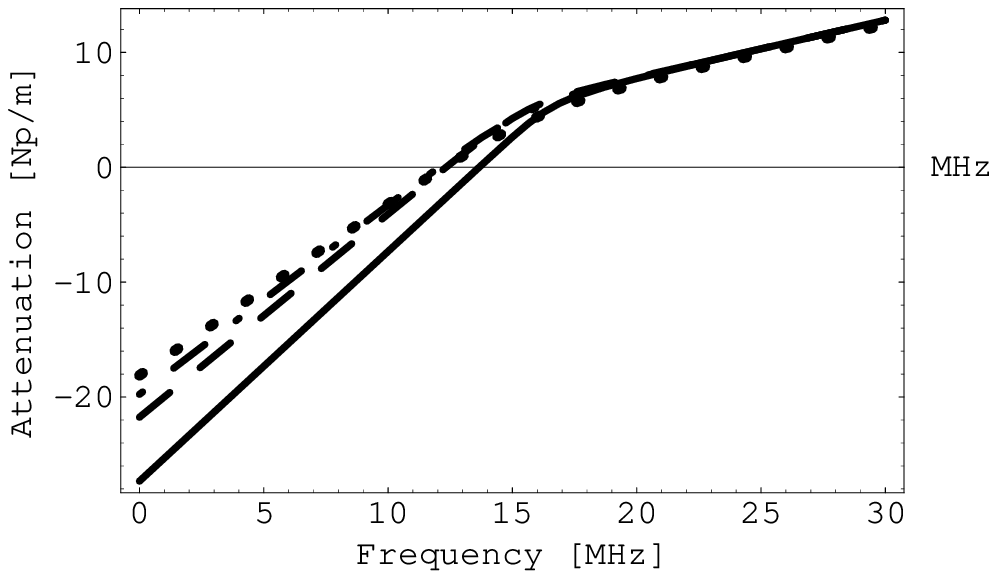}
\begin{center} {\footnotesize (a) Attenuation function.} \end{center}
\end{minipage}\hspace{0.5cm}
\begin{minipage}[t]{0.49\linewidth}
\includegraphics[width=\textwidth]{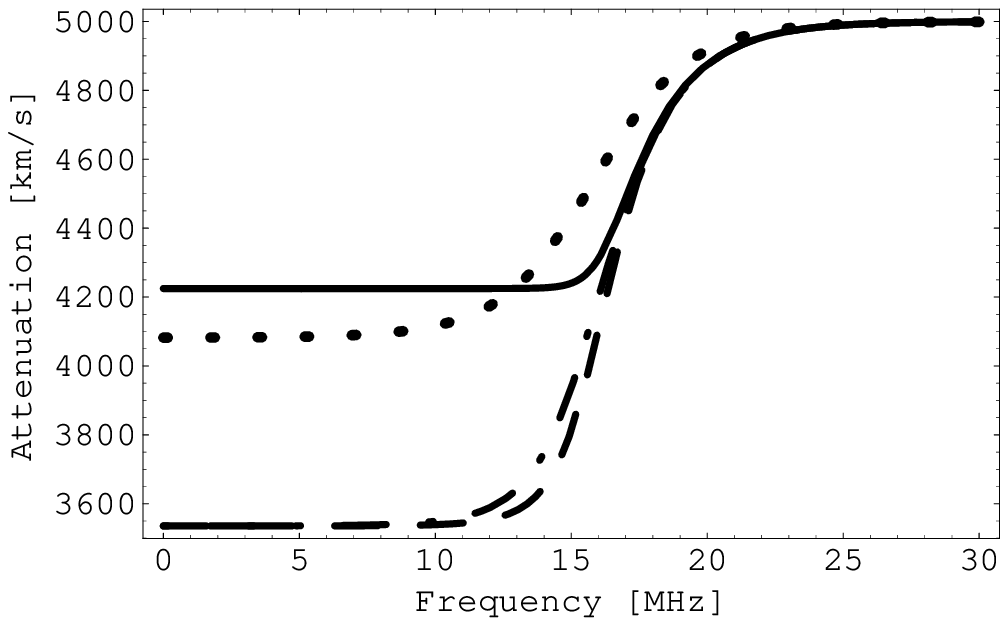}
\begin{center} {\footnotesize (b) Phase speed.} \end{center}
\end{minipage}
\caption{Plots of attenuation and phase speed vs logarithm of angular frequency in MHz
for the Cole-Cole, Havriliak-Negami and Cole-Davidson models, $b = 0.5$, $c_\infty = 5 \, \mathrm{km/s}$, $\tau = 10^{-13} \, \mathrm{s}$.} \label{fig:3}
{\footnotesize Solid line: Cole-Davidson, $\beta=1/2$; dashed line: Havriliak-Negami, $\alpha = 1/1.3$, $\beta=1.3/2$; dot-dashed: 
Havriliak-Negami, $\alpha = 1.3/2$, $\beta=1/1.3$; dotted: Cole-Cole, $\alpha=1/2$.}  \label{fig:2}
\end{figure}

\subsection{The Cole-Davidson relaxation.}

The Cole-Davidson relaxation modulus is given by \eqref{eq:HN} with $\alpha = 1$. The spectral density $h_\gamma^{\mathrm{CD}}$ 
of the Cole-Davidson model cannot however be obtained by substituting $\alpha = 1$ in eq.~\eqref{eq:hHN}. 
For $\alpha = 1$ the branch cut in the complex $p$-plane is reduced
to the segment $]-\infty,-1/\tau]$ and $$Y = \begin{cases} (\tau \,r - 1)^\gamma\, \exp(\ii \upi \gamma),  & \tau\, r > 1\\
(\tau\, r - 1)^\gamma, & \tau \,r \leq 1
\end{cases} $$
Hence 
$$\Re Z = 1 - \begin{cases} b\,\cos(\upi \gamma)\,(\tau \,r  - 1)^{-\gamma}, & \tau\, r > 1\\
b\, (1 - \tau r)^{-\gamma}, & \tau\, r \leq 1 \end{cases}
 , \quad\Im Z = \begin{cases} b\,\sin(\upi \gamma) \, (\tau\, r - 1)^{-\gamma}, & \tau\, r > 1\\
0, & \tau\, r \leq 1\end{cases} 
$$
Consequently
\begin{equation} \label{eq:hCD}
h^{\mathrm{CD}}_\gamma(r) =  \frac{1}{\upi \sqrt{2} c_\infty} 
\frac{\sqrt{k_1(r) - 1 + b \cos(\upi \gamma)}}{k_1(r)} \,\theta(\tau\, r - 1)
\end{equation}
with
$k_1(r) := \sqrt{1 + b^2/\vert 1 - \tau\, r\vert^{2 \gamma} - 2 b \cos(\upi \gamma)/\vert 1 - \tau r\vert^\gamma}$.

The low-frequency asymptotics of the Cole-Davidson attenuation function can be calculated by substituting eq.~\eqref{eq:HN}
in \eqref{eq:kp} and using the definition \eqref{eq:frAtt} of $\mathcal{A}$  yields a behavior characteristic 
of a spectrum which does not extend to 0: 
$\mathcal{A}(\omega) \sim_0 \gamma \, b \, \tau\, \omega^2/[2 c_0 \, (1 - b)] $ 
(cf Sec.~\ref{sec:fb}). 

Ultrasonic data for polymers and bio-tissues cover the range 0--250 MHz, which is much below 
the inverse of the characteristic 
relaxation time $1/\tau = 10^7 \, \mathrm{MHz}$ for polymers. Therefore they are expected to be matched by the 
low-frequency asymptotics of the attenuation function and phase speed. Attenuation and phase speed for the Cole-Cole, Havriliak-Negami and Cole-Davidson models in the low-frequency range 
are shown in Fig.~\ref{fig:2}. The plots have been obtained by numerical integration of eqs~(\ref{eq:frAtt}--\ref{eq:frDisp}) 
using eqs~\eqref{eq:hHN}, \eqref{eq:hCD} and \eqref{eq:c0}.
The asymptotic high-frequency exponent equals 1/2 for all these cases.

\section{Plane waves and minimum phase signals.}
\label{sec:MPH}

The K-K dispersion relations are closely connected to the minimum phase properties 
of the propagators. 

\begin{definition}
A causal tempered distribution $T$ is said to be {\em minimum phase} if its Fourier transform $\hat{T}$
and its inverse $1/\hat{T}(\omega)$ are analytic in the upper half-plane $\mathbb{C}^+$. 
\end{definition}
The first condition implies that the poles and branch cuts of $\hat{T}$ lie in the lower half 
$\omega$ plane.
The second condition implies that $\hat{T}$ does not vanish in the upper half plane $\mathbb{C}^+$.

A shifted function or distribution $\mathcal{T}_\tau f$ is defined by the formula
$(\mathcal{T}_\tau f)(t) = f(t - \tau)$. Its Fourier transform 
$\widehat{\mathcal{T}_\tau f}(\omega) = \hat{f}(\omega) \, \e^{\ii \omega \tau}$ has the same zeros  
and singularities as $\hat{f}(\omega)$. 

The following proposition is obvious.
\begin{proposition} \label{prop:minphaseconv}\mbox{ }\\
\begin{enumerate}[(i)]
\item If the tempered distributions $T$ and $S$ are minimum phase then their convolution is a minimum phase distribution. 
\item If the convolution of the tempered distribution $S$ with an arbitrary minimum phase tempered  
distribution is a minimum phase tempered distribution, then $S$ is a minimum phase tempered
distribution. 
\item If the distribution $T$ is minimum phase, then the shifted distribution $\mathcal{T}_\tau T$ is minimum 
phase distribution. 
\end{enumerate}
\end{proposition}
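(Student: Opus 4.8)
The plan is to verify, for each of the three claims, the three defining attributes of a minimum phase distribution: causality (support in $[0,\infty[$), analyticity of the Fourier transform in $\mathbb{C}^+$, and analyticity of its reciprocal in $\mathbb{C}^+$ (equivalently, non-vanishing of the transform there). The whole argument rests on two elementary facts already implicit in the preamble to the proposition: the Fourier transform carries convolution to pointwise multiplication and a shift $\mathcal{T}_\tau$ to multiplication by the factor $\e^{\ii \omega \tau}$; and products of functions that are analytic (respectively non-vanishing) in $\mathbb{C}^+$ are again analytic (respectively non-vanishing) there.

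For claim (i), I would first note that since $T$ and $S$ are causal their supports are bounded below, so the convolution $T \ast S$ is a well-defined tempered distribution, again with support in $[0,\infty[$, and $\widehat{T \ast S} = \hat{T}\,\hat{S}$. Analyticity of $\hat{T}\,\hat{S}$ in $\mathbb{C}^+$ is immediate from the analyticity of each factor, and $1/(\hat{T}\,\hat{S}) = (1/\hat{T})(1/\hat{S})$ is analytic in $\mathbb{C}^+$ because each reciprocal is, by hypothesis; hence $T \ast S$ is minimum phase. Claim (ii) is then settled by a single test distribution: the Dirac delta $\delta$ is causal with $\hat{\delta} \equiv 1$, so both $\hat{\delta}$ and $1/\hat{\delta}$ are trivially analytic and non-vanishing in $\mathbb{C}^+$, i.e. $\delta$ is minimum phase. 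Applying the hypothesis to this particular minimum phase distribution gives that $S \ast \delta = S$ is minimum phase.

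For claim (iii), taking $\tau \geq 0$ so that causality is preserved, the shift moves the support of $T$ to the right and $\mathcal{T}_\tau T$ remains causal. Its transform is $\widehat{\mathcal{T}_\tau T}(\omega) = \hat{T}(\omega)\,\e^{\ii \omega \tau}$; since $\e^{\ii \omega \tau}$ is entire and nowhere zero, multiplication by it preserves analyticity in $\mathbb{C}^+$, and the reciprocal $(1/\hat{T}(\omega))\,\e^{-\ii \omega \tau}$ is likewise analytic in $\mathbb{C}^+$. This is exactly the observation, recorded just before the proposition, that $\widehat{\mathcal{T}_\tau f}$ has the same zeros and singularities as $\hat f$. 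Hence $\mathcal{T}_\tau T$ is minimum phase.

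The only point requiring genuine care — and the reason the proposition is not completely automatic — is the well-definedness of the convolution in (i) together with the factorization $\widehat{T \ast S} = \hat{T}\,\hat{S}$ for distributions rather than for functions; both are guaranteed here by the left-boundedness of the supports, which is precisely the causality built into the definition of minimum phase. Everything else reduces to the stability of the property ``analytic and non-vanishing in $\mathbb{C}^+$'' under products and under multiplication by the factor $\e^{\ii\omega\tau}$, which is why the author can reasonably call the statement obvious.
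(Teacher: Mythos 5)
Your proof is correct and follows exactly the route the paper has in mind: the paper in fact offers no written proof at all (it declares the proposition obvious, having just recorded that the Fourier transform turns convolution into pointwise multiplication and a shift into multiplication by $\e^{\ii \omega \tau}$), and your argument simply makes those observations explicit, including the well-definedness of the convolution of causal tempered distributions. The two details you supply beyond the paper's remarks --- taking $\delta$ as the test distribution in (ii), and restricting to $\tau \geq 0$ in (iii) so that causality is preserved --- are both correct, and the latter repairs a small imprecision in the statement itself.
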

The convolution $T\ast S$ can be viewed as an application of a linear operator $T$ 
to a signal $S$. A minimum phase operator is defined as a linear operator that preserves the class of 
minimum phase tempered distributions. 
It follows from Proposition~\ref{prop:minphaseconv} that a minimum phase tempered distribution $M$ 
can be viewed either as a minimum phase signal or a minimum phase linear convolution operator (filter). 

\begin{theorem} \label{thm:oup}
A tempered distribution $T$ is minimum phase if and only if $\ln(\hat{T})$ is analytic in the upper half 
plane $\mathbb{C}^+$.
\end{theorem}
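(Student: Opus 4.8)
The plan is to strip the definition of minimum phase down to a single condition on $\hat{T}$ and then recognise that condition as analyticity of the logarithm. By definition $T$ is minimum phase precisely when both $\hat{T}$ and $1/\hat{T}$ are analytic in $\mathbb{C}^+$. Since a reciprocal is analytic exactly where the function itself is analytic and non-vanishing---if $\hat{T}(z)=0$ for some $z\in\mathbb{C}^+$ then $1/\hat{T}$ could not be analytic there---this pair of conditions is equivalent to the single statement that $\hat{T}$ is analytic and zero-free in $\mathbb{C}^+$ (here ``analytic in $\mathbb{C}^+$'' means $\hat{T}$ agrees with an analytic function on the open upper half-plane). The theorem therefore reduces to the classical fact that a function is analytic and zero-free on a simply connected domain if and only if it admits an analytic logarithm there.

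I would dispatch the ``if'' direction first, as it is purely formal. Assuming $\ln \hat{T}$ is analytic in $\mathbb{C}^+$, I write $\hat{T} = \exp(\ln \hat{T})$ and $1/\hat{T} = \exp(-\ln \hat{T})$. Both are compositions of an analytic function with the entire exponential, hence analytic in $\mathbb{C}^+$; moreover $\exp$ never vanishes, so $1/\hat{T}$ is genuinely well defined and analytic. Both clauses of the definition hold, so $T$ is minimum phase.

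For the ``only if'' direction I start from $T$ minimum phase, so $\hat{T}$ is analytic and, as noted, zero-free in $\mathbb{C}^+$. The logarithmic derivative $\hat{T}'/\hat{T}$ is then analytic on $\mathbb{C}^+$, and I construct the logarithm by integration: fixing a base point $z_0 \in \mathbb{C}^+$ and a value $w_0$ with $\e^{w_0} = \hat{T}(z_0)$, I set $L(z) := w_0 + \int_{z_0}^z \hat{T}'(\zeta)/\hat{T}(\zeta)\,\dd\zeta$. Differentiating $\e^{L}\hat{T}^{-1}$ gives $\e^{L}\hat{T}^{-1}\,(L' - \hat{T}'/\hat{T}) = 0$, and the value at $z_0$ is $1$, so $\e^{L} = \hat{T}$ throughout the connected set $\mathbb{C}^+$; thus $L$ is the required analytic branch of $\ln \hat{T}$.

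The only point needing care, and the crux of the ``only if'' direction, is that $L$ be single-valued, i.e.\ that the integral defining it be independent of the path chosen in $\mathbb{C}^+$. This is exactly where the geometry of the domain enters: since $\mathbb{C}^+$ is simply connected and $\hat{T}'/\hat{T}$ is analytic on it, Cauchy's theorem guarantees path-independence, so $L$ is a well-defined single-valued analytic function. Everything else is formal manipulation with the exponential, so I expect this simple-connectivity step to be the one substantive ingredient of the argument.
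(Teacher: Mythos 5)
Your proof is correct and follows essentially the same route as the paper: both arguments reduce the minimum phase property to $\hat{T}$ being analytic and zero-free in $\mathbb{C}^+$, and then identify that condition with the existence of an analytic branch of $\ln(\hat{T})$. The only difference is that the paper invokes this last equivalence as a known fact, whereas you prove it explicitly via the logarithmic-derivative integral and the simple connectivity of $\mathbb{C}^+$, which is indeed the one substantive ingredient the paper leaves implicit.
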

\begin{proof}
$\ln(\hat{T})$ is analytic in $\mathbb{C}^+$ if and only if $\hat{T}$ is analytic in $\mathbb{C}^+$
 and does not vanish there. On the other hand, if $\hat{T}$ is analytic in $\mathbb{C}^+$, then
$1/\hat{T}$ is analytic in $\mathbb{C}^+$ if and only if $\hat{T}$ does not vanish there. 
The thesis follows by comparison of the last two statements.
\end{proof} 

Let the distribution or function $f$ be causal. Its Fourier transform $\hat{f}$ is an analytic function in 
$\mathbb{C}^+$. Assume that $\Re \hat{f}(\omega) \rightarrow \infty$ for $\vert \omega \vert \rightarrow 
\infty$ in  $\omega \in \mathbb{C}^+$ is an increasing
function of $\vert \omega \vert$ in $\mathbb{C}^+$, with $\Re \hat{f}(\omega) = 
\OO\left[\vert \omega \vert^\gamma\right]$
for some $\gamma > 0$.
The functions $\e^{\pm \hat{f}(\omega)}$ are analytic in $\mathbb{C}^+$ and 
the function $\e^{-\hat{f}(\omega)}$ tends to zero as $\vert \omega \vert \rightarrow \infty$ in 
$\mathbb{C}_+$. The inverse Fourier transform
$$g(t) = \frac{1}{2 \upi} \int_{-\infty}^\infty \e^{-\ii \omega t} \, \e^{-\hat{f}(\omega)} \, \dd \omega$$
is absolutely convergent and hence a continuous function of $t$. 
By Jordan's lemma the function $g(t)$ is causal. It is also a minimum phase distribution.

\begin{theorem}
Let $T$ be a minimum phase tempered distribution. Then $\ln(\hat{T})$ satisfies the K-K dispersion
relations with two subtractions.
\end{theorem}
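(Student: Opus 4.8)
The plan is to read everything off from the analyticity of $\Phi:=\ln(\hat T)$ in $\mathbb{C}^+$ supplied by Theorem~\ref{thm:oup}, and then to repeat the subtracted Cauchy-integral argument that produced eq.~\eqref{eq:KK}, but carried out with two subtraction points rather than one. First I would invoke Theorem~\ref{thm:oup}: since $T$ is minimum phase, $\hat T$ is analytic and non-vanishing in $\mathbb{C}^+$, so $\Phi=\ln\hat T$ is single-valued and analytic there. Being analytic in the upper half-plane, $\Phi$ is the boundary value of a causal distribution in the sense of \cite{BeltramiWohlers}, so that its real and imaginary parts on the real axis form a Hilbert pair modulo subtractions.

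Next I would control the growth of $\Phi$ on $\overline{\mathbb{C}^+}$, since the admissible number of subtractions is governed by it. Because $T$ is tempered and $1/\hat T$ is itself analytic in $\mathbb{C}^+$, both $\hat T$ and $\hat T^{-1}$ are polynomially bounded; hence the amplitude part $\Re\Phi=\ln|\hat T|$ grows only logarithmically, and the sole source of genuine growth is the phase $\Im\Phi=\arg\hat T$. The shift rule $\widehat{\mathcal{T}_\tau f}(\omega)=\hat f(\omega)\,\e^{\ii\omega\tau}$ underlying Proposition~\ref{prop:minphaseconv} shows that the phase carries a term linear in $\omega$; thus the phase grows at most sub-quadratically (indeed linearly in the propagator models of interest), and in particular $\Phi(\omega)=\oo_\infty[\omega^2]$, uniformly in $\arg\omega\in[0,\upi]$. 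This is precisely the order that demands two subtractions: the first removes the bounded part and the second the linear part of $\Phi$.

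With this estimate I would apply the Cauchy theorem to $\Phi(z)/[(z-\omega)(z-\omega_0)(z-\omega_1)]$ over the real axis, indented by small semicircles around the three real poles $\omega,\omega_0,\omega_1$ and closed by a large semicircle in $\mathbb{C}^+$. Because the integrand decays like $\Phi(z)/z^3=\oo_\infty[z^{-1}]$, the large-arc contribution vanishes; collecting the residues at the indentations and invoking the Sochocki-Plemelj formula $(z\pm\ii0+)^{-1}=\mathrm{vp}\,z^{-1}\mp\ii\upi\delta(z)$ recovers $\Phi(\omega)$ in terms of $\Phi(\omega_0)$, $\Phi(\omega_1)$ and a principal-value integral of the boundary values along the real axis. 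Separating real and imaginary parts then yields the doubly subtracted relation connecting $\ln|\hat T|$ and $\arg\hat T$: it has the same shape as eq.~\eqref{eq:KK}, with the prefactor $(\omega-\omega_0)$ replaced by $(\omega-\omega_0)(\omega-\omega_1)$ and an extra linear factor in the denominator of the kernel.

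The hard part will be the two analytic points that make the contour manipulation legitimate. First I must show that the boundary values of $\Phi$ along the real axis are locally integrable enough for the principal-value integrals to converge; here I would appeal, as for $\omega^{-1}\mathcal{A}(\omega)$ earlier, to the membership $\ln|\hat T|\in\mathcal{D}_L^{\prime(1)}$ in the sense of \cite{Nussenzveig}. Second, and more delicate, I must upgrade the growth bound $\Phi(\omega)=\oo_\infty[\omega^2]$ from the real axis to the whole closed half-plane $\overline{\mathbb{C}^+}$, since it is this uniform estimate that actually kills the semicircular arc; a Phragm\'en-Lindel\"of argument applied to $\Phi$ divided by the subtraction polynomial should supply it. Once both points are secured, the residue bookkeeping and the passage to real and imaginary parts are routine.
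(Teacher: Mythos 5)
The contour-integral half of your argument is a legitimate alternative to the paper's route, but the estimate that powers it is obtained by a false inference, and this is a genuine gap. You claim that because $T$ is tempered and $1/\hat{T}$ is analytic in $\mathbb{C}^+$, both $\hat{T}$ and $\hat{T}^{-1}$ are polynomially bounded, so that $\Re\Phi=\ln\vert\hat{T}\vert$ grows only logarithmically. Analyticity of $1/\hat{T}$ carries no growth information at all --- the paper's definition of minimum phase imposes none --- and the claim fails inside the very class the theorem covers. The delayed impulse $T=\delta(t-\tau)$ is minimum phase (Proposition~\ref{prop:minphaseconv}(iii)), yet $\hat{T}^{-1}(\omega)=\e^{-\ii\omega\tau}$ grows exponentially as $\Im\omega\to+\infty$. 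Worse for your real-axis estimate, the causal, integrable (hence tempered) one-sided stable signal with $\hat{T}(\omega)=\exp\left(-(-\ii\omega)^{1/2}\right)$ is minimum phase --- both $\pm(-\ii\omega)^{1/2}$ are analytic in $\mathbb{C}^+$ and an exponential never vanishes --- but $\ln\vert\hat{T}(\omega)\vert=-\vert\omega\vert^{1/2}/\sqrt{2}$ on the real axis: not logarithmic, and $\hat{T}^{-1}$ is not polynomially bounded even on $\mathbb{R}$. Your bound on the phase is likewise only asserted: the shift rule accounts for the linear part of $\arg\hat{T}$, but it does not bound the phase of a general minimum phase tempered distribution.

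The missing idea is that the growth control must come from causality, and that is exactly what the paper's proof extracts from the Paley-Wiener Theorem (Theorem~\ref{thm:Paley-Wiener}): since $T$ is causal, $\int_{-\infty}^{\infty}\vert\ln\vert\hat{T}(\omega)\vert\vert/(1+\omega^{2})\,\dd\omega<\infty$, which the paper reads as $\ln\hat{T}(\omega)=\oo_\infty[\omega]$; that estimate is what fixes the number of subtractions at two. You never invoke this theorem, and without it the Phragm\'en-Lindel\"of upgrade you defer to has no boundary estimate to start from. Note also that, once the bound is available, the paper does not perform a subtracted Cauchy integral at all: it divides, setting $H(\omega):=\ln(\hat{T}(\omega))/\omega^{2}$, shows that $H$ is the Fourier transform of a causal continuous function $f$, and reads off the twice-subtracted K-K relations from the causal second-order distribution $\D^{2}f$. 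Your indented-contour computation (three simple poles, vanishing large arc, Sochocki-Plemelj) would be an acceptable substitute for that final step, but only after the growth input is repaired along Paley-Wiener lines.
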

\begin{proof} (sketchy)
By Theorem~\ref{thm:oup} $\ln(\hat{T})$ is an analytic function in $\mathbb{C}^+$. Since
$T$ is causal, by the Paley-Wiener Theorem (theorem~\ref{thm:Paley-Wiener}) $\ln(\hat{T}(\omega)) =
\oo_\infty[\omega]$. The function $H(\omega) := \ln(\hat{T}(\omega))/\omega^2$ is analytic in $\mathbb{C}^+$ 
and it uniformly tends to 0 as $\omega$ tends to $\infty$. Hence $H$ is the Fourier transform 
of a causal function $h(t)$. Since $H(\omega) = \oo\left[\omega^{-1}\right]$, the inverse Fourier transform of $H$
is absolutely convergent, the function $f \in \mathcal{C}^0$. Hence $\D^2\, f$ is a distribution of second
order and $\ln(\hat{T})$ satisfies the K-K dispersion relations with two subtractions.
\end{proof}

The Fourier transform of the Green's function $u^{(1)}(t,x)$  
is the product of three factors
\begin{equation} \label{eq:yuyu}
\hat{u}^{(1)}(\omega,x) = \frac{1}{2 k(-\ii \omega)} \, \e^{\ii \omega x/c_\infty} \, \e^{-\beta(-\ii \omega) x}
\end{equation}
The inverse Fourier transform of \eqref{eq:yuyu} is a convolution
\begin{equation} \label{eq:shiftedminphase}
u(t,x) := \delta(t-x/c_\infty) \ast_t \, M(t,x) \ast_t s(t) \equiv M(t-x/c_\infty,x) \ast_t S(t)
\end{equation}
where $S(t)$ is the inverse Fourier transform of the first factor and $M(\cdot,x)$ is 
a minimum phase signal for each $x > 0$. 
The two first factors can be viewed as transfer functions or filters applied to the source signal.

Since $\kappa(p)$ is a CBF function, the singularities of the function 
$k(-\ii \omega) \equiv \ii \kappa(-\ii \omega)$ lie on the negative
imaginary axis. By Corollary~\ref{cor:zerosCBF} its zeros also lie on the negative imaginary axis.
Hence the first factor is a minimum phase filter. 

The third factor has two crucial properties 
\begin{enumerate}
\item it is the Fourier transform of a causal tempered distribution $M$;
\item the real and imaginary parts of the phase $-\beta(-\ii \omega)\, x$ of the Fourier transform 
$\hat{M}$ of $M$ satisfy the K-K dispersion relations.
\end{enumerate} 
These two properties identify $M(t)$ as a minimum phase signal. 
Note that by definition a minimum phase signal has an onset at zero time. In seismology and acoustics 
the onset time of the minimum phase signal can be arbitrary. The signal in \eqref{eq:shiftedminphase} 
is shifted and it starts at $t = x/c_\infty$, the travel time to the point $x$.

The second factor in \eqref{eq:yuyu} is an all-pass filter because it has a constant unit amplitude. It is 
a shift operator acting on the third factor. A shifted minimum phase function is again a minimum phase function.

Hence $u^{(1)}(\cdot,x)$ is a minimum phase signal. If the source function $s(t)$ is a minimum phase signal, then
the convolution $s \ast_t u^{(1)}(\cdot,x)$ is a minimum phase signal for every $x$.

\section{Concluding remarks.} 

We have demonstrated the utility of the spectral decomposition of the wave number function for investigating
general properties of the attenuation and dispersion in  a viscoelastic medium with a positive relaxation spectrum.
It has also been shown that spectral decomposition of attenuation provides a very efficient method for numerical 
computation of attenuation in some popular viscoelastic models of polymers and soft matter. 

Theorem~\ref{thm:rebeta} implies that the complex analytic continuation $\mathcal{A}(\omega)$ of the attenuation function, 
defined on the complex plane except on the negative imaginary axis, is a Herglotz function. Weaver and Pao \cite{WeaverPao81} 
derived the K-K dispersion relations from this property. The derivation of the K-K dispersion relations adopted in this paper 
is based on a constitutive assumption. Weaver and Pao's assumption is not entirely based on a physical argument because it 
concerns the wave number function at complex-valued  frequencies. 

\section*{References.}


\appendix

\section{The Paley-Wiener Theorem}
\label{app:Paley-Wiener}

How can one verify whether a candidate function $M(\omega)$ for the complex 
modulus is the Fourier transform of a function vanishing for $t < 0$?

The best available answer to this question is Theorem XII in \citet{PaleyWiener}:
\begin{theorem} \label{thm:Paley-Wiener}
Let the function $M(\omega)$ be square integrable. \\ If 
$$\int_{-\infty}^\infty \frac{\vert \ln \vert M(\omega)\vert \vert}{1 + \omega^2}
 \, \dd \omega < \infty,$$
then there exists a function $G: \mathbb{R} \rightarrow \mathbb{C}$ with support  
in the closed positive real semi-axis $\overline{\mathbb{R}_+}$ 
such that $M$ is the Fourier transform of $G$.
The converse of this implication is also true.
\end{theorem}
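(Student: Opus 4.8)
The plan is to recast the causality requirement as membership in a Hardy space and then to invoke the factorization theory of such spaces. The bridge is the $L^2$ version of the Paley--Wiener theorem: a square-integrable $M$ is the Fourier transform of a function $G$ supported in $\overline{\mathbb{R}_+}$ if and only if $M$ is the nontangential boundary value of a function $\Phi$ analytic in the upper half-plane $\mathbb{C}^+$ with $\sup_{y>0}\int_{-\infty}^\infty |\Phi(\omega+\ii y)|^2\,\dd\omega<\infty$, i.e. $\Phi\in H^2(\mathbb{C}^+)$; this is the elementary observation that, for causal $G$, $\hat{G}(\omega)=\int_0^\infty G(t)\,\e^{\ii\omega t}\,\dd t$ extends analytically to $\Im\omega>0$ because $\e^{\ii\omega t}$ decays there. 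Granting this equivalence, both halves of the theorem reduce to one analytic fact: the admissible boundary moduli of $H^2(\mathbb{C}^+)$ functions are exactly the nonnegative $L^2$ functions $A$ with $\int \frac{|\ln A(\omega)|}{1+\omega^2}\,\dd\omega<\infty$. I stress at the outset that, properly read, the statement concerns the modulus $|M|$; the phase is pinned down afterwards by selecting the minimum-phase (outer) branch, which is precisely the use made of the theorem in Section~\ref{sec:MPH}.

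\emph{Necessity.} Suppose $M=\hat{G}$ with $G$ causal and $G\not\equiv 0$, so that $M$ is the boundary value of a nonzero $\Phi\in H^2(\mathbb{C}^+)$. Since $\ln|\Phi|$ is subharmonic and $\Phi$ lies in the Nevanlinna class, it admits a harmonic majorant, and the Poisson representation against the half-plane kernel $P_z(\omega)=\frac{1}{\upi}\frac{y}{(\omega-x)^2+y^2}$ gives, at the interior point $z=\ii$, the lower bound $\ln|\Phi(\ii)| \le \frac{1}{\upi}\int \frac{\ln|M(\omega)|}{1+\omega^2}\,\dd\omega$. Because $\int\frac{\ln^+|M|}{1+\omega^2}\le\int\frac{|M|}{1+\omega^2}<\infty$ (using $\ln^+x\le x$ and $M\in L^2$), this forces $\int\frac{\ln^-|M|}{1+\omega^2}\le \int\frac{\ln^+|M|}{1+\omega^2}-\upi\ln|\Phi(\ii)|<\infty$; the hypothesis $\Phi\not\equiv0$ is exactly what keeps $\ln|\Phi(\ii)|$ finite. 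Adding the two parts yields $\int\frac{|\ln|M||}{1+\omega^2}\,\dd\omega<\infty$.

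\emph{Sufficiency.} Given $M\in L^2$ satisfying the log condition, define the outer function by the Poisson--Schwarz integral
\[
\Phi(z):=\exp\!\left(\frac{1}{\upi\,\ii}\int_{-\infty}^\infty\left(\frac{1}{\omega-z}-\frac{\omega}{1+\omega^{2}}\right)\ln|M(\omega)|\,\dd\omega\right),\qquad z\in\mathbb{C}^{+}.
\]
The convergence factor $\omega/(1+\omega^2)$ and the log condition together make the integral absolutely convergent, so $\Phi$ is analytic in $\mathbb{C}^+$; a direct computation shows that $\Re\bigl[\frac{1}{\upi\,\ii}\frac{1}{\omega-z}\bigr]=P_z(\omega)$, so the real part of the exponent is the Poisson extension of $\ln|M|$ and hence $\Phi$ has nontangential boundary modulus $|\Phi(\omega)|=|M(\omega)|$ a.e. One then verifies $\Phi\in H^2$ from $|M|\in L^2$, bounding $|\Phi(\cdot+\ii y)|$ by the Poisson average of $|M|$ and applying Jensen's inequality. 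By the $L^2$ equivalence of the first paragraph, the boundary function of $\Phi$ is the Fourier transform of a causal $L^2$ function $G$ with $|\hat{G}|=|M|$, which is the asserted existence statement.

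The hard part is the necessity estimate on $\ln^-|M|$: subharmonicity by itself gives only a sub-mean-value inequality, and upgrading it to the clean Poisson bound above requires that $\ln|\Phi|$ genuinely possess a harmonic majorant, i.e. that $\Phi$ belong to the Nevanlinna class and factor into inner and outer parts. Controlling the inner factor (a Blaschke product together with a singular measure), so that it can only improve the lower bound rather than destroy it, is the technical heart of the argument. By comparison the sufficiency direction is routine once the outer function is written down; the only point demanding care there is confirming that $\Phi$ actually lies in $H^2(\mathbb{C}^+)$ and not merely in the Nevanlinna class, which is where the hypothesis $M\in L^2$ (as opposed to just the log-integrability of $|M|$) is consumed.
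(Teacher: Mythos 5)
The paper offers no proof of this statement for you to be compared against: it is quoted as Theorem XII of \cite{PaleyWiener} and used as a black box (e.g.\ in Sec.~\ref{sec:MPH}), so your proposal must be judged on its own merits. On those merits it is essentially correct, and it is the standard Hardy-space argument: causality of an $L^2$ function is equivalent to its Fourier transform being the boundary value of a function in $H^2(\mathbb{C}^+)$; necessity of the logarithmic condition follows from the inequality $\ln\vert\Phi(z)\vert \le \int P_z(\omega)\,\ln\vert M(\omega)\vert\,\dd\omega$, where $P_z$ is the Poisson kernel of $\mathbb{C}^+$ at $z$ (valid for $H^2$ functions via inner--outer factorization, since inner factors have modulus at most one); and sufficiency follows from writing down the outer function with boundary modulus $\vert M\vert$, which lies in $H^2$ by Jensen's inequality, $\vert\Phi(z)\vert^2 \le \int P_z(\omega)\,\vert M(\omega)\vert^2\,\dd\omega$, together with Fubini. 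Your opening caveat is moreover a genuine and necessary correction, not a gloss: as literally worded the theorem is false. For instance $M(\omega) = 1/(1+\omega^2)$ is square integrable and satisfies the logarithmic condition, yet its inverse Fourier transform is $\tfrac{1}{2}\e^{-\vert t\vert}$, so by $L^2$ injectivity of the Fourier transform no causal $G$ has $\hat{G} = M$; what Paley--Wiener's Theorem XII asserts, and what your sufficiency argument actually delivers, is a causal $G$ with $\vert\hat{G}\vert = \vert M\vert$, and this modulus form is exactly what the minimum-phase construction of Sec.~\ref{sec:MPH} needs. Two small repairs to record: in the necessity step $\Phi(\ii)$ may vanish, so the evaluation point should be any $z_0$ with $\Phi(z_0) \neq 0$ (such a point exists because the zeros of a nonzero analytic function are isolated, and $P_{z_0}(\omega)$ is comparable to $(1+\omega^2)^{-1}$ with constants depending only on $z_0$); and the converse direction tacitly assumes $M \not\equiv 0$, since otherwise $\ln\vert M\vert \equiv -\infty$ and the integral cannot converge.
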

Note that the hypotheses of the theorem involve conditions on the absolute value of $M(\omega)$ 
only. 

\section{Regular variation. Valiron's and Karamata's theorems.}
\label{app:Valiron}

\begin{definition}
A real function $f$ is said to be 
regularly varying at $a$, where $a = 0$ or $\infty$, if
$\lim_{x\rightarrow a} f(\lambda x)/f(x)$ is finite. 

$f$ is said to be rapidly varying 
with index $\infty$ if 
$$\lim_{x\rightarrow a} f(\lambda x)/f(x) = \begin{cases} \infty, & \lambda > 1 \\
0, & \lambda < 1 \end{cases}$$

A real function $f$ on $\mathbb{R}_+$ is said to be rapidly varying at $a$, where $a = 0$ or $\infty$,
with index $-\infty$ if 
$$\lim_{x\rightarrow a} f(\lambda x)/f(x) = \begin{cases} 0, & \lambda > 1 \\
\infty, & \lambda < 1 \end{cases}$$
\end{definition}

\begin{theorem}
If $f$ is regularly varying at $a$, where $a = 0$ or $\infty$, then
$\lim_{x\rightarrow a} f(\lambda x)/f(x) = \lambda^\alpha$,
where $\alpha$ is a real number.   
\end{theorem}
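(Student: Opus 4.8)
The plan is to reduce the statement to the multiplicative Cauchy functional equation and then invoke the classical fact that its regular (measurable or monotone) solutions are powers. First I would set $g(\lambda) := \lim_{x\rightarrow a} f(\lambda x)/f(x)$, which exists and is finite for every $\lambda > 0$ by hypothesis, and note that $g(1) = 1$. The crucial algebraic step is to factor the ratio as
\[
\frac{f(\lambda \mu x)}{f(x)} = \frac{f(\lambda \mu x)}{f(\mu x)}\,\frac{f(\mu x)}{f(x)}.
\]
Since $a \in \{0,\infty\}$ and $\mu > 0$, the argument $\mu x$ still tends to $a$ as $x \rightarrow a$, so letting $x \rightarrow a$ yields the functional equation $g(\lambda \mu) = g(\lambda)\, g(\mu)$ for all $\lambda, \mu > 0$.

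Next I would record that $g$ is strictly positive: from $g(\lambda)\, g(1/\lambda) = g(1) = 1$ we get $g(\lambda) \neq 0$, while $g(\lambda) = g(\sqrt{\lambda})^2 \geq 0$ forces $g(\lambda) > 0$. Setting $h(u) := \ln g(\e^u)$ therefore transforms the multiplicative relation into the additive Cauchy equation $h(u + v) = h(u) + h(v)$ on $\mathbb{R}$. The whole problem is thus reduced to showing that $h$ is linear, $h(u) = \alpha u$, which gives $g(\lambda) = \lambda^\alpha$ with $\alpha \in \mathbb{R}$.

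The remaining task is the regularity input needed to pin down $h$. For any sequence $x_n \rightarrow a$ each map $\lambda \mapsto f(\lambda x_n)/f(x_n)$ is measurable, and $g$ is their pointwise limit, hence measurable; so is $h$. The classical theorem that a Lebesgue-measurable additive function is linear then forces $h(u) = \alpha u$ and completes the argument. In the applications relevant to this paper the functions $f$ (such as $\nu([0,r])$) are monotone, in which case $g$ inherits monotonicity and the same conclusion follows even more directly, since a monotone additive function is automatically linear.

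I expect the main obstacle to be precisely this last step: the additive Cauchy equation admits wildly discontinuous, non-measurable solutions (constructed via a Hamel basis, with graphs dense in the plane), so the power-law conclusion genuinely requires a regularity hypothesis on $f$. Identifying and justifying that input — measurability inherited from $f$, or monotonicity in the intended applications — is the substantive point; the derivation of the functional equation itself is routine.
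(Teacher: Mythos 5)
Your proof is correct, and it is worth saying up front that the paper contains no proof to compare it against: the statement sits in the appendix on regular variation as a recollection of a classical fact, implicitly referred to the cited books of Seneta and Feller. What you give is the standard Karamata argument, and every step is sound: the factorization $f(\lambda\mu x)/f(x) = \bigl[f(\lambda\mu x)/f(\mu x)\bigr]\,\bigl[f(\mu x)/f(x)\bigr]$, together with the fact that $\mu x \to a$ whenever $x \to a$ (this is where $a \in \{0,\infty\}$ is used), yields $g(\lambda\mu) = g(\lambda)\,g(\mu)$; the identities $g(\lambda)\,g(1/\lambda) = 1$ and $g(\lambda) = g(\sqrt{\lambda})^2$ force $g > 0$; and $h(u) := \ln g(\e^u)$ then satisfies the additive Cauchy equation. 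Your closing observation is the substantive one, and you are right about it: with the paper's bare definition (``a real function such that $\lim_{x\to a} f(\lambda x)/f(x)$ is finite''), the theorem is literally false. Any nonlinear additive $h_0$ built from a Hamel basis gives $f(x) := \e^{h_0(\ln x)}$, for which $f(\lambda x)/f(x) = \e^{h_0(\ln \lambda)}$ is independent of $x$, so this $f$ satisfies the paper's definition while the limit function is not a power. The extra input you supply --- Lebesgue measurability of $f$, which passes to $g$ as a pointwise limit along any sequence $x_n \to a$ and makes the measurable-additive-implies-linear theorem applicable, or else monotonicity, which is what actually holds for the functions $r \mapsto \nu([0,r])$ to which the paper applies the result --- is exactly the hypothesis present in the classical statements the paper is quoting. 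So your argument closes a gap in the paper's formulation rather than containing one itself.
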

The number $\alpha$ is called the index of the function $f$ at $a$.

\begin{definition}
A real function $f$ on $\mathbb{R}_+$ is said to be slowly varying at $a$ if it
is regularly varying at $a$ with index 0.
\end{definition}

A regularly varying function $f$ with index $\alpha$ at $a$ can be expressed as the
product $x^\alpha \, l(x)$, where $l$ is slowly varying at $a$. 

Examples: $\ln(1 + x)$ is slowly
varying at infinity; $(1 + x^\alpha)$ is slowly varying at 0 if $\alpha \leq 0$ and regularly varying at infinity
with index $\alpha$ if $\alpha > 0$; $\exp(-x)$ is rapidly varying at infinity with index $-\infty$.

\begin{theorem} (Valiron 1911, see \citep{Shea69}) \label{thm:Valiron}
If $F$ is an increasing function satisfying the condition $\lim_{t\rightarrow 0-} F(t) = 0$ and the function $\phi$ is
given by the Stieltjes integral 
$$\phi(r) = \int_{[0,\infty[} \frac{1}{t+r} \, \dd F(t)$$
then the following two statements are equivalent:
\begin{enumerate}[(i)]
\item $F(t) \sim_\infty t^\lambda \,l(t)$;
\item $\phi(r) \sim_\infty \frac{\upi \lambda}{\sin(\upi \lambda)} r^{\lambda-1} \, l(r)$
\end{enumerate}
where $0 \leq \lambda < 1$ and the function $l$ is slowly varying at infinity.
\end{theorem}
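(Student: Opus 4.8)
The plan is to recognize Valiron's theorem as a matched Abelian--Tauberian pair and to prove its two halves by different devices. The implication (i)$\Rightarrow$(ii) is Abelian and yields to a direct asymptotic evaluation of the Stieltjes integral, whereas (ii)$\Rightarrow$(i) is Tauberian, and there the monotonicity of $F$ supplies the indispensable side condition without which regular variation of $\phi$ would fail to determine $F$. The restriction $0 \leq \lambda < 1$ enters through convergence: $\lambda < 1$ guarantees that both the Stieltjes transform and the auxiliary Beta integral below converge, while $\lambda \geq 0$ is forced by the monotonicity of $F$.

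For the Abelian direction I would first integrate by parts. The hypothesis $\lim_{t\rightarrow 0-} F(t) = 0$ annihilates the lower endpoint, and $\lambda < 1$ forces $F(t)/(t+r) \rightarrow 0$ at infinity, so that
$$\phi(r) = \int_{[0,\infty[} \frac{F(t)}{(t+r)^2}\,\dd t.$$
The substitution $t = r s$ recasts this as $r^{-1}\int_0^\infty F(rs)\,(s+1)^{-2}\,\dd s$. Inserting $F(rs) \sim_\infty (rs)^\lambda\, l(rs)$ and exploiting the slow variation of $l$, I would dominate the integrand by means of the Potter bounds (a consequence of the uniform convergence theorem for regularly varying functions) and pass to the limit under the integral sign to obtain
$$\phi(r) \sim_\infty r^{\lambda-1}\, l(r)\int_0^\infty \frac{s^\lambda}{(s+1)^2}\,\dd s.$$
The surviving integral equals the Beta value $\Gamma(\lambda+1)\,\Gamma(1-\lambda)$, and Euler's reflection formula $\Gamma(\lambda)\,\Gamma(1-\lambda) = \upi/\sin(\upi\lambda)$ collapses it to the advertised constant $\upi\lambda/\sin(\upi\lambda)$.

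The Tauberian direction is the hard part. My plan is to split the kernel as $(t+r)^{-1} = \int_0^\infty \e^{-(t+r)u}\,\dd u$ and interchange the order of integration, which exhibits $\phi$ as an iterated Laplace transform: $\phi(r) = \int_0^\infty \e^{-ru}\,\psi(u)\,\dd u$, where $\psi(u) := \int_{[0,\infty[} \e^{-tu}\,\dd F(t)$ is the Laplace--Stieltjes transform of $F$. Since $\phi$ is a plain Laplace transform of the monotone (indeed completely monotonic) function $\psi$, the Karamata Tauberian theorem converts the assumed regular variation of $\phi$ at infinity into $\psi(u) \sim_{0+} \Gamma(1+\lambda)\,u^{-\lambda}\,l(1/u)$; a second application of the same theorem, now with the monotonicity of $F$ itself as the Tauberian condition, upgrades this to $F(t) \sim_\infty t^\lambda\, l(t)$. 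The delicate points are the justification of the interchange of integrations, the faithful propagation of the slowly varying factor through both transforms, and the bookkeeping of the gamma constants, which must recombine through the reflection formula so as to leave exactly the index $\lambda$ with no parasitic factor; for the sharp form of this Tauberian step I would invoke Shea's refinement of Valiron's theorem~\cite{Shea69}.
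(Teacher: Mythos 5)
The paper does not actually prove this theorem: it is stated in Appendix~\ref{app:Valiron} purely as a quoted classical result, with pointers to \cite{Shea69} (and the related Karamata material in \cite{Feller,Seneta}), so there is no internal proof to compare yours against. Your proposal is a correct, essentially self-contained proof along the standard Abelian--Tauberian lines, and it is the natural way to make the citation redundant. The Abelian half checks out: $F(0-)=0$ kills the lower boundary term in the integration by parts, $\lambda<1$ kills the upper one, the substitution $t=rs$ plus Potter bounds legitimizes passage to the limit, and the constant is right, since $\int_0^\infty s^\lambda (1+s)^{-2}\,\dd s = B(1+\lambda,1-\lambda) = \Gamma(1+\lambda)\Gamma(1-\lambda) = \upi\lambda/\sin(\upi\lambda)$ (to be read as $1$ at the endpoint $\lambda=0$, where the near-$s=0$ part of the dominated-convergence argument needs a touch more care). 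The Tauberian half is also correct and the gamma bookkeeping closes exactly: Tonelli justifies writing $\phi$ as the Laplace transform of $\psi = L(\dd F)$; Karamata's Tauberian theorem applied to $\phi$ at infinity gives $\int_0^u \psi(v)\,\dd v \sim_0 \Gamma(1+\lambda)\Gamma(1-\lambda)\, u^{1-\lambda} l(1/u)/\Gamma(2-\lambda)$; the monotone density theorem -- this is the precise place where the monotonicity (indeed complete monotonicity) of $\psi$ that you invoke is consumed -- descends to $\psi(u)\sim_0 \Gamma(1+\lambda)\, u^{-\lambda} l(1/u)$; and a second application of Karamata, with monotonicity of $F$ as the Tauberian condition, returns $F(t)\sim_\infty t^\lambda l(t)$ with the factor $\Gamma(1+\lambda)$ cancelling, as required. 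Two refinements would tighten the write-up: state explicitly that Karamata's theorem controls the integrated function $\int_0^u\psi$ and that passing to $\psi$ itself is a separate monotone-density step, rather than folding both into one appeal to "the Karamata Tauberian theorem"; and drop the closing appeal to Shea's refinement, which is both unnecessary (your two Karamata applications already complete the Tauberian direction) and mildly circular, since \cite{Shea69} is the source of the very theorem being proved.
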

Valiron's Theorem also holds when $\infty$ is replaced by 0.

\begin{theorem} \label{thm:KaramataAbelianInfinity} 
(\citet{Feller}, Chapter XIII; \citet{Seneta}, Theorem~2.3)\\
If $F$ is non-decreasing right-continuous, $l$ is slowly varying at 0, $\gamma \geq 0$,
and the Laplace-Stieltjes transform $\phi(p)$ of $F$ converges for $p > a$, 
where $a$ is a 
positive real number, 
then 
$$F(t) \sim_\infty t^\gamma \, l(1/t)/\Gamma(1+\gamma)$$
implies
$$\phi(p) \sim_0 p^{-\gamma} \, l(p)$$ 
\end{theorem}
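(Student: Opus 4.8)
The plan is to recognise this as the Abelian half of the classical Karamata Tauberian theorem and to prove it by a direct substitution in the Laplace--Stieltjes integral. First I would pass from the Stieltjes integral to an ordinary Laplace integral. With the convention $F(t) = 0$ for $t < 0$ (as in the Valiron setting above), integration by parts has no surviving boundary terms, because $\e^{-pt}$ kills the growth of $F$ at infinity, and one obtains the exact identity
$$\phi(p) = p \int_0^\infty \e^{-pt}\, F(t)\,\dd t .$$
The substitution $u = p\,t$ then produces the scale-invariant representation
$$\phi(p) = \int_0^\infty \e^{-u}\, F(u/p)\,\dd u ,$$
which is the object I would analyse as $p \to 0$.

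Next I would identify the pointwise limit of the normalised integrand. For each fixed $u > 0$ the argument $u/p$ tends to $\infty$ as $p \to 0$, so the hypothesis $F(t) \sim_\infty t^\gamma\, l(1/t)/\Gamma(1+\gamma)$ gives
$$\frac{F(u/p)}{p^{-\gamma}\, l(p)} \sim \frac{u^\gamma}{\Gamma(1+\gamma)}\,\frac{l(p/u)}{l(p)} \longrightarrow \frac{u^\gamma}{\Gamma(1+\gamma)} ,$$
the last step because $l$ is slowly varying at $0$, whence $l(p/u)/l(p) \to 1$. Since $\int_0^\infty \e^{-u}\, u^\gamma\,\dd u = \Gamma(1+\gamma)$, the candidate limit of $\phi(p)/[p^{-\gamma} l(p)]$ is exactly $1$; this is what fixes the normalising constant and explains the factor $\Gamma(1+\gamma)$ in the hypothesis.

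The real work is to justify interchanging the limit $p\to 0$ with the integral, i.e. to dominate $\e^{-u}\, F(u/p)/[p^{-\gamma} l(p)]$ by an integrable function uniformly for small $p$. Here I would use the two hypotheses separately. On $0 < u \le 1$ the monotonicity of $F$ gives $F(u/p) \le F(1/p) \sim p^{-\gamma} l(p)/\Gamma(1+\gamma)$, so the normalised integrand is bounded by a constant there and is integrable against $\e^{-u}$. On $u \ge 1$, where $u/p$ is genuinely large, I would invoke the Potter bounds for the regularly varying function $t \mapsto t^\gamma\, l(1/t)$: given $\varepsilon > 0$, for all sufficiently small $p$ one has $F(u/p)/[p^{-\gamma} l(p)] \le C\, u^{\gamma+\varepsilon}$, and $\e^{-u}\, u^{\gamma+\varepsilon}$ is integrable on $[1,\infty[$. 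Dominated convergence then yields $\phi(p)/[p^{-\gamma} l(p)] \to 1$, which is the assertion $\phi(p) \sim_0 p^{-\gamma}\, l(p)$.

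The step I expect to be the main obstacle is precisely this domination: the asymptotic hypothesis on $F$ is available only for large arguments, so the small-$u$ contribution cannot be controlled by the asymptotic formula and must be absorbed using monotonicity, while the large-$u$ contribution requires the uniform (Potter) version of regular variation rather than the bare pointwise asymptotics. Once both bounds are secured the remainder is routine. Alternatively, one may simply cite the Karamata Tauberian theorem (\citet{Feller}, Chapter~XIII; \citet{Seneta}, Theorem~2.3), of which the present statement is the Abelian implication; the substitution argument above is its self-contained form.
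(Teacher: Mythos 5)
Your proposal is correct, but note that the paper itself does not prove this theorem at all: it is quoted in Appendix~B as a classical result, with the proof delegated entirely to the cited literature (Feller, Chapter~XIII; Seneta, Theorem~2.3). So where the paper offers only a citation, you have reconstructed the standard Abelian-direction argument, and you have done so accurately: the integration by parts $\phi(p) = p\int_0^\infty \e^{-pt}F(t)\,\dd t$ is legitimate because the regular-variation hypothesis makes $F$ polynomially bounded, so the boundary term $\e^{-pT}F(T)$ vanishes and, incidentally, the transform converges for \emph{all} $p>0$ (upgrading the stated hypothesis of convergence for $p>a$, a point your argument handles implicitly and which is worth making explicit). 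The rescaling $u = pt$, the pointwise limit $F(u/p)/[p^{-\gamma}l(p)] \to u^\gamma/\Gamma(1+\gamma)$ using slow variation of $l$ at $0$, and the normalisation $\int_0^\infty \e^{-u}u^\gamma\,\dd u = \Gamma(1+\gamma)$ are all as in the textbook proofs. You also correctly identified the genuine technical content, namely the domination needed for the interchange of limit and integral: monotonicity of $F$ controls the region $u\le 1$ where the asymptotic formula is unavailable, and the Potter bounds (the locally uniform version of regular variation) control $u\ge 1$; both bounds are applied correctly, since for $u\ge 1$ and $p$ small both $u/p$ and $1/p$ lie in the range where $F$ is comparable to its regularly varying asymptote. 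In short, your proof is a sound self-contained substitute for the citation; what the paper's approach buys is brevity and reliance on a standard reference, while yours buys a verifiable argument and makes visible exactly where monotonicity and uniformity of regular variation enter.
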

Note that the function $l(1/t)$ is slowly varying at infinity.

\section{Stieltjes functions. Zeros of a CBF.}
\label{app:zerosCBF}

\begin{definition}
A non-negative real function $f$ on $\mathbb{R}_+$ with the integral representation 
\begin{equation}  \label{eq:StieltjesIntegralRepresentation}
f(x) = a + \frac{b}{x} + \int_{]0,\infty[} \frac{\mu(\dd r)}{x + r}
\end{equation}
where $a, b \geq 0$ and $\mu \in \mathfrak{M}$, is called a {\em Stieltjes function}.
\end{definition}
Eq.~\eqref{eq:StieltjesIntegralRepresentation} can also be expressed in the form
$f(x) = a + \int_{[0,\infty[} \mu(\dd r)/(x + r)$ with $\mu(\{0\}) = b$.

A Stieltjes function $f(x)$ has a complex analytic continuation $f(z)$ in the complex plane cut along the negative real axis,
$\vert \arg(z) \vert < \upi$. The function $f(z)$ has the integral representation 
\eqref{eq:StieltjesIntegralRepresentation} with $x$ replaced by $z$. It follows immediately that $f(z)$ 
maps $\mathbb{C}^+$ to $\mathbb{C}^-$ and vice versa.

\begin{lemma}
A Stieltjes function has a finite value at every point of the complex plane outside 
the closure of the negative real axis.
\end{lemma}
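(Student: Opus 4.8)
The plan is to work directly from the integral representation \eqref{eq:StieltjesIntegralRepresentation}. Fix a point $z \in \mathbb{C}$ with $\vert \arg z\vert < \upi$, i.e. $z$ lies off the closed negative real semi-axis $]-\infty,0]$; in particular $z \neq 0$ and $z \neq -r$ for every $r \geq 0$. The constant term $a$ is finite and, since $z \neq 0$, the term $b/z$ is finite as well. The whole issue is therefore to show that the Stieltjes integral $\int_{]0,\infty[} \mu(\dd r)/(z + r)$ converges absolutely, and for this I would dominate its integrand by a constant multiple of $1/(1+r)$ and invoke the defining property $\mu \in \mathfrak{M}$, namely inequality \eqref{eq:doss}.

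The key step is the pointwise lower bound: there is a constant $c_z > 0$, depending on $z$ but independent of $r$, such that $\vert z + r\vert \geq c_z\,(1 + r)$ for all $r \geq 0$. To obtain it I would study the function $\varphi(r) := \vert z + r\vert/(1 + r)$ on $[0,\infty[\,$. It is continuous and strictly positive, because $z + r$ never vanishes on the half-line as $z \notin\, ]-\infty,0]$, and $\varphi(r) \rightarrow 1$ as $r \rightarrow \infty$. Hence there is an $R$ with $\varphi(r) > 1/2$ for $r > R$, while on the compact interval $[0,R]$ the continuous positive function $\varphi$ attains a positive minimum $m$. Taking $c_z := \min(m,1/2) > 0$ yields the desired uniform bound.

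With this in hand the domination is immediate: $\vert (z+r)^{-1}\vert \leq (1/c_z)\,(1+r)^{-1}$ for all $r > 0$, and since $\mu \in \mathfrak{M}$ the function $r \mapsto 1/(1+r)$ is $\mu$-integrable over $]0,\infty[\,$. Therefore
$$\left\vert \int_{]0,\infty[} \frac{\mu(\dd r)}{z + r}\right\vert \leq \frac{1}{c_z} \int_{]0,\infty[} \frac{\mu(\dd r)}{1 + r} < \infty,$$
and combining the three terms gives $\vert f(z)\vert \leq a + \vert b\vert/\vert z\vert + (1/c_z)\int_{]0,\infty[}\mu(\dd r)/(1+r) < \infty$, which is the assertion.

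I expect the main, indeed the only, obstacle to be the uniform lower bound on $\vert z + r\vert/(1+r)$; everything else is bookkeeping. The point worth stressing is that this bound is exactly where the hypothesis $z \notin\, ]-\infty,0]$ enters: if $z$ were a negative real number the infimum of $\varphi$ would be $0$, attained at $r = -z$, and the argument would correctly break down. One could alternatively make the bound explicit by splitting into the cases $\Re z \geq 0$ and $\Re z < 0$ and estimating $\vert z + r\vert^2 = (\Re z + r)^2 + (\Im z)^2$ from below, but the compactness argument above avoids the case analysis.
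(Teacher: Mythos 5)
Your proof is correct, and it reaches the same endpoint as the paper's — absolute convergence of the integral by dominating $\vert z+r\vert^{-1}$ by a $\mu$-integrable function — but the key bound is obtained by a genuinely different device. The paper argues explicitly in two cases: for $\Re z = x > 0$ it uses $\vert z + r\vert \geq x + r$ directly, and for $\Im z = y > 0$ it picks $\vartheta \in \,]0,1]$ with $q := \vartheta x + (1-\vartheta) y > 0$ and derives $\vert z + r\vert \geq \vartheta\,(r + q/\vartheta)$, then invokes the observation of Remark~\ref{rem:1} that $\int_{]0,\infty[} \mu(\dd r)/(a+r) < \infty$ for every $a > 0$. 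You instead prove the single uniform bound $\vert z + r\vert \geq c_z\,(1+r)$ by a soft argument: the function $\varphi(r) = \vert z+r\vert/(1+r)$ is continuous, strictly positive (precisely because $z \notin\, ]-\infty,0]$), and tends to $1$ at infinity, hence is bounded below by a positive constant; this feeds directly into inequality \eqref{eq:doss} with no case analysis and no appeal to Remark~\ref{rem:1}. What each approach buys: yours is shorter, avoids the split into $\Re z > 0$ versus $\Im z \neq 0$, and makes transparent that the only role of the hypothesis is the nonvanishing of $z + r$ on the half-line; the paper's is fully explicit and quantitative (the constants $\vartheta$ and $q$ are exhibited), which can matter if one later wants locally uniform estimates in $z$ — your $c_z$ exists by compactness but is not constructed, though it could easily be made locally uniform by the same argument applied on compact subsets of the cut plane. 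One stylistic note: since $b \geq 0$ in the representation \eqref{eq:StieltjesIntegralRepresentation}, you can write $b/\vert z\vert$ rather than $\vert b\vert/\vert z\vert$ in your final estimate.
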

\begin{proof}
Let $ z = x + \ii y$.

The inequality $\vert z + t\vert = \sqrt{(r + x)^2 + y^2} \geq r + x$  
implies for $x > 0$ the inequality 
$$\left \vert \int_{[0,\infty[} \frac{\mu(\dd r)}{z + r} \right \vert \leq \int_{[0,\infty[} \frac{\mu(\dd r)}{x + r} < \infty$$
(cf Remark~\ref{rem:1} ).

If $y > 0$ then we can find such a number $\vartheta$ that $0 < \vartheta \leq 1$ and
$q := \vartheta x + (1 - \vartheta) y > 0$. Hence 
$$\sqrt{(r + x)^2 + y^2}  \geq \vartheta (r + x) + (1 - \vartheta)\, y \geq \vartheta\, (r + q/\vartheta)$$
Hence
$$\int_{[0,\infty[} \frac{\mu(\dd r)}{\vert z + r\vert} \leq \frac{1}{\vartheta} \int_{[0,\infty[} \frac{\mu(\dd r)}{r + q/\vartheta} < \infty$$
\end{proof}

\begin{theorem} \label{thm:anCM} \cite{Akhiezer}\\
An analytic function $f$ regular in $\mathbb{C}^+$ and satisfying the inequality 
(i) $\Im f(z) \leq 0$ in $\mathbb{C}^+$ and (ii) $f(x) \geq 0$ for 
$x \in \mathbb{R}_+$ 
can be expressed in the form 
$$f(z) = f_0 + \int_{[0,\infty[} \frac{\mu(\dd y)}{y + z}$$
where $f_0 = \lim_{z\rightarrow \infty} f(z)$ and $\mu \in \mathfrak{M}$.
The limit $\lim_{z\rightarrow\infty} f(z)$ exists and equals $f_0$. The measure
$\mu$ is uniquely defined as the weak limit 
\begin{equation}\label{eq:wl}
\int_{[0,\infty[} \phi(y) \, \mu(\dd y) = -\frac{1}{\upi} \, 
\lim_{\varepsilon\rightarrow 0+} \int_{[0,\infty[}\ \phi(y) \, 
\Im f(-y + \ii \varepsilon)\,\dd y
\end{equation}
 for every continuous function $\phi$ with compact support.
\end{theorem}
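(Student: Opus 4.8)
The plan is to reduce the statement to the classical Herglotz--Nevanlinna representation of Pick functions and then use the two sign conditions to pin down the support of the representing measure, the absence of a linear term, and the refined integrability $\mu \in \mathfrak{M}$. First I would pass to $h := -f$, so that hypothesis (i) becomes $\Im h \geq 0$ on $\mathbb{C}^+$; thus $h$ is a Pick (Nevanlinna) function and admits the representation
$$ h(z) = \alpha + \beta\, z + \int_{\mathbb{R}} \left( \frac{1}{t - z} - \frac{t}{1 + t^2}\right) \sigma(\dd t), \qquad \Im z > 0, $$
with $\alpha \in \mathbb{R}$, $\beta \geq 0$ and a non-negative measure $\sigma$ satisfying $\int_{\mathbb{R}} (1 + t^2)^{-1}\, \sigma(\dd t) < \infty$.

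Next I would localise the support of $\sigma$. Since $f$ is analytic in $\mathbb{C}^+$ and, by (ii), has real boundary values on $\mathbb{R}_+$, the Schwarz reflection principle $f(\overline{z}) = \overline{f(z)}$ continues $f$ analytically across $]0,\infty[$. Hence $\Im h(t + \ii 0) = 0$ for every $t > 0$, and the Stieltjes inversion formula shows that $\sigma$ assigns no mass to $]0,\infty[$; therefore $\mathrm{supp}\,\sigma \subseteq\, ]-\infty,0]$. The substitution $t = -y$ then carries $\sigma$ into a non-negative measure $\mu$ on $[0,\infty[$ and the kernel $1/(t-z)$ into $-1/(y+z)$, so that for $x > 0$
$$ f(x) = -\alpha - \beta\, x + \int_{[0,\infty[} \left( \frac{1}{y + x} - \frac{y}{1 + y^2}\right) \mu(\dd y). $$

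The crux is to show $\beta = 0$ and $\mu \in \mathfrak{M}$, and here I would exploit the \emph{global} positivity (ii). For $x > x_0 > 0$ the subtraction and constant terms cancel in the difference of two convergent integrals, giving $f(x_0) - f(x) = \beta\,(x - x_0) + (x - x_0) \int_{[0,\infty[} \mu(\dd y)/\big[(y+x)(y+x_0)\big]$, an expression in which both summands are non-negative. Since $0 \leq f(x) \leq f(x_0)$, the left-hand side is bounded by $f(x_0)$; letting $x \to \infty$ and using $(x - x_0)/(y + x) \uparrow 1$ together with monotone convergence forces $\beta = 0$ and $\int_{[0,\infty[} \mu(\dd y)/(y + x_0) \leq f(x_0) < \infty$, i.e. $\mu \in \mathfrak{M}$. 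Once $\mu \in \mathfrak{M}$, the term $\int_{[0,\infty[} y\,(1 + y^2)^{-1}\, \mu(\dd y)$ is a finite constant which I absorb into the additive constant to obtain $f(z) = f_0 + \int_{[0,\infty[} \mu(\dd y)/(y + z)$. Dominated convergence (with $|1/(y+z)| \leq C/(1 + y)$ for large $|z|$ bounded away from $]-\infty,0]$, as in Theorem~\ref{thm:betaineq}) gives $\lim_{z\to\infty} f(z) = f_0$, and $f \geq 0$ on $\mathbb{R}_+$ yields $f_0 \geq 0$.

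Finally I would establish uniqueness and formula \eqref{eq:wl}. Because $\Im\,(y + z)^{-1}\big|_{z = -y_0 + \ii \varepsilon} = -\varepsilon/\big[(y - y_0)^2 + \varepsilon^2\big]$ converges weakly to $-\upi\,\delta(y - y_0)$ as $\varepsilon \to 0+$, integrating $\Im f(-y + \ii \varepsilon)$ against a continuous compactly supported $\phi$ and passing to the limit reproduces \eqref{eq:wl} exactly; this simultaneously shows that $\mu$ is determined by $f$. The step I expect to be the main obstacle is the one carried out in the third paragraph: the generic Nevanlinna measure satisfies only $\int (1 + t^2)^{-1}\,\sigma(\dd t) < \infty$, and it is precisely the positivity of $f$ on \emph{all} of $\mathbb{R}_+$ (its behaviour as $x \to +\infty$, not merely near one point) that upgrades this to $\int (1 + y)^{-1}\,\mu(\dd y) < \infty$ and removes the linear term.
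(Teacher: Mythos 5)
The paper gives no proof of this theorem at all: it is quoted from Akhiezer's book on the classical moment problem, so there is no internal argument to compare yours against. Your derivation is essentially the standard one from the literature, and it is correct. Passing to $h=-f$ and invoking the Herglotz--Nevanlinna representation, killing the part of the Nevanlinna measure on $]0,\infty[$ via Schwarz reflection (hypothesis (ii) supplies real boundary values on $\mathbb{R}_+$) and Stieltjes inversion, and then using positivity of $f$ on \emph{all} of $\mathbb{R}_+$ to force $\beta=0$ and $\int_{[0,\infty[}(1+y)^{-1}\mu(\dd y)<\infty$ is precisely how the Stieltjes class is carved out of the Pick class; your telescoping identity
$f(x_0)-f(x)=\beta\,(x-x_0)+(x-x_0)\int_{[0,\infty[}\mu(\dd y)/\bigl[(y+x)(y+x_0)\bigr]$
combined with monotone convergence is a clean execution of the key step, and you are right that this is where the hypothesis ``$f\ge 0$ on all of $\mathbb{R}_+$'' does its real work.

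Two caveats, neither fatal. First, $f_0=\lim_{z\to\infty}f(z)$ holds only as $z\to\infty$ away from the cut $]-\infty,0]$ (along $\mathbb{R}_+$ or in sectors $|\arg z|\le\upi-\delta$); you note this, but the theorem's wording hides it. Second, your claim that Poisson-kernel convergence reproduces \eqref{eq:wl} \emph{exactly} has an endpoint defect: if $\mu$ has an atom at $0$ (e.g. $f(z)=1/z$, $\mu=\delta_0$), the half-line integration in \eqref{eq:wl} captures only $\tfrac{1}{2}\phi(0)\,\mu(\{0\})$, since $\upi^{-1}\int_0^\infty \varepsilon\,(y_0^{\,2}+\varepsilon^2)^{-1}\dd y_0=\tfrac{1}{2}$. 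This is a defect of the statement as printed rather than of your reasoning, and uniqueness survives it: inversion determines $\mu$ on $]0,\infty[$, and the atom at $0$ is then fixed by the representation itself, e.g. $\mu(\{0\})=\lim_{x\to 0+}x\,f(x)$.
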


\begin{theorem} \label{thm:invStCBF}
The inverse $1/f$ of a non-zero CBF $f$ is a Stieltjes function.
\end{theorem}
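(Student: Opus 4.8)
The plan is to reduce the statement to the integral representation of complete Bernstein functions via Theorem~\ref{thm:CBF2}. Since $f$ is a CBF with $f \not\equiv 0$, that theorem guarantees that $g(x) := x/f(x)$ is again a CBF. The inverse $1/f$ is then simply $g(x)/x$, so I would read off the Stieltjes representation of $1/f$ by writing out the integral representation of $g$ and dividing by $x$.

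Concretely, I would apply eq.~\eqref{eq:CBFintegral} to $g$, which yields
\begin{equation*}
g(x) = a\,x + b + x \int_{]0,\infty[} \frac{\mu(\dd r)}{x + r}
\end{equation*}
with $a, b \geq 0$ and $\mu \in \mathfrak{M}$. Dividing through by $x$ gives
\begin{equation*}
\frac{1}{f(x)} = \frac{g(x)}{x} = a + \frac{b}{x} + \int_{]0,\infty[} \frac{\mu(\dd r)}{x + r},
\end{equation*}
which is exactly the defining integral representation \eqref{eq:StieltjesIntegralRepresentation} of a Stieltjes function, with the same non-negative constants $a, b$ and the same measure $\mu \in \mathfrak{M}$. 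This identification is essentially the whole content of the proof once Theorem~\ref{thm:CBF2} is in hand, so there is no substantial obstacle; the argument is a one-step translation between the two integral representations.

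The only points requiring a word of care are that $1/f$ be well defined and that the hypothesis of Theorem~\ref{thm:CBF2} be met. The hypothesis $f \not\equiv 0$ is assumed, and the sign structure of \eqref{eq:CBFintegral} (all summands non-negative for $x > 0$) shows that $f(x) > 0$ for every $x > 0$ unless $f$ vanishes identically, so $1/f$ is defined and positive on $\mathbb{R}_+$; the harmless constant case $f \equiv b > 0$ is covered by taking $a = 1/b$, $b = 0$, $\mu = 0$ in the Stieltjes representation. As an alternative route, avoiding Theorem~\ref{thm:CBF2}, one could verify the hypotheses of Theorem~\ref{thm:anCM} directly for $1/f$: analyticity in $\mathbb{C}^+$ follows because for non-constant $f$ the integral term in \eqref{eq:CBFintegral} forces $\Im f(z) > 0$ throughout the open upper half-plane, whence $f(z) \neq 0$ there, while $\Im(1/f) = -\Im f/\vert f\vert^2 \leq 0$ in $\mathbb{C}^+$ and $1/f \geq 0$ on $\mathbb{R}_+$; Theorem~\ref{thm:anCM} then delivers the representation with $\mu \in \mathfrak{M}$, the atom $\mu(\{0\})$ supplying the $b/x$ term. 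The first route is shorter, and I would adopt it.
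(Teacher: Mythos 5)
Your proof is correct, but it takes a genuinely different route from the paper's. You reduce the statement to Theorem~\ref{thm:CBF2}: since $g(x) := x/f(x)$ is a CBF, you write out its representation \eqref{eq:CBFintegral} and divide by $x$, which gives literally the Stieltjes representation \eqref{eq:StieltjesIntegralRepresentation} of $1/f = g(x)/x$ with the same data $a, b \geq 0$, $\mu \in \mathfrak{M}$; your check that $f > 0$ on $\mathbb{R}_+$ whenever $f \not\equiv 0$ (so that $1/f$ and $x/f(x)$ are well defined) is also the right point to flag. The paper never invokes Theorem~\ref{thm:CBF2} here: it instead verifies the hypotheses of the Akhiezer representation theorem (Theorem~\ref{thm:anCM}) for $1/f$ --- analyticity, $\Im\left(1/f\right) \leq 0$ in $\mathbb{C}^+$, non-negativity on $\mathbb{R}_+$ --- by a detour through $f(1/z)$ and a reflection argument, and then reads the Stieltjes form off that theorem; your sketched ``alternative route'' is essentially this proof. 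What your route buys: it is shorter, purely representation-theoretic, and exhibits the Stieltjes data of $1/f$ explicitly as the CBF data of $x/f(x)$. What the paper's route buys: it keeps the appendix self-contained (Theorem~\ref{thm:anCM} is the only input, with the measure produced by the boundary-value formula \eqref{eq:wl}, the same mechanism the paper uses to compute spectral densities as jumps across the branch cut), whereas your argument inherits its analytic content from the proof of Theorem~\ref{thm:CBF2}, which rests on the same half-plane mapping considerations.
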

\begin{proof}
The integral representation \eqref{eq:CBFintegral} implies that
$$f(1/z) = a_1 + b \frac{1}{z} + \int_{]0,\infty[} \frac{\rho(\dd y)}{1 + z\, y} =
a_1 + b \frac{1}{z} + \int_{]0,\infty[} 
\frac{u \,\rho\left(\dd\, u^{-1}\right)}{u + y}$$
where $a_1 := a + \rho(\{0\}) \geq 0$, $u = 1/y$. Hence the function 
$f(1/z)$ satisfies 
(i) and (ii) of Theorem~\ref{thm:anCM}. In particular, (ii) implies 
that $\Im f(1/z)^{-1} \geq 0$ for 
$z \in \mathbb{C}^+$.
Substitute $z = 1/\zeta$. Clearly, $\sgn \Im \zeta = - \sgn \Im z$ and, 
by reflection,
$\Im f(\zeta)^{-1} \leq 0$ for $\zeta \in \mathbb{C}^+$. Since $f(x)^{-1} > 0$
as well, $f(x)^{-1}$ is a Stieltjes function. 
\end{proof}

\begin{corollary} \label{cor:zerosCBF}
A CBF does not vanish outside the closed negative real half-axis.
\end{corollary}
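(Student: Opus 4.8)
The plan is to read off the corollary immediately from Theorem~\ref{thm:invStCBF} together with the lemma asserting that a Stieltjes function takes finite values at every point off the closed negative real axis. Let $f$ be a non-zero CBF, identified with its analytic continuation to the cut plane $\vert \arg z\vert < \upi$, and fix an arbitrary point $z_0$ with $z_0 \notin\, ]-\infty,0]$. The goal is to show $f(z_0)\neq 0$.

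First I would invoke Theorem~\ref{thm:invStCBF} to conclude that $g := 1/f$ is a Stieltjes function. As such, $g$ possesses the integral representation \eqref{eq:StieltjesIntegralRepresentation} and hence an analytic continuation to the cut plane. Next I would apply the finiteness lemma for Stieltjes functions: it guarantees that $g(z)$ is finite at every point outside $]-\infty,0]$, and in particular that $g(z_0)$ is finite. Finally I would argue by contradiction. Suppose $f(z_0)=0$. Since $f$ is analytic and not identically zero in a neighbourhood of $z_0$, its reciprocal $1/f$ has a pole at $z_0$ and is therefore unbounded as $z \to z_0$; but on the punctured neighbourhood the reciprocal agrees with the Stieltjes continuation $g$, so $g$ would fail to be finite at $z_0$, contradicting the lemma. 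Hence $f(z_0)\neq 0$, and since $z_0$ was an arbitrary point off the cut, $f$ does not vanish outside $]-\infty,0]$.

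There is no deep obstacle here; the corollary is essentially a restatement of the finiteness lemma through the reciprocal. The one point deserving care is the identification of the abstract Stieltjes function $g$ furnished by Theorem~\ref{thm:invStCBF} with the pointwise reciprocal of $f$: one must confirm that the analytic continuation of $g$ genuinely coincides with $1/f(z)$ on the set where $f$ does not vanish, so that a zero of $f$ is forced to produce a singularity of $g$ rather than being reconciled with finiteness by some cancellation. This follows because two analytic functions agreeing on the real axis (where $g(x)=1/f(x)$ by construction) agree throughout the connected cut plane, which closes the argument.
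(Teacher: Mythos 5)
Your proposal is correct and follows essentially the same route as the paper: the paper's proof likewise observes that a zero of the non-zero CBF $f$ at a point $a$ would force the Stieltjes function $1/f$ (Theorem~\ref{thm:invStCBF}) to have a pole at $a$, which the finiteness lemma confines to the closed negative real half-axis. Your additional remark about identifying the Stieltjes continuation with the pointwise reciprocal via the identity theorem is a careful elaboration of a step the paper leaves implicit, not a different argument.
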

\begin{proof}
Let $f$ be a non-zero CBF. If $f(a) = 0$ then the Stieltjes function $1/f$ has a pole 
at $a$. Hence $a \in \mathbb{R}_-$. 
\end{proof}

\section{Proof of Theorem~\ref{thm:locint}.}
\label{app:locint}

\begin{proof}
By the Fubini theorem
$$\int_0^1 \dd t \int_{[0,\infty[} \e^{-r t} \mu(\dd r) = 
\int_{[0,\infty[}  \mu(\dd r) \int_0^1 \e^{-r t} \, \dd t = 
\int_{[0,\infty[}   \frac{1 - \e^{-r}}{r} \, \mu(\dd r)$$

If $f$ if integrable over [0,1] then the last integral is convergent.
Since the integrand is non-negative and the Radon measure $\mu$ is positive,
$$\int_{[1,\infty[}   \frac{1 - \e^{-r}}{r} \, \mu(\dd r)  < \infty $$
But 
$$
\int_{[1,\infty[}   \frac{1 - \e^{-r}}{r} \, \mu(\dd r) \geq
\left(1 - \e^{-1}\right) \int_{[1,\infty[} \frac{\mu(\dd r)}{r} \\
\geq \left(1 - \e^{-1}\right)  \int_{[1,\infty[} \frac{\mu(\dd r)}{1 + r}
$$
hence the last integral is convergent. 
The integral 
$$ \int_{[0,1[} \frac{\mu(\dd r)}{1 + r}$$ 
is convergent, hence the inequality \eqref{eq:doss} is satisfied.

Assume now that \eqref{eq:doss} is satisfied. 
Note that $$\int_{[0,1]} \mu(\dd r) \leq 2 \int_{[0,1]} \frac{\mu(\dd r)}{1 + r} < \infty$$
It follows that the integral 
$$\int_{[0,1]}  \frac{1 - \e^{-r}}{r} \, \mu(\dd r)$$ 
is convergent because the integrand is bounded. It remains to consider the
integral on $[1,\infty[\;$:
$$
\int_{[1,\infty[}  \frac{1 - \e^{-r}}{r} \, \mu(\dd r) \leq 
\int_{[1,\infty[}  \frac{1}{r} \, \mu(\dd r) \leq 
2 \int_{[1,\infty[} \frac{\mu(\dd r)}{1 + r}
$$
hence $f$ is integrable over [0,1].
\end{proof}

\section{Proof of the CM property of the relaxation moduli in Sec.~\ref{sec:examples}.}
\label{app:HNCM}

We prove here that $G^{\mathrm{HN}}_{\alpha,\gamma}$ is CM for $0 \leq \alpha, \gamma \leq 1$. According to Theorem~2.6 
in Chapter~16 of \cite{GripenbergLondenStaffans} it suffices to prove that 
$F(p) := \Im\left[p\, \tilde{G}^{\mathrm{HN}}_{\alpha,\gamma}(p)\right] \geq 0$ for $p \in \mathbb{C}^+$ and
$\tilde{G}^{\mathrm{HN}}_{\alpha,\gamma}(p) \geq 0$ for $p \in \mathbb{R}_+$. The second inequality follows from the 
assumption that $b \leq 1$. The first one follows from the fact that $\alpha, \gamma \leq 1$, $p \in \mathbb{C}^+$ imply that 
$\left(1 + (\tau\, \overline{p})^\alpha\right)^\gamma \in \mathbb{C}^-$ and therefore 
$F(p) = -b \Im \left[(1 + (\tau \overline{p})^\alpha)^\gamma\right]/\left\vert 1 + (\tau \, p)^\alpha\right\vert^{2 \gamma} 
\geq 0$. The remaining hypotheses of Theorem~2.6 are easy to verify. 

The CM property holds for $\alpha \leq 1$ and $\gamma \leq 1$, hence it also applies to $G^{\mathrm{CD}}_{\gamma}$
and $G^{\mathrm{CC}}_{\alpha}$. 

\end{document}